\newcommand{\ifConferenceVersion}{\iffalse}
\newcommand{\ifJournalVersion}{\iftrue}
    \newcommand{\ifConferenceVersion}{\iftrue}
    \newcommand{\ifJournalVersion}{\iffalse}
    \newcommand{\InJournal}[1]{}
    \newcommand{\InConference}[1]{#1}
    \newcommand{\InJournal}[1]{#1}
    \newcommand{\InConference}[1]{}
\newtheorem{theorem}{Theorem}[section]
\newtheorem{lemma}[theorem]{Lemma}
\newtheorem{claim}[theorem]{Claim}
\newtheorem{proposition}[theorem]{Proposition}
\newtheorem{definition}{Definition}[section]
\newcommand{\R}{\mathbf R}
\newcommand{\E}{\mathbf E}
\newcommand{\Var}{\mathrm {Var}}
\newcommand{\poly}{\mathrm {poly}}
\newcommand{\eps}{\varepsilon}
\newcommand{\const}{\mathrm {const}}
\newcommand{\modd}{\,\mathrm{mod}\,}
\newcommand{\cont}{\bar}
\newcommand{\contz}{\tilde}
\newcommand{\lnc}{r^{+}}
\newcommand{\lnf}{r^{-}}
\newcommand{\bigo}{\mathcal{O}}
\newcommand{\Oh}{\bigo}
\newcommand{\Otilde}{\widetilde \bigo}
\newcommand{\Omegatilde}{\widetilde \Omega}
\newcommand{\Thetatilde}{\widetilde \Theta}
\title{
{
On Convergence and Threshold Properties of\\ Discrete Lotka-Volterra 
 Population Protocols}
}
\author{
Jurek Czyzowicz~\footnote{Universit\'{e} du Qu\'{e}bec en Outaouais, Dep. d'Informatique, Gatineau, QC, Canada.}
\and Leszek Gasieniec~\footnote{University of Liverpool, Department of Computer Science, Liverpool, UK.}
\and Adrian Kosowski~\footnote{Inria Paris and LIAFA, Universit\'e Paris Diderot, France.}
\and Evangelos Kranakis~\footnote{Carleton University, School of Computer Science, Ottawa, ON, Canada.}
\and Paul~G.~Spirakis\footnotemark[2]\ \footnote{CTI, Patras, Greece.}
\and Przemys\l{}aw Uzna\'nski~\footnote{Helsinki Institute for Information Technology HIIT, Aalto University, Finland.}
}
\begin{document}
\maketitle

\begin{abstract}
In this work we focus on a natural class of population protocols whose dynamics are modelled by the discrete version of Lotka-Volterra equations. In such protocols, when an agent $a$ of type (species) $i$ interacts with an agent $b$ of type (species) $j$  with $a$ as the initiator, then $b$'s type becomes $i$ with probability $P_{ij}$. In such an interaction, we think of $a$ as the predator, $b$ as the prey, and the type of the prey is either converted to that of the predator or stays as is. Such protocols capture the dynamics of some opinion spreading models and
generalize the well-known Rock-Paper-Scissors discrete dynamics. We consider the pairwise interactions among agents that are scheduled uniformly at random.

We start by considering the convergence time and show that any Lotka-Volterra-type protocol on an $n$-agent population converges to some absorbing state in time polynomial in $n$, w.h.p., when any pair of agents is allowed to interact. By contrast, when the interaction graph is a star, even the Rock-Paper-Scissors protocol requires exponential time to converge. We then study threshold effects exhibited by Lotka-Volterra-type protocols with 3 and more species under interactions between any pair of agents. We start by presenting a simple 4-type protocol in which the probability difference of reaching the two possible absorbing states is strongly amplified by the ratio of the initial populations of the two other types, which are transient, but ``control'' convergence. We then prove that the Rock-Paper-Scissors protocol reaches each of its three possible absorbing states with almost equal probability, starting from any configuration satisfying some sub-linear lower bound on the initial size of each species. That is, Rock-Paper-Scissors is a realization of a ``coin-flip consensus'' in a distributed system. Some of our techniques may be of independent value.
\end{abstract}

\InConference{\small\noindent\textbf{\looseness-1 This paper is submitted to track A. Due to lack of space, the full paper with all proofs is~included in a clearly marked Appendix to be read at the Program Committee's discretion.}}

\section{Introduction}

Population protocols are a recent model of computation that captures the way in which the complex behavior of systems (biological, sensor nets, etc.) emerges from the underlying local interactions of agents. Agents are modeled as anonymous automata with a finite number of states, and interactions (changes of state) occur between randomly chosen pairs of agents under some fixed set of local rules. The interaction follows from the mobility of agents in the population, as in the case of birds flying past each other in a flock in the setting originally described by Angluin et al.~\cite{AADFP06,AAER07}. More generally, we can model agents as nodes of an interaction graph $G$, and assume interactions take place along the edges of this graph.

Population protocols provide a way of describing dynamical effects which may occur in a population. For example, one can imagine that members of a population can be either healthy or infected, and whenever two individuals meet, if one is infected, then the other one also becomes infected. Thus the interesting question becomes: how fast can the infection spread? Quite naturally, population protocols are also used to model opinion spread in populations under interactions. An interaction between a pair of agents, one holding opinion A and the other opinion B, results in a possible change of opinion by one of the interacting agents. Such local interactions result in a change over time of the relative sizes of the options holding opinions A and B. Eventually, the population protocol may lead the system to converge to a state in which one type, A or B, becomes dominant in the population. The probability of convergence to a given dominant type may potentially depend on the initial state of the population in different ways, e.g., exhibiting linear behavior, or transitions at one or more thresholds (cf. Fig.~\ref{fig0}). In fact, the ability of population protocols to converge to an opinion presented by at least some threshold ratio of the population (e.g., a majority) lies at the heart of their study (cf.~\cite{AR09,MCS11}).

In this work, we focus on a natural scenario of interactions modeled by the discrete version of Lotka-Volterra equations, with the goal of better understanding their applicability in the computational framework of opinion spreading and voting protocols. In their original form, the (continuous) Lotka-Volterra differential equation were initially applied in the modeling of periodic chemical reactions and also
in the predator-prey dynamics of fish in the Adriatic Sea~\cite[p.11]{HS98}, and are perhaps best known for their connection to replicator dynamics and to evolving strategies in game theory~\cite{HS98,SMJSRP14}. In discrete Lotka-Volterra-type (LV-type) population protocols, during an interaction, the initiating agent (holding some state A) tries to impose its state on the other agent (holding some other state B) and succeeds with some probability $P_{AB}$. The LV-type protocol on $k$ states is fully characterized by its $k\times k$ probability matrix $P$. LV-type interactions are natural both in the context of predator-prey protocols, in that they correspond to a possible expansion of the predating (initiating) agent into the ecological niche of its prey, and in opinion propagation, in that they do not allow a new derived state C to be created as a result of an interaction. Unlike their continuous variant, discrete LV-type protocols always converge to an absorbing state in which no further state changes occur. We study the time until such convergence happens in general, and look at the probability of achieving different absorbing states depending on the initial distribution of states in the population for specific protocols.

\begin{figure}[t]
\centering
\ifpdf
\includegraphics[width=0.9\textwidth,trim=0cm 6cm 0cm 6.5cm,clip]{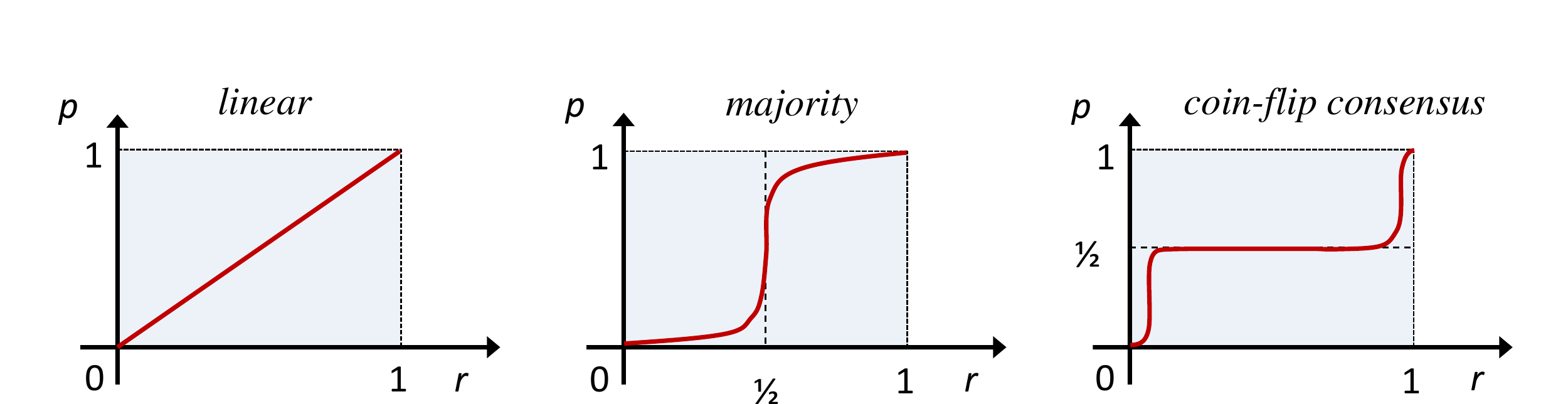}
\fi
\caption{Examples of objective functions for opinion spreading: probability $p$ that a given type becomes dominant in the population as a function of the fraction $r$ of its supporters in the initial population.\vspace{-3mm}}\label{fig0}
\end{figure}

\subsection{Our Results}

We start by proving in Section~\ref{sec:convergence} a general convergence result: any LV-type protocol on a $n$-agent population converges to some absorbing state in time $\Oh(poly(n))$, w.h.p., under the model of uniformly random interactions between agents (i.e., when the interaction graph is the complete graph $G=K_n$). Whereas studies of the behavior of the continuous LV dynamics under artificial stochastic noise of the random walk type have been performed for some particular dynamics in the statistical physics community (cf.~\cite{SMJSRP14} for a survey), this is, to the best of our knowledge, the first mathematically rigorous study of the impact of finite populations on the behavior of the dynamics, and the first one to present results which hold for an \emph{arbitrary} probability matrix $P$ of the protocol. Our proof takes advantage of the structure of the replicator dynamics corresponding to the continuous Lotka-Volterra equations, which admit either stationary orbits in the state space or have a repeller, depending on the parameters of the protocol.

By contrast, we also show in Section~\ref{sec:convergence} that introducing an interaction constraint can severely impact the convergence time for LV-type protocol. We consider a specific LV-type protocol known as \emph{rock-paper-scissors} (RPS), in which each of the three types overcomes exactly one other type in cyclic manner, and show that RPS requires exponential time to converge to an absorbing state when the interaction graph is a star ($G = K_{1,n}$).

Next, we look at the applicability of LV-type protocols in the context of their threshold behavior in voting problems which require a consensus of opinion. For the case of 2 types, the only unbiased LV-type protocol encompasses the so-called ``game of life and death'' between the 2 types, converging to a given absorbing state with probability proportional to its initial representation in the population (regardless of the interaction graph $G$). We show, however, that for 3 and more types, threshold effects become apparent even under uniform interactions ($G = K_n$). We start by proposing in Section~\ref{sec:majority} a simple 4-type majority-type protocol, in which the probability difference of reaching the two absorbing types is amplified with respect to the ratio of the initial populations of the two other states. We close the paper by exhibiting in Section~\ref{sec:rps}, for the before-mentioned RPS protocol a completely different type of threshold effect in the small population region. We prove that RPS reaches each of its three absorbing types with almost equal probability ($1/3 \pm o (1)$), starting from any configuration satisfying some sub-linear lower bound on the sizes of the three types. Our proof proceeds by a Martingale-type analysis, which may be of independent interest, and takes into account the symmetries of the state space of the protocol. We can thus view the RPS protocol as an embodiment of the ``coin-flip consensus'' illustrated in Fig.~\ref{fig0}(c): any opinion with non-negligible representation in the population, even a minority one, has an equal chance of success in the opinion-spreading process. To the best of our knowledge, this is the first population protocol with polynomial-time convergence for which such a property has been identified.


\subsection{Related Work}

\paragraph{Population Protocols and Majority Computation.}

The population protocol model of Angluin et al.\ \cite{AADFP06,AAE08} captures random interactions between finite-state agents, motivated by applications in sensor mobility. Despite the limited computational capabilities of individual sensors, such protocols permit at least (depending on available extensions to the model) the computation of two important classes of functions: threshold predicates, which decide if the weighted average of types appearing in the population exceeds a certain value, and modulo remainders of similar weighted averages. With respect to threshold behavior, a major problem is the design of majority protocols, which consist in obtaining convergence of all agents of the population to a type initially represented by the majority of agents. Such a protocol, converging to a population of a single type, was first proposed in~\cite{AAE08}. Given the complete interaction graph, the type reached by the protocol is the initial majority type, w.h.p., provided that the initial difference between the majority and minority type is $\omega(\sqrt n \log n)$ for a $n$-agent population. This protocol relies on $3$ types, two of which represent the original types present in the population, while the third is a transient type representing a blank opinion. A $4$-state protocol for finding a majority is presented in~\cite{AR09}, based on a different principle of ``leader'' and ``follower'' agents, and achieves similar performance guarantees. By contrast, \cite{MNRS14} presented the first protocol which converges to the initial majority type with probability $1$, even when the difference between types in the original population is constant. This protocol makes use of $4$ states and finds a majority in expected polynomial time, even in the case when interactions are not spread uniformly over the population, but restricted to a connected subgraph of agent pairs. (We remark that none of the mentioned majority protocols belongs to the LV-type considered in this paper, in particular, due to the creation of transient states which do not exist in the initial population.) Other applications and models of population protocols are surveyed in~\cite{AR09,K12,MCS11}.

\paragraph{Spreading of opinion and voting.}

The spread of trust and opinion in a social network was one of the original motivations for the study of population protocols~\cite{DF01}. Problems in which a set of nodes has to converge to a consensus decision chosen from a candidate set of values proposed by the participating nodes, are also of fundamental importance in distributed computing, in tasks such as serialization of database operations or leader election~\cite{HP01}. Models of voting processes, which solve such questions, involve the propagation of opinion through multiple push- or pull-operations between pairs of agents, usually performed in parallel throughout the system. From the perspective of security and simplicity of design, a desirable property of the protocols is that at any time during the execution, the state of the node should describe its current opinion, belonging to the set of opinions initially represented in the population. Under this constraint, given a set of only $2$ initial opinions, it is impossible to obtain convergence to the majority opinion w.h.p.\ of correctness in the standard model of voting~(cf.~e.g.~\cite{CER14}). However, majority voting can be achieved in many graph classes by extensions of the population protocol framework, allowing simultaneous interactions between more than $2$ nodes. Specifically, protocols in which a node polls a constant number $k$ of randomly chosen neighbors in the interaction graph and changes its opinion as the majority opinion in the chosen neighborhood set, have been considered in the literature. The number of required interactions until convergence is achieved is shown to be $O(n \log n)$ for the complete interaction graph (both under sequential and parallel pull actions of nodes~\cite{BCNPST14,CG14}), $O(n \log n)$ for random regular interaction graphs when $k=2$ (under parallel pull actions of nodes~\cite{CER14}) and $O(n \log \log n)$ for Erd\H{o}s-R\'enyi random interaction graphs when $k=5$ (under additional constraints on the initial placement of opinions \cite{AD15}). Our work is also related to the voting model of DeGroot~\cite{DG74}, in which opinions spread through the network due to weighted interactions between agents. However, the DeGroot model associates weights of influence with particular agents, rather than with the opinions they hold, and consequently the obtained equations for propagation of opinion are inherently linear.

\paragraph{Discrete Lotka-Volterra dynamics and cyclic games.}

The continuous Lotka-Volterra dynamics, first defined in~\cite{L1910}, give rise to several discrete variants of so-called predator-prey models of interaction in a population, which differ essentially in the way the population size is maintained after the prey is attacked by the predator. Such models studied in the literature include the discrete May-Leonard model, in which the attacked agent (prey) disappears from the system, leaving behind a special state representing an empty niche, which can be later filled by another species, as well as the LV-type discrete dynamics studied in this work, in which the niche left by the prey is immediately filled by the species of the predator (cf.~\cite{SMJSRP14} for further generalizations of the framework). The LV-type model is particularly worthy of study due to its transient stability in a setting in which several species are in a cyclic predator-prey relation. Cyclic LV-type protocols have been consequently identified as a potential mechanism for describing and maintaining biodiversity, e.g., in bacterial colonies~\cite{KRFB02,KR04}. Cycles of length 3, in which type $1$ attacks type $2$, type $2$ attacks type $3$, and type $3$ attacks type $1$, form the basis of the best-known protocol, called rock-paper-scissors (RPS). The transient properties of RPS and related protocols, describing in particular the time until the system collapses to an absorbing state, have been studied in the statistical physics literature using a variety of experimental and analytical techniques\InJournal{ (mostly based on approximation with physical equations)}, under various scheduler models\InJournal{: the standard model of sequentially occurring random interactions, models with $\Theta(n)$ parallel encounters, models in which the discrete process is described by adding stochastic noise to a continuous-time process, etc}. The original analytical estimation method applied to RPS was based on approximation with the Fokker-Planck equation~\cite{RMF06}. A subsequent analysis of cyclic $3$- and $4$-species models using Khasminskii stochastic averaging can be found in~\cite{DF12}, together with a general discussion of the classification of dynamics according to convergence speed. A mean field approximation-based analysis of RPS was performed in~\cite{PK09}. All of these results provide a qualitative understanding of cyclic protocols, and at a quantitative level, provide evidence that the RPS protocol reaches an absorbing state after roughly $O(n^2)$ interactions. Finally, we remark that in a computer science setting,~\cite{CS08} made the connection between the Lotka-Volterra equations and (computational) population protocols, defining the class of so-called Linear Viral Protocols, in which the capacity of agents to participate in interactions changes over time.

\subsection{Model and Preliminaries}

We consider population protocols in the following setting. The population $V$ with $k$ types (species) is a set of $n$ agents, with each agent $v\in V$ assigned a state variable $s$, whose value at time $t$ is denoted $s_t(v) \in \{1,\ldots,k\}\equiv[k]$, describing its current type. The elements of $V$ are connected into an (undirected, connected) \emph{interaction graph} $G = (V,E)$. Agents assigned to type $i$ at time $t$, $1\leq i \leq k$, are called the \emph{population} of type $i$ at time $t$.

The population protocol $P$ is a probability distribution over $[k]^2$, taking values in $[k]^2$. In an execution of protocol $P$, at each time step $t=1,2,3,\ldots$, a scheduler daemon picks a pair of interacting agents $u, v\in V$ such that $(u,v) \in E$ u.a.r., and updates the state variables of these agents, sampling the pair $(s_{t+1}(u), s_{t+1}(v))$ according to the distribution $P(s_{t}(u), s_{t}(v))$. We will say that the population protocol is of the \emph{Lotka-Volterra type} (LV-type for short) if the state of the initiating agent (the predator) never changes during an interaction, and the state of the other agent (the prey) either remains unchanged or changes to that of the initiator, i.e., for any transition which occurs with non-zero probability, we have $s_{t+1}(u) = s_{t}(u)$ and $s_{t+1}(v) \in \{s_t(u),s_t(v)\}$.

For $i\in [k]$ and a fixed execution of protocol $P$, we will denote the size of the $i$-th population as $n_i(t) = |\{v \in V : s_t(v) = i\}|$, and its relative size as $x_i(t) = n_i(t)/n$. The set of states of all $n$ agents at time $t$ is referred to as the \emph{state} or \emph{configuration} of the system. When the interaction graph is the complete graph $K_n$, then we identify the state of the system with the vector $x(t)$. For $G=K_n$, the protocol $P$ defines a Markov chain on the set $X$ of possible states $x(t)$. We note that in this case, the size of the state space can be trivially bounded as $\Oh(n^k)$, i.e., is polynomial in $n$ for any fixed protocol.


In the sequel, any LV-type protocol $P$ will be identified with its $k\times k$ probability matrix $P$, such that for an interaction $(u,v)$, we have $s_{t+1}(v) = s_t(u)$ with probability $P_{s_t(u), s_t(v)}$, and $s_{t+1}(v) = s_t(v)$ with probability $1 - P_{s_t(u), s_t(v)}$. (Informally, we may write: ``$ij\to ii$ with probability $P_{ij}$''.) In general, matrix $P$ need not be skew-symmetric nor stochastic. We only assume that $P_{ii} =0$, for $1\leq i \leq k$, and that every type interacts in some way with at least one other type (for every $i$, $1\leq i \leq k$, there exists $j$, $1\leq i \leq k$, such that $P_{ij} > 0$ or $P_{ji} > 0$). We will denote the value of the minimal non-zero entry of matrix $P$ as $P_{\min}$. For every LV-type protocol, we construct the corresponding digraph $\vec D(P)$, whose vertex set is the set of types $[k]$, and an arc $(i,j)$ exists if $P_{i,j} >0$. We call the dynamics \emph{irreducible} if the digraph $\vec D(P)$  has no sources (i.e., there are no types without a predator, so each column of matrix $P$ has at least one non-zero entry) and is connected.

We remark that as $n\to\infty$, our random process converges to its (deterministic) limit continuous dynamics, given by the following set of first-order differential equations (a special case of the continuous Lotka-Volterra equations):
\begin{equation}\label{eq:lv}
\frac{d x_i(t)}{dt} = x_i(t) \sum_{j=1}^k \left[(P_{ij} - P_{ji})x_j(t)\right], \quad \text{for $1 \leq i \leq k$.}
\end{equation}
The above dynamics is non-linear and exhibits non-trivial properties in terms of limit behavior and stability measures such as the Lyapunov exponent (cf.\ \cite{HS98}). Our discrete population case with finite $n$ can be informally seen as a special form of ``noise'' introduced into the Lotka-Volterra equation~\eqref{eq:lv}.

In this paper, we also give our attention to two specific LV-type protocols:
\begin{itemize}
\item \emph{Rock-Paper-Scissors (RPS)} is the LV-type protocol with $k=3$ types (denoted $1, 2, 3$), whose probability matrix $P$ has the following non-zero entries: $P_{12} = P_{23} = P_{31} =1$.
\item \emph{Wolves-and-Sheep (WS)} is the LV-type protocol with $k=4$ types (denoted $X,Y,x,y$), whose probability matrix $P$ has the following non-zero entries: $P_{XY} = P_{Xx} = P_{YX} = P_{Yy} = 1$, $P_{Xy} = P_{Yx} = 1/2$.
\end{itemize}

Throughout the paper, we use the term ``with very high probability'' (w.v.h.p.) to denote events occurring with probability at least $1 - e^{-\Omega(\log^2 n)}$ and the term ``with high probability'' (w.h.p.) for events occurring with probability at least $1 - n^{-\Omega(1)}$. Thus, polynomially (resp., logarithmically) many events which individually hold w.v.h.p.\ (w.h.p.)\ also hold jointly w.v.h.p.\ (w.h.p.)\ by virtue of the union bound. In time and distance analysis, we will use the notation $\Otilde$ and $\Omegatilde$ to conceal poly-logarithmic factors ($\Otilde(f) = \Oh(f \mathrm{polylog}(n))$, $\Omegatilde(f) = \Omega(f /\mathrm{polylog}(n))$).

\section{Convergence of Discrete LV-type Protocols}\label{sec:convergence}

\InJournal{For any LV-type protocol, there exists a subset $X_s \subseteq X$ of stationary absorbing states, which remain unchanged under transformation $P$ (for any $x\in X_s$, applying any transformation governed by protocol $P$ to $x$ preserves $x$). Note that for any state $x \in X\setminus X_s$, there exists an execution of the system which reaches some absorbing state in a finite number of steps with strictly positive probability.
}

\InJournal{\subsection{Bound on Convergence Time under Complete Interactions}\label{sec:polynomial}}

We start by showing that any LV-type protocol on a population of size $n$ converges to an absorbing state in time $\Oh(poly(n))$, when there are no population constraints (the interaction graph is $K_n$).

\begin{theorem}[LV-type convergence for complete interactions]\label{thm:lvtime}
For any probability matrix $P$, there exists a constant $c$ such that the LV-type protocol defined by $P$ converges for the complete interaction graph to an absorbing state in $\Oh(n^c)$ steps, w.v.h.p.
\end{theorem}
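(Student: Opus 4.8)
The plan is to reduce the statement to a single ``extinction'' step and iterate. First I would observe that a type which reaches population zero can never reappear, since an agent only ever changes its type to that of an interacting partner; hence the present set $S(t)=\{i:n_i(t)>0\}$ is non-increasing along any execution. A configuration is absorbing exactly when $P_{ij}=0$ for every ordered pair of present types, so it suffices to prove the one-step claim: \emph{from any non-absorbing configuration with present set $S$, within $\Otilde(n^2)$ steps w.v.h.p.\ either the system becomes absorbing or some type of $S$ goes extinct.} Iterating this at most $k-1$ times (a constant number) and taking a union bound over the w.v.h.p.\ events then yields convergence in $\Otilde(n^2)=\Oh(n^c)$ steps.

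For the one-step claim I would pass to the continuous replicator dynamics~\eqref{eq:lv} restricted to $S$, governed by the antisymmetric matrix $A=(P_{ij}-P_{ji})_{i,j\in S}$. The symmetric zero-sum game $A$ always has a symmetric equilibrium $q$ (a probability vector on $S$ with $(Aq)_i\le 0$ for all $i$ and equality on the support of $q$), and the dichotomy in the hint corresponds to whether $q$ may be taken with full support (stationary periodic orbits, as in RPS) or not (a repeller pushing mass to the boundary). In either case I track the log-potential $\Phi(t)=\sum_i q_i\log n_i(t)$, summed over the support of $q$. The heart of the argument is a one-step drift computation: each interaction changes exactly one $n_i$ by $\pm1$, and summing $q_i[\log(n_i\pm1)-\log n_i]$ against the transition probabilities $\tfrac{n_in_j}{n(n-1)}P_{ij}$ gives
\begin{equation*}
\E\!\left[\Phi(t{+}1)-\Phi(t)\;\middle|\;x(t)\right]\;=\;\frac{1}{n-1}\,q^{\top}Ax\;-\;R(x),
\end{equation*}
where the first term is exactly the scaled continuous drift of $\Phi$, and $R(x)>0$, of order $\tfrac1n\sum_i q_i/n_i$, arises from the strict concavity of $\log$ (via $\log(m{+}1)-\log m<\tfrac1m<\log m-\log(m{-}1)$). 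The sign of $q^{\top}Ax=-\sum_i x_i(Aq)_i\ge0$ is controlled by the equilibrium: it is identically $0$ when $q$ has full support, so $\Phi$ is then a genuine supermartingale pushing $x$ away from its maximizer $x^\ast=q$ toward the boundary; when $q$ has support $S'\subsetneq S$, the potential on $S'$ is a submartingale whose positive drift is proportional to the mass sitting on the dominated types outside $S'$.

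From here I would apply a variance-aware martingale concentration bound (Freedman/Bernstein rather than Azuma, since the per-step variance is only $\Theta(1/n^2)$ while worst-case increments are larger). Running the process until the first time some $n_i$ drops below $n^{1-\delta}$: while all populations exceed this threshold the increments are $\Oh(n^{\delta-1})$ and the accumulated variance over $T=\Theta(n^2\,\mathrm{polylog}\,n)$ steps is $\Oh(\mathrm{polylog}\,n)$, so the $\Omega(T/n^2)$ drift dominates the fluctuations w.v.h.p. Since $\Phi$ stays within an interval of width $\Oh(\log n)$ as long as all types remain above $n^{1-\delta}$, whereas the drift alone would force a change of $\Omega(\log^2 n)$ over $T$, the threshold must be crossed within $\Otilde(n^2)$ steps w.v.h.p.; that is, some type becomes sublinear. (In the repeller case the first-order drift is the stronger $\Theta(1/n)$, so the dominated types become sublinear already in $\Otilde(n)$ steps.)

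The step I expect to be the main obstacle is the \emph{endgame}: converting ``some type is sublinear'' into ``some type is extinct.'' This is delicate because an individual frequency $n_{i_0}$ is not monotone --- its first-order drift $\tfrac1{n-1}(Ax)_{i_0}$ changes sign as the populations cycle, so a small type can rebound. I would resolve this by exploiting that the drift of $\Phi$ \emph{strengthens} near the boundary: when the smallest present type has size $m$, its contribution to $R(x)$ is $\Theta\!\big(q_{i_0}/(mn)\big)$, so the supermartingale is driven toward $-\infty$ increasingly fast, which --- combined with a coupling of the small type to a suitably biased birth-death chain --- forces absorption at $0$ within a further $\Oh(\mathrm{poly}(n))$ steps w.v.h.p. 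In the repeller case this is easier, since the sublinear types are precisely the dominated ones, whose birth-death chain is explicitly biased toward $0$. Combining the $\Otilde(n^2)$ bound to reach sublinearity with the polynomial endgame bound, and iterating over the at most $k$ extinctions, establishes the theorem.
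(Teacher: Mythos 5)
Your main engine is the paper's: the log-potential $\Phi(x)=\sum_i b_i\ln x_i$ with the weight vector $b$ chosen by a linear-algebraic dichotomy on the skew-symmetric matrix $A=P-P^T$, the one-step drift splitting into the continuous term $\frac1n b^TAx$ plus a strictly negative correction of order $\frac1{n^2}\sum_{i,j} b_i\,n_j/n_i$ coming from concavity of $\log$, a martingale concentration bound pitting this drift against the $\Oh(\log n)$ width of the potential, and iteration over at most $k$ extinctions. The paper states the dichotomy as a Farkas/Stiemke alternative (either some nonzero $b\ge 0$ has $b^TA=0$, or some $b$, not necessarily nonnegative, has $b^TA>0$ in every coordinate), whereas you state it via the support of a symmetric equilibrium $q$; your version is serviceable provided $q$ is taken in the relative interior of the optimal set (so that $(Aq)_i<0$ strictly for every $i$ outside the support) and provided you keep the guard, which you do impose, that all populations exceed $n^{1-\delta}$, since otherwise $q^TAx=\sum_j\bigl(-(Aq)_j\bigr)x_j$ can degenerate to $0$ and the sign of the drift is no longer uniform. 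Freedman versus Azuma is immaterial; the paper gets by with Azuma at the cost of a larger exponent.

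The genuine gap is the endgame, exactly where you flag it: your concentration argument delivers ``some type drops below $n^{1-\delta}$,'' and neither of your proposed mechanisms gets from there to extinction. The strengthened supermartingale drift $\Theta\bigl(q_{i_0}/(mn)\bigr)$ requires $q_{i_0}>0$, which fails precisely in the repeller branch, where the types being eliminated lie outside the support of $q$; and the ``suitably biased birth--death chain'' does not exist in general, because the one-step drift of a single population is $\frac1n n_{i_0}(Ax)_{i_0}$, whose sign depends on the current configuration $x$ rather than on $q$ --- even a dominated type can have $(Ax)_{i_0}>0$ and rebound, as happens perpetually in RPS. Moreover, even on the support of $q$, driving the supermartingale argument down to population $\Oh(1)$ forces $\Theta(1)$ increments, where your variance-based concentration no longer beats the drift cleanly. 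The paper closes this with a much blunter device: it runs the potential argument all the way down to a \emph{constant} threshold $n_{\min}$ (accepting the degraded bound, e.g.\ $\Theta(n^7)$ steps in the degenerate case), and then observes that from $n_i<n_{\min}=\Oh(1)$ the species is annihilated by $n_{\min}$ consecutive favourable interactions with probability $(P_{\min}/n^2)^{n_{\min}}=n^{-\Oh(1)}$; since the event ``some population is below $n_{\min}$'' recurs within polynomial time w.v.h.p.\ by the Markov property, polynomially many independent trials suffice. If you lower your threshold to a constant and substitute this brute-force extinction step for the birth--death coupling, your argument becomes the paper's proof; as written, the passage from sublinear to extinct is not established.
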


Before proceeding with the proof, we introduce some auxiliary notation. For a fixed matrix $P$, we define the skew-symmetric \emph{nett interaction matrix} $A$ as $A = P - P^T$. Observe that $A_{ij} = P_{ij} - P_{ji}$ and equation~\eqref{eq:lv} describing the continuous dynamics now takes the simpler form:
\begin{equation}\label{eq:lvs}
\frac{d x_i}{dt} = x_i (A_i x),
\end{equation}\InJournal{
or in vector notation:
$$
\frac{d x}{dt} = x \odot (A x),
$$
}
where we treat $x$ as a column vector, and $A_i$ is the $i$-th row vector of matrix $A$\InJournal{, and $\odot$ denotes the element-by-element product of two column vectors} (cf.~\cite[Chapter 7]{HS98} for a more detailed exposition of the properties of this continuous dynamics).

For a fixed real vector $b \in \R^k$, which we will appropriately choose later, we define the potential $U$ of a system state $x$ as (compare~\cite[equation (5.3)]{HS98}):
$$
U(x) = \sum_{i=1}^k b_i \ln x_i,
$$
and by $U(t)$ we will mean $U(x(t))$. Observe that under evolution of the system given by the continuous dynamics~\eqref{eq:lvs}, we have:
$$
\frac{d U}{dt} = \sum_{i=1}^k b_i \frac{1}{x_i}\frac{dx_i}{dt} = \sum_{i=1}^k b_i (A_i x) = \sum_{j=1}^k \left( \left (\sum_{i=1}^k b_i A_{ij} \right) x_j \right) = b^T A x.
$$
We define vector $b$ as follows:
\begin{enumerate}
\item[(i)] if there exists a non-zero vector $b \geq 0$, such that $b^T A = 0$, choose $b$ as any such vector with $\|b\|_\infty = 1$.
\item[(ii)] otherwise, choose $b$ as any vector satisfying $b^T A  > 0$, with $\|b\|_\infty = 1$.
\end{enumerate}
The completeness of the above definition follows from a basic theorem of linear optimization, known as the ``no arbitrage theorem'' in financial mathematics (cf. also~\cite[proof of Thm.~5.2.1]{HS98})\InJournal{\footnote{For completeness, we provide a short version of the argument. If (i) does not hold, then the non-negative cone $\R^k_{\geq 0}$ intersects with the kernel $\ker A^T$ only at point 0. It follows that there must exist a strictly positive vector $h\in \R^k_+$ orthogonal to the kernel, $h \perp \ker (A^T)$. Since the orthogonal complement of $\ker(A^T)$ is the column space of matrix $A$, which is equal to the row space of $A$ (since $A = -A^T$), the equation $b^T A = h$ admits a solution $b$, and it is possible to satisfy condition (ii).}}.
\begin{proof}[Proof of Theorem~\ref{thm:lvtime}\InConference{, sketch}]
Observe that by the definition of the LV-type process, if a certain type $i$ has been eliminated by time $t$ ($x_i(t) =0$), then it will never reappear ($x_i(\tau)=0$, for all $\tau \geq t$). We will now show that the number of non-zero values of $x_i$, $1\leq i \leq k$, is reduced by at least one within a polynomial number of steps, w.h.p. Note that this is sufficient to obtain the claim of the theorem, since we may iterate the argument, each time restricting the definition of the dynamics and the matrix $A$ to those of the $k$ types which are non-empty. In the rest of the proof, we will be assuming w.l.o.g.\ that $x_i > 0$, for all $i$. We will also be assuming that the dynamics is irreducible; otherwise, if digraph $\vec D(P)$ is disconnected, we can consider each of the weakly connected components separately, and if any of the weakly connected components has a source, then one can easily show that all the prey of this source is eliminated in a polynomial number of steps.

The main part of the proof is contained in the following claim.

\smallskip
\emph{Claim. For any irreducible LV-type protocol, there exists a constant $n_{\min}$, such that for any $n>0$, for any initialization of the protocol with $n$ agents, w.v.h.p.\ there exists a time step $T \in \Oh(\poly(n))$ in which $n_i(T) < n_{\min}$, for some type $i$, $1\leq i \leq k$.}

\smallskip

\InConference{The proof of the claim proceeds by a careful analysis of the change of potential $U$ in time. Depending on case $(i)$ and $(ii)$ in the definition of vector $b$, we provide an absolute lower bound on the expectation $\E(\delta(t)|x(1),\ldots,x(t-1))$ for the stochastic process $\delta(t) = U(t+1) - U(t)$, given that $n_i(T) \geq n_{\min}$ for all types $i$. We then perform a super/sub-martingale analysis for the deviation of $\delta(t)$ from this expected change of potential for the two cases $(i)$ and $(ii)$, applying Azuma's inequality to bound the number of steps of the process until we reach $n_i(T) < n_{\min}$, for some type $i$. The details are deferred to the Appendix.}

\InJournal{
\emph{Proof (of claim).} For any $t>0$, we consider the random variable $\delta$ representing the change of the value of the potential function $U$ between time steps $t$ and $t+1$:
\begin{equation}\label{eq:deltadef}
\delta(t) = U(t+1) - U(t).
\end{equation}
We start by remarking that in a single time step, the change of potential is restricted to a single interaction between some pair $(i,j)$, and thus bounded by the following expression when $n_i \geq n_{\min}$, for all $1\leq i\leq k$:
\begin{align}
|\delta(t)| \leq& \max_{1\leq i,j\leq k}  \left| b_i \left(\ln \frac{n_i +1}{n} - \ln \frac{n_i}{n}\right) + b_j \left(\ln \frac{n_j -1}{n} - \ln \frac{n_j}{n}\right)\right| < \nonumber\\
<& 2\max_{1\leq i\leq k} |\ln(n_i) - \ln(n_i - 1)| < \frac{4}{n_{\min}}.\label{eq:deltabound}
\end{align}

Since the probability of the interaction of the form $(i,j)$ in the $(t+1)$-st step is $\frac{n_i n_j}{n^2}$, and this interaction increases $n_i$ by $1$ and decreases $n_j$ by $1$ with probability $P_{ij}$, the expectation $\E \delta(t)$ takes the form:
\begin{align*}
\E\delta(t) &= \sum_{i=1}^k \sum_{j=1}^k \frac{n_i n_j}{n^2} P_{ij} \left( b_i \left(\ln \frac{n_i +1}{n} - \ln \frac{n_i}{n}\right) + b_j \left(\ln \frac{n_j -1}{n} - \ln \frac{n_j}{n}\right)\right)=\\
& = \frac{1}{n^2} \sum_{i=1}^k  \sum_{j=1}^k b_i n_i n_j \left( P_{ij} \left(\ln (n_i+1) - \ln(n_i)\right) + P_{ji} \left(\ln (n_i-1) - \ln(n_i)\right)\right).
\end{align*}
We now introduce the following notation for Taylor series remainders in the expansion with respect to $1/x$: $\ln(x+1) - \ln(x) = \frac{1}{x} - \frac{1}{2 x^2} (1 - \frac{\lnc(x)}{x})$ and $\ln(x) - \ln(x-1) = \frac{1}{x} + \frac{1}{2 x^2} ( 1+ \frac{\lnf(x)}{x})$, where $0 \leq \lnc(x) \leq 2/3$ and $0\leq \lnf(x) \leq 2/3$, for every  $x>1$. Next, we obtain after elementary transformations:
\begin{align}
\E\delta(t) &= \frac{1}{n^2} \sum_{i=1}^k  \sum_{j=1}^k b_i A_{ij}n_j -
\frac{1}{2 n^2} \sum_{i=1}^k  \sum_{j=1}^k b_i \frac{n_j}{n_i} \left( P_{ij} \left(1 - \frac{\lnc(n_i)}{n_i}\right)  +  P_{ji} \left(1 + \frac{\lnf(n_i)}{n_i}\right) \right) \nonumber\\
= \frac{1}{n^2}& \sum_{i=1}^k  \sum_{j=1}^k b_i n_j \left( A_{ij}-
\frac{1}{2n_i} \left( P_{ij} \left(1 - \frac{\lnc(n_i)}{n_i}\right)  +  P_{ji} \left(1 + \frac{\lnf(n_i)}{n_i}\right) \right)\right).\label{eq:expdelta}
\end{align}

We now use~\eqref{eq:expdelta} to bound the expectation $\E\delta(t)$ depending on whether condition (i) or (ii) is satisfied by vector $b$.

\begin{itemize}
\item Case (i): Let $b_i\geq 0$ for all $1\leq i \leq k$, $b^T A=0$, and $b_{\alpha} = 1$ for some $\alpha$, $1\leq \alpha \leq k$. We have $\sum_{i=1}^k \sum_{j=1}^k b_i A_{ij}n_j =0$. Consequently, assuming $n_i \geq 2$, for all $1\leq i \leq k$, we may write:

\begin{align}
\E\delta(t) &= - \frac{1}{2n^2} \sum_{i=1}^k  \sum_{j=1}^k b_i \frac{n_j}{n_i} \left( P_{ij} \left(1 - \frac{\lnc(n_i)}{n_i}\right)  +  P_{ji} \left(1 + \frac{\lnf(n_i)}{n_i}\right) \right)  \label{eq:expdeltacasei}\\
\leq& - \frac{1}{3n^2} \sum_{i=1}^k  \sum_{j=1}^k b_i \frac{n_j}{n_i} \left( P_{ij} + P_{ji}\right) \leq
- \frac{b_\alpha}{3n^2 n_\alpha} \sum_{j=1}^k n_j \left( P_{\alpha j} + P_{j \alpha}\right) \leq
- \frac{P_{\min}}{3n^3}.\nonumber
\end{align}
In the above transformations, we took into account the fact that all $b_i$ are non-negative, that $0 \leq r^+(n_i), r^-(n_i) \leq 2/3$ for  $n_i >1$, and that by convention, the nonnegative-valued matrix $P$ has a non-zero entry in every row or corresponding column.

We now define a random variable $\delta'(t)$, given as $\delta'(t) = \delta(t)+\frac{P_{\min}}{3n^3}$ in every step of the process where $n_i \geq 2$, for all $1\leq i \leq k$, and set $\delta'(t) = 0$ for all other steps of the process. Clearly, $X_t = \sum_{\tau =0}^t \delta'(\tau)$ is a super-martingale in $t$, satisfying the bound $|X_t - X_{t-1}| \leq |\delta(t)| < 2$ by~\eqref{eq:deltabound}. By Azuma's inequality, $\Pr[X_t < \frac{P_{\min}}{6n^3} t] \geq 1 - \exp\left[-\left(\frac{P_{\min}}{6n^3} t\right)^2\frac{1}{8 t}\right] = 1 - \Oh(e^{-n})$, for any sufficiently large value of $t = \Omega(n^7)$. Observe, however, that the events ``$\delta'(\tau) = \delta(\tau)+\frac{P_{\min}}{3n^3}$ for all $\tau \leq t$'' and ``$X_t < \frac{P_{\min}}{6n^3} t$'' cannot hold simultaneously, since otherwise we have $U(0) - U(t+1) = \frac{P_{\min}}{3n^3}t - X_t > \frac{P_{\min}}{6n^3}t = \Theta (n^4)$. On the other hand, $x_i \geq 1/n$ and so every potential $U(\tau)$ satisfies:
$$
|U(\tau)| \leq k \ln n = \Oh(\ln n),
$$
hence such a potential difference is impossible. It follows that the probability that we have $n_i \leq 1$, for some $1\leq i \leq k$, for all except at most $\Theta(n^7)$ steps of the process, is $1 - \Oh(e^{-n})$.

\item Case (ii): Let $h = b^T A$, with $h_i \geq h_{\min} > 0$ for all $1\leq i \leq k$, and let $\|b\|_\infty = 1$. Assuming $n_i \geq n_{\min
} = \max\{2, \frac{2k}{h_{\min}}\}$, for $1\leq i \leq k$, we now have:
\begin{align*}
\E\delta(t) &= \frac{1}{n^2} \sum_{i=1}^k  \left( h_i n_i - \sum_{j=1}^k b_i \frac{n_j}{2n_i} \left( P_{ij} \left(1 - \frac{\lnc(n_i)}{n_i}\right)  +  P_{ji} \left(1 + \frac{\lnf(n_i)}{n_i}\right) \right)\right) \\
\geq\frac{1}{n^2}& \sum_{i=1}^k  \left( h_{\min} n_i - \sum_{j=1}^k \frac{n_j}{2n_{\min}} \left( \frac{4}{3} + \frac{4}{3} \right)\right) = \frac{h_{\min}}{ n}\left(1 - \frac{4k}{3n_{\min} h_{\min}}\right) \geq \frac{h_{\min}}{3 n}.
\end{align*}
Similarly as in the case of condition (i), we define a random variable $\delta'(t)$, given as $\delta'(t) = \delta(t)-\frac{h_{\min}}{6n}$ in every step of the process where $n_i \geq n_{min}$, for all $1\leq i \leq k$, and set $\delta'(t) = 0$ for all other steps of the process. This time, $X_t = \sum_{\tau =0}^t \delta'(\tau)$ is a sub-martingale in $t$, satisfying the bound $|X_t - X_{t-1}| \leq |\delta(t)| < \frac{4}{n_{\min}}$ by~\eqref{eq:deltabound}. Applying Azuma's inequality once again, we obtain $\Pr[X_t > -\frac{h_{\min}}{6n} t] \geq 1 - \exp\left[-\left(\frac{h_{\min}}{6n} t\right)^2\frac{1}{2 (4/n_{\min}^2) t}\right] = 1 - \Oh(e^{-n})$, for a sufficiently large value of $t = \Theta(n^2)$. By a similar potential difference argument as before, we obtain that the event $n_i < n_{\min}$, for some $1\leq i \leq k$, occurs within the first $\Theta(n^2)$ steps with probability $1 - \Oh(e^{-n})$.
\end{itemize}
This completes the proof of the claim.
\smallskip
}

To obtain the claim of the theorem, we now need to notice that whenever the population of some type drops below a constant threshold $n_{\min}$, the probability that the population is eliminated completely within the next $\Oh(1)$ steps of the irreducible protocol is polynomially large in $n$. \InJournal{Indeed, suppose we have $n_i(t) \leq n_{\min}$. Then, by the irreducibility of the protocol, there must exist an active predator $j$ of type $i$, $1\leq j \leq k$, with $P_{ji} \geq P_{\min}$ and $n_j \geq 1$ by assumption. The probability that species $i$ is eliminated in the next $n_i(t)$ steps can be lower-bounded by the probability of the occurrence of elimination of a representative of population $i$ by a representative of population $j$ in each of those steps, $(\frac{P_{\min}}{n^2})^{n_{\min}} = 1 / \Oh(n^{2 n_{\min}})$.} Overall, after at most $\Oh(n^{2 n_{\min}+1})$ occurrences of the event ``there exists $1\leq i\leq k$ such that $n_i < n_{\min}$'', each of which takes place every polynomial number of steps w.v.h.p.\ by the Claim, one of the species will have been eliminated completely w.v.h.p., which gives the claim of the theorem.
\end{proof}

We remark that the key property of LV-type protocols (i.e., that prey is changed to the same type as the predator) is essential in guaranteeing the polynomial time of reaching an absorbing state. For example, consider the following (non-LV) population protocol on 3 types $a$,$b$, and $c$: $\{ab \to ac, ac \to aa, ca \to cb, ba \to bb, bb \to ba\}$, where all the transitions occur with probability $1$. This protocol always preserves the type of the predator, but sometimes sets the third type as the type of the prey. For this protocol, for any initialization such that $n_a(0) = \Omega(n), n_b(0) = \Omega(n)$, the expected time to reach an absorbing state is exponential in $n$.

\InJournal{\subsection{Rock-Paper-Scissors with the Star Interaction Graph}\label{sec:star}}
It turns out that for LV-type protocols, the convergence time may become exponential when the interaction graph is not complete. Whereas all LV-type protocols with $2$ species (e.g., the game of life-and-death~\cite{HS98}) converge in polynomial time to an absorbing state for any interaction graph, this is no longer true when the number of species is at least $3$. We observe this for the rock-paper-scissors (RPS) protocol on the star.

\InJournal{
\begin{definition}\label{def:rps}
The \emph{Rock-Paper-Scissors (RPS) protocol} is the LV-type protocol with $k=3$ populations, given by the following matrix: $P = \left[ \begin{array}{rrr}0 & 1 & 0 \\ 0 & 0 & 1 \\ 1 & 0 & 0\end{array}\right]$.
\end{definition}
}

\begin{theorem}[RPS convergence on the star]
The RPS protocol with a $K_{1,n}$ interaction graph, initialized so that initially each type has at least $n_{\min} \ge n/3 - n/200$ agents, reaches the absorbing state in expected time $T_{abs} \ge e^{n^{\Omega(1)}}$.
\end{theorem}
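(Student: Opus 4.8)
The plan is to first pin down the absorbing states and then show that the process is trapped for exponential time near the barycentre $n_1=n_2=n_3=n/3$ of the state space. On $K_{1,n}$ every interaction is between the centre and a leaf, and such a pair changes nothing precisely when the two agents share a type (in RPS any two distinct types have a strict winner). Hence a configuration is absorbing iff it is monochromatic. Since a leaf can change type only through the centre, reaching absorption requires driving (at least) two of the three leaf-populations to $0$; in particular the eventually-smallest population $n_i$ must travel all the way from $\approx n/3$ down to $0$. (There is a near-converse: once some $n_i=0$ and the centre is not the type preyed upon by the predator of $i$, the population collapses to a single type within a polynomial number of further steps.) I would therefore reduce the theorem to the statement: starting from the given near-balanced configuration, with probability $1-o(1)$ no population reaches $0$ within $e^{n^{\Omega(1)}}$ steps, which lower-bounds $\E[T_{abs}]$ accordingly.

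The next step is to exhibit a confining drift, tracking the discrete analogue of the continuous RPS invariant $x_1x_2x_3$, namely $\Phi=\ln n_1+\ln n_2+\ln n_3$. Because a leaf changes type only when chosen opposite the centre, in one step exactly one conversion $2\to1$, $1\to3$ or $3\to2$ can occur, according as the centre's type is $1$, $3$ or $2$, while the centre itself performs a directed cyclic walk $1\to3\to2\to1$ whose exit rate from each state is proportional to the population preying on it. Under the centre's (state-dependent) quasi-stationary occupation $p_1\propto 1/n_3$, $p_3\propto 1/n_2$, $p_2\propto 1/n_1$, I expect the first-order part of $\E[\Delta\Phi]$ to equal a positive multiple of $\frac{(n_1-n_2)^2+(n_2-n_3)^2+(n_3-n_1)^2}{n_1 n_2 n_3}$, an inward restoring drift toward the barycentre that dominates the $\Oh(1/n^2)$ outward second-order (discreteness) correction once the configuration is bounded away from the barycentre. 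Equivalently, for a single population one computes $\mathrm{sign}\,\E[\Delta n_3]=\mathrm{sign}(n_1^2-n_2 n_3)$ (and cyclically), so each $n_i$ is pushed upward whenever it alone is small.

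To convert confinement into an exponential time bound I would fix a constant threshold $\beta$ (say $\beta=1/100$; the hypothesis $n_{\min}\ge n/3-n/200$ places the start well above it) and bound the probability that a fixed population, say $n_3$, ever reaches $0$ during an excursion below $\beta n$. In that regime $n_1,n_2=\Theta(n)$ and the centre is almost always type $1$, so the embedded $\pm 1$ walk on $n_3$ is a birth–death chain with down/up ratio $\approx n_2 n_3/n_1^2=\Theta(n_3/n)$; the probability of descending from $\Theta(n)$ to $0$ before returning to the balanced region is then at most $\prod_{j=1}^{\beta n}\Theta(j/n)=(\Theta(\beta))^{\beta n}=e^{-\Omega(n)}$. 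A union bound over the at most $e^{\epsilon n}$ excursions occurring in the first $e^{\epsilon n}$ steps, and over the three populations, shows that for a small constant $\epsilon$ no population hits $0$ within $e^{\epsilon n}$ steps w.h.p., giving $\E[T_{abs}]\ge\tfrac12 e^{\epsilon n}=e^{n^{\Omega(1)}}$.

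The main obstacles are two couplings. First, the centre is a genuine part of the state whose mixing time is itself state-dependent (fast near the barycentre but slow and biased when a population is small), so the quasi-stationary occupation $p_i$ used above must be justified rigorously, e.g. by averaging the drift over windows long compared with the local centre-mixing time yet short enough that the counts barely move, or by folding the centre's type into a joint supermartingale. Second, the birth–death estimate tacitly needs the other two populations to remain $\Theta(n)$ throughout the excursion, whereas $n_3$ drifts downward only when $n_1^2<n_2 n_3$, i.e. when $n_1$ is also small. Here the cyclic mass-flow $2\to1\to3\to2$ is the saving grace: the same sign computation gives $n_1$ a strong upward drift exactly when it alone is small, so two simultaneously small populations are themselves exponentially unlikely. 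I expect to formalize this by running the argument on the two smallest populations jointly (or directly on $\Phi$, ruling out its descent toward the $-\infty$ corner). Reconciling these two couplings is the crux of the proof.
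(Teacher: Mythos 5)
Your central mechanism is the same one the paper uses, and your drift computation is correct: averaging over the centre's occupation times $p_a\propto 1/n_{a+2}$ gives
$\E[\Delta\Phi]\approx \frac{1}{Zn}\cdot\frac{\frac12\sum_a (n_a-n_{a+1})^2}{n_1n_2n_3}$ with $Z=\sum_b 1/n_b$, which is exactly the paper's Lemma on the potential change (the paper works with the product $U=n_1n_2n_3$ rather than its logarithm, and implements the averaging concretely by passing to the embedded chain at the times the centre switches type and analysing triples of consecutive switches, so that one full cycle of the centre contributes $\E[\Delta U]\ge \frac12 d^2(1-o(1))$). Your per-population sign computation $\mathrm{sign}\,\E[\Delta n_3]=\mathrm{sign}(n_1^2-n_2n_3)$ is also right. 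Where you diverge is in how the confining drift is converted into an exponential lower bound, and here the paper's route is both different and decisively simpler: it never follows any population toward $0$. It fixes a threshold on the product potential just barely below its maximum ($U\le 0.037025\,n^3$, at which point the spread $d$ is already $\Omega(n)$ while all three populations are still within $O(n/100)$ of $n/3$), shows by Azuma that whenever $U$ crosses this threshold downward it is pushed back above it w.v.h.p.\ within $O(n)$ steps, and concludes that $U=0$ (necessary for absorption) is reached only after exponentially many threshold crossings. All the analysis therefore takes place in a regime where every population is $\Theta(n)$, sojourn times of the centre are $O(n^{0.1})$ w.v.h.p., and the quasi-stationary averaging is easy to justify.

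Your route, by contrast, must control excursions of a single population all the way down to $0$, and the difficulties you flag at the end are genuine gaps, not formalities. In particular, your birth--death estimate with down/up ratio $n_2n_3/n_1^2$ breaks down precisely in the configuration $(n_1,n_2,n_3)=(\text{small},\Theta(n),\text{small})$ with $n_1<\sqrt{n\,n_3}$: there $n_1^2-n_2n_3<0$, so $n_3$ has a \emph{downward} per-cycle drift, and simultaneously the centre's sojourn in state $1$ lasts $\Theta(n/n_3)$ steps during which it converts $\Theta(n_2/n_3)$ leaves, so the ``counts barely move over a mixing window'' assumption underlying your quasi-stationary averaging also fails at that scale. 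Your proposed rescue (that $n_1$ itself has drift $\approx n_2^2/(n_1n_3)\gg 1$ per cycle and so cannot stay small) is plausible but has to be quantified against the speed at which $n_3$ is simultaneously being driven down, and you correctly identify this joint control as the crux --- it is left unresolved. So the proposal is not wrong, but as written it establishes the theorem only modulo this two-dimensional excursion analysis; the lesson from the paper is that one can avoid the small-population regime entirely by running the supermartingale argument on the product potential in a thin shell around its maximum, where the hypothesis $n_{\min}\ge n/3-n/200$ places the starting configuration.
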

\InJournal{
\begin{proof}
In this section we will fix $G = (V,E)$ as a star on $n+1$ vertices, or equivalently full bipartite graph $K_{1,n}$:
$V = \{0,1,\ldots,n\}, E = \{ (0,1), (0,2), \ldots, (0,n)\}$.
Full state of the system is encoded by a quadruple $S_t = (s_t(0), n_1(t), n_2(t), n_3(t))$ (for clarity of notation, we do not count the state of a middle vertex towards any of $n_i(t)$).

Thus, the transition probabilities of system are given as follow (assuming $s_t(0) = 1$ w.l.o.g.):
\begin{align*}S_{t+1} =& (s_{t+1}(0), n_1(t+1), n_2(t+1), n_3(t+1)) =\\
=&
\begin{cases}
(1, n_1(t), n_2(t), n_3(t)) & \text{ with probability } \frac{n_1(t)}{n}\\
(1, n_1(t)+1, n_2(t)-1, n_3(t)) & \text{ with probability } \frac{n_2(t)}{n}\\
(3, n_1(t), n_2(t), n_3(t)) & \text{ with probability } \frac{n_3(t)}{n}
\end{cases}
\end{align*}

We define a sequence of timesteps of RPS process, that marks changes of $s_t(0)$. More specifically, we define a sequence of timesteps, where $t_0 = 0$, and $t_{i+1}$ is the smallest value larger than $t_i$ such that $s_{t_{i+1}}(0) \not= s_{t_i}(0)$.

Let us denote by $d(t) = n_{\max}(t) -  n_{\min}(t)$ the difference between largest and smallest of the population.
We denote the potential $U(t) = n_1(t)n_2(t)n_3(t)$.

Let $T$ be an arbitrary timestep such that $U(T) \le 0.037025 \cdot n^3$, while $U(T-1) \ge  0.037025 \cdot n^3$.
Since
$$\min_{x \in [-1/300,1/300]} (1/3+x)(1/3-1/200-x/2)(1/3-1/200+x/2) \ge 0.037025$$
then $d(T) \ge n/100$.

Let $\Delta T = n / 1000$. Thus, for any $t \in [T,T+\Delta T]$, $d(t) \in [\frac{8}{1000} n, \frac{12}{1000}n]$.
Thus we denote $t_i = T < t_{i+1} < ... < t_{i+\Delta i} \le T + \Delta T$.

With probability at least $(1 - e^{- n^{\Omega(1)} })$ we have $t_{j+1} - t_j \le n^{0.1}$. Thus below we assume this bound on length of this gap, and at the end of this analysis we will have to acknowledge the failure probability of $e^{- n^{\Omega(1)} }$.

Thus, $\Delta i \ge n^{0.9}$.
Let us take arbitrary $t_j$ such that $T \le t_j \le t_{j+3} \le T + \Delta T$. W.l.o.g. we can assume that $S_{t_j} = 1$. Denote $\Delta_1 = t_{j+1} - t_j$, $\Delta_2 = t_{j+2} - t_{j+1}$ and $\Delta_3 = t_{j+3}$.
Thus, the state at $t_{j+3}$ fulfills:
$$(n_1(t_{j+3}),n_2(t_{j+3}),n_3(t_{j+3})) = (n_1(t_{j}),n_2(t_j),  n_3(t_j)) + (\Delta_1 - \Delta_2,\Delta_3 - \Delta_1, \Delta_2 - \Delta_3),$$
where:
$$\E[\Delta_1 | S_{t_j}] = \lambda_1 \frac{n_2(t_j)}{n_3(t_j)} $$
$$\E[\Delta_2 | S_{t_j}] = \lambda_2 \frac{n_1(t_j)}{n_2(t_j)} $$
$$\E[\Delta_3 | S_{t_j}] = \lambda_3 \frac{n_3(t_j)}{n_1(t_j)} $$
$$ \lambda_1,\lambda_2,\lambda_3 \in [1 - \bigo(n^{-0.9}), 1 + \bigo(n^{-0.9})]$$

We can bound the expected potential change:
\begin{align}
\label{eq:exp_pot}
&\E[ U(t_{j+3}) - U(t_j) | S_{t_j} ] = \lambda_2 n_1(t_j)^2  - \lambda_3 n_2(t_j) n_3(t_j) + \lambda_3 n_3(t_j)^2 -\\
&- \lambda_1 n_1(t_j) n_2(t_j) + \lambda_1 n_2(t_j)^2 - \lambda_2 n_1(t_j) n_3(t_j) + R(t_j)\nonumber
\end{align}
where
\begin{align*}R(t_j) &= n_1(t_j) \E[ (\Delta_3-\Delta_1)(\Delta_2-\Delta_3) | S_{t_j} ] + n_2(t_j) \E[ (\Delta_1-\Delta_2)(\Delta_2-\Delta_3) | S_{t_j} ] + \\
&+n_3(t_j) \E[ (\Delta_1-\Delta_2)(\Delta_3-\Delta_1) | S_{t_j} ] +
  \E[ (\Delta_1-\Delta_2)(\Delta_3-\Delta_1)(\Delta_2-\Delta_3) | S_{t_j} ]
\end{align*}
and satisfies
$$|R(t_j)| = \bigo(n).$$

\begin{lemma}
\label{lemma:pot_change}
$$\E[ U(t_{j+3}) - U(t_j) | S_{t_j} ] \ge \frac{1}{2} d(t_j)^2 (1-\bigo(n^{-0.9}))$$
\end{lemma}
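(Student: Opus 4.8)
The plan is to start from the exact expression~\eqref{eq:exp_pot} for the expected potential change and isolate its dominant part. Writing $n_i$ for $n_i(t_j)$ and first substituting $\lambda_1=\lambda_2=\lambda_3=1$, the six quadratic terms collapse to the symmetric form
$$
n_1^2+n_2^2+n_3^2-n_1 n_2-n_2 n_3-n_1 n_3=\tfrac{1}{2}\left[(n_1-n_2)^2+(n_2-n_3)^2+(n_1-n_3)^2\right].
$$
I would establish this identity and observe that the right-hand side is a sum of three nonnegative squares. Since the largest of the three pairwise gaps $|n_1-n_2|$, $|n_2-n_3|$, $|n_1-n_3|$ is exactly $n_{\max}(t_j)-n_{\min}(t_j)=d(t_j)$, one of the three squares equals $d(t_j)^2$, so the whole expression is already bounded below by $\tfrac{1}{2}d(t_j)^2$. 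This produces the leading term of the claimed bound.

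Next I would account for the $\lambda_i$ being only close to $1$. Grouping~\eqref{eq:exp_pot} by $\lambda_i$, the expression reads $\lambda_1 n_2(n_2-n_1)+\lambda_2 n_1(n_1-n_3)+\lambda_3 n_3(n_3-n_2)+R(t_j)$. Writing $\lambda_i=1+\eps_i$ with $|\eps_i|=\bigo(n^{-0.9})$ peels off the symmetric main term above, plus a perturbation whose summands each have the form $\eps_i\,n_a(n_a-n_b)$. Because every population satisfies $n_i\le n$, each such summand is $\bigo(n^{-0.9})\cdot\bigo(n^2)=\bigo(n^{1.1})$ in absolute value, and together with the already-established bound $|R(t_j)|=\bigo(n)$ the total additive error is $\bigo(n^{1.1})$.

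Finally I would convert this additive error into the multiplicative factor stated in the lemma. In the relevant window the surrounding analysis guarantees $d(t_j)\ge n/100$, hence $d(t_j)=\Theta(n)$ and $d(t_j)^2=\Theta(n^2)$, so the error $\bigo(n^{1.1})$ is $\bigo(n^{-0.9})\cdot d(t_j)^2$. Combining the three steps gives
$$
\E[\,U(t_{j+3})-U(t_j)\mid S_{t_j}\,]\ge\tfrac{1}{2}d(t_j)^2-\bigo(n^{1.1})=\tfrac{1}{2}d(t_j)^2\bigl(1-\bigo(n^{-0.9})\bigr),
$$
as required.

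The only genuinely delicate point is the bookkeeping in the last two steps: one must verify that every error contribution (both the $\eps_i$ perturbations and the remainder $R(t_j)$) is of order at most $n^{1.1}$, and therefore negligible relative to the $\Theta(n^2)$ leading term. This is exactly where the lower bound $d(t_j)\ge n/100$ on the gap is essential — without a guarantee that the populations are genuinely spread apart, the perturbation terms could swamp the main term and the multiplicative form of the bound would fail. The algebraic heart of the argument, by contrast, is elementary once the symmetric identity and the observation that the sum of squares dominates its single largest term $d(t_j)^2$ are in hand.
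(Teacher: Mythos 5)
Your proof is correct, and it takes a noticeably different route from the paper's. The paper packages the six quadratic terms into a dot product $\vec{u}\circ(\vec{v}-\vec{w})$ with $\vec{u}=(\lambda_2 n_1,\lambda_1 n_2,\lambda_3 n_3)$, $\vec{v}=(n_1,n_2,n_3)$, $\vec{w}=(n_3,n_1,n_2)$, and then argues geometrically: the angle between $\vec{u}$ and $\vec{v}$ is $\bigo(n^{-0.9})$, the angle between $\vec{v}$ and $\vec{v}-\vec{w}$ has $\cos\alpha=\Theta(1)$ because $|\vec{v}-\vec{w}|=\Theta(d(t_j))=\Theta(n)$, and a law-of-cosines style computation converts the small angular perturbation into the multiplicative factor $1-\bigo(n^{-0.9})$. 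You instead expand $\lambda_i=1+\eps_i$ directly, peel off the exact symmetric identity $n_1^2+n_2^2+n_3^2-n_1n_2-n_2n_3-n_1n_3=\tfrac12\sum(n_a-n_b)^2\ge\tfrac12 d(t_j)^2$, and bound every perturbation summand crudely by $|\eps_i|\cdot n^2=\bigo(n^{1.1})$, absorbing it (together with $R(t_j)=\bigo(n)$) into the multiplicative error via $d(t_j)^2=\Theta(n^2)$. Both arguments rest on the same two pillars --- the sum-of-squares identity and the lower bound $d(t_j)=\Theta(n)$ from the surrounding window --- but yours avoids the trigonometry entirely and is, if anything, easier to verify; the paper's geometric phrasing buys nothing extra here since it too must invoke $\cos\alpha=\Theta(1)$, i.e.\ $d(t_j)=\Theta(n)$, to control the $\tan\alpha\sin\eps$ term. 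One cosmetic remark: the window established before the lemma actually guarantees $d(t)\ge\tfrac{8}{1000}n=n/125$ (the constant the paper uses in the subsequent submartingale), not $n/100$, which is the bound at the single step $T$; this changes nothing since only $d(t_j)=\Theta(n)$ is needed.
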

\begin{proof}

Let us denote:
$\vec{u} = (\lambda_2 n_1(t_j), \lambda_1 n_2(t_j), \lambda_3 n_3(t_j))$, $\vec{v} = (n_1(t_j),n_2(t_j),n_3(t_j))$, $\vec{w} = (n_3(t_j),n_1(t_j),n_2(t_j))$.

We can rewrite \eqref{eq:exp_pot} to:
$$\E[ U(t_{j+3}) - U(t_j) | S_{t_j} ] = \vec{u} \circ (\vec{v} - \vec{w}) + R(t_j).$$

We denote $\measuredangle(\vec{u},\vec{v}) = \eps$, $\measuredangle(\vec{v}-\vec{w}, \vec{v}) = \alpha$ and $\measuredangle(\vec{v}-\vec{w}, \vec{u}) = \alpha'$.

Observe, that $|\vec{u} - \vec{v}| \le \bigo(n^{-0.9}) |\vec{v}|$. Thus, from $\sin(\eps) \le \frac{|\vec{u}-\vec{v}|}{|\vec{v}|}$, we have $ \eps = \bigo(n^{-0.9}) $.

Observe, that $|\vec{v}| = |\vec{w}|$ and that $|\vec{v}-\vec{w}| = \Theta(d(t_j))$, thus
$$\cos \alpha = \sin(\frac{1}{2} \measuredangle(\vec{v},\vec{w}) ) = 1/2\frac{ |\vec{v}-\vec{w}|}{|\vec{v}|} = \Theta(1)$$

Thus, since:
$$\vec{v} \circ (\vec{v}-\vec{w}) = \frac{1}{2}( (n_1(t_j)-n_2(t_j))^2 + (n_2(t_j)-n_3(t_j))^2 + (n_3(t_j)-n_1(t_j))^2 ),$$
we can bound:
\begin{align*}&\vec{u}\circ(\vec{v}-\vec{w}) = |\vec{u}| |\vec{v}-\vec{w}| \cos \alpha' = (\vec{v} \circ (\vec{v}-\vec{w})) \frac{|\vec{u}|}{|\vec{v}|} \frac{ \cos \alpha' }{\cos \alpha } \ge \\
&\ge \frac{1}{2} d(t_j)^2 (1-\bigo(n^{-0.9})) \frac{ \cos (\alpha+\eps) }{\cos \alpha } = \frac{1}{2} d(t_j)^2 (1-\bigo(n^{-0.9})) ( \cos \eps - \tan \alpha \sin \eps ) = \\
&= \frac{1}{2} d(t_j)^2 (1-\bigo(n^{-0.9})) ( 1 - \bigo(n^{-0.9})^2 - \Theta(1) \cdot \bigo(n^{-0.9}) ) = \frac{1}{2} d(t_j)^2 (1-\bigo(n^{-0.9}))
\end{align*}
which, since $R(t_j) = \bigo(n)$, gives the desired claim.
\end{proof}

We also observe following bound on potential change:
$$|U(t_{j+3}) - U(t_j)| \le 3 \cdot n^{0.1}\cdot \bigo(n) \cdot \bigo(n) = \bigo(n^{2.1})$$


We define a following submartingale:
$$\mathcal{W}(i) = U(t_{3i}) - i\cdot\frac{1}{2}\cdot\frac{1}{125^2}n^2 \cdot(1-\bigo(n^{-0.9}))$$
(being submartingale follows from application of Lemma~\ref{lemma:pot_change} and observation that $d(t_j) \ge \frac{1}{125}n$).

By application of Azuma's inequality:
$$P(\mathcal{W}(0) - \mathcal{W}(\frac{1}{3} \Delta i)) \ge \Theta(n^{2.75})) \le e^{ - \frac{\Theta(n^{5.5})}{\Delta i \cdot \bigo(n^{4.2})}} = e^{-\Omega(n^{0.3})}$$
which is equivalent to saying, that w.v.h.p.:
$$U(t_{\Delta i}) \ge U(t_0) - \Theta(n^{2.75}) + \Delta i \cdot \Theta(n^2) (1 - \bigo(n^{-0.9}))  > U(t_0).$$

Thus, taking into account original $e^{-n^{\Omega(1)}}$ probability of failure, we get a following:
for any timestep $T$ such that $U(T)$ falls below particular threshold, there exists w.v.h.p. a timestep $T'>T$ such that $U(T') > U(T)$. Thus the expected time to reach the absorbing state (which requires $U = 0$) is at least $e^{n^{\Omega(1)}}$.
\end{proof}
}

\section{The Wolves-and-Sheep (WS) Protocol}\label{sec:majority}

In this section, we investigate the dynamics of the Wolves-and-Sheep LV-type protocol, aiming at replicating dynamics of infection spreading for two different infections and two types of partial immunity to infections.
\InJournal{
\begin{definition}
The \emph{Wolves-and-Sheep} (WS) protocol is the LV-type protocol with $k=4$ populations (denoted $X,Y,x,y$),
given by the following matrix: $P = \left[ \begin{array}{rrrr}0 & 1 & 1 & \frac{1}{2} \\ 1 & 0 & \frac{1}{2} & 1 \\ 0 & 0 & 0 & 0 \\ 0 & 0 & 0 & 0\end{array}\right]$
\end{definition}
}
In the considered setting, initially almost all the population consists of types $x$ and $y$ (susceptible agents known as ``sheep''). A constant number of infected agents of types $X$ and $Y$ (the ``wolves'') are introduced into the population. Following the definition of the protocol, a wolf acting as a predator infects a sheep of a type denoted by the same lower-case with probability $1$, and a sheep of the opposite lower-case type with smaller probability ($1/2$). Thus, in the protocol, population $x$ of sheep has affinity towards $X$ (or resistance for $Y$), and population $y$ has affinity towards $Y$ (or resistance for $X$).

A real-world setting for such a protocol is the following. Initially, a population of users has smartphones of two different manufacturers (say, xPhones and yPhones). Simultaneously, each of the manufacturers introduces a new smartphone model into the market. Upon meeting someone with a new phone, a user will be convinced to upgrade to this model with some probability, which is higher if they are already a user of the older product of the same manufacturer. How will the balance of market share of the two manufacturers change after the whole market has adopted the new phone? It turns out that the market share of the new products grows exponentially in time, until almost all users have upgraded. However, the growth rate for the manufacturer with the initially (slightly) larger market share is larger, hence eventually, it dominates almost all of the market.

We note that in the definition of the WS protocol, we also add some random drift between the species $X$ and $Y$, which does not affect the nature of the process, but allows us to achieve an absorbing state in which eventually only the dominant type is represented.

\begin{theorem}[majority amplification by WS]
Let $n_X(0) = 1$, $n_Y(0) = 1$, $n_x(0) = \Theta(n)$ and $n_y(0) = \Theta(n)$, such that $\frac{n_x(0)}{n_y(0)} = \frac{1+\eps}{1-\eps}$ for some absolute constant $\eps > 0$. Then the system reaches the absorbing state with only population $X$, w.h.p.
\end{theorem}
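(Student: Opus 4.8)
The plan is to reduce the entire question to the behaviour of the two wolf populations, and in particular to the log-ratio $L(t)=\ln\bigl(n_X(t)/n_Y(t)\bigr)$. The key structural observation is that $A_{XY}=P_{XY}-P_{YX}=0$: once the sheep are exhausted, the surviving $X,Y$ dynamics are exactly the unbiased ``life-and-death'' two-type game, in which $n_X$ is a martingale. Hence, if $\sigma$ denotes the first time at which $n_x(\sigma)=n_y(\sigma)=0$, optional stopping shows that $X$ becomes the unique absorbing type with probability exactly $n_X(\sigma)/n$. It therefore suffices to prove that at the sheep-depletion time $\sigma$ we have a polynomial lead $n_X(\sigma)/n_Y(\sigma)\ge n^{\Omega(\eps)}$ w.h.p., since then $n_Y(\sigma)\le n^{1-\Omega(\eps)}$ and $X$ wins with probability $n_X(\sigma)/n\ge 1-n^{-\Omega(\eps)}$. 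A direct computation of the one-step drift gives $\E[\Delta L\mid \text{state}]\approx (n_x-n_y)/(2n^2)$, so the lead is driven entirely by the sheep imbalance $n_x-n_y=\Theta(\eps n)$.

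The core of the argument is the exponential-growth phase. Because the wolves start at $\Oh(1)$ and must grow to a constant fraction of the population, the growth spans $\Theta(\log n)$ units of rescaled time $\tau=t/n$. Throughout this phase I would use the differential-equation (fluid-limit) method to show that the macroscopic variables track the continuous dynamics~\eqref{eq:lvs}; in particular, the sheep ratio $n_x/n_y$ moves only by a constant factor while the wolves remain $o(n)$, since $\frac{d}{d\tau}\ln(n_x/n_y)=(n_Y-n_X)/(2n)$ integrates to $\Oh(1)$ in that regime. Consequently $n_x-n_y$ stays $\Theta(\eps n)$, the drift of $L$ stays $\Theta(\eps)$ per unit $\tau$, and $L$ accumulates a total drift of $\Omega(\eps\log n)$, i.e.\ a polynomial lead $n^{\Omega(\eps)}$. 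The fluctuations of $L$ around this drift are controlled by a Freedman/Azuma bound: after the bootstrap $n_Y\ge\log^2 n$ (it is a submartingale), so each increment of $L$ is at most $2/n_Y\le 2/\log^2 n$, and its predictable quadratic variation integrates to $o(1)$ because wolf interactions are rare precisely when $n_Y$ is small; hence the deviation is $o(\eps\log n)$ w.v.h.p.

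Two regimes remain. First, the bootstrap from $n_X=n_Y=1$ up to $n_X,n_Y\approx\log^2 n$, where $L$ has $\Theta(1)$ increments and concentration is unavailable. The plan is to observe that while both counts stay below $\sqrt n$ the expected number of wolf--wolf interactions is $\sum_t n_X(t)n_Y(t)/n^2=o(1)$, so w.h.p.\ no wolf is ever eaten by a wolf and $n_X,n_Y$ evolve as two nearly independent pure-birth processes with per-capita rates $\rho_X>\rho_Y$, $\rho_X-\rho_Y=\Theta(\eps)$. Coupling to exact Yule processes, the bootstrap ends with $L=\ln(W_X/W_Y)+\Oh(\eps\log\log n)$, where $W_X,W_Y$ are the exponentially distributed martingale limits; the exponential lower tail of $W_X/W_Y$ gives $\Pr[L<-\tfrac{\eps}{100}\log n]\le n^{-\Omega(\eps)}$. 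This is exactly where the guarantee is only w.h.p.\ rather than w.v.h.p. Combining $L\ge-\tfrac{\eps}{100}\log n$ at the end of the bootstrap with the $\Omega(\eps\log n)$ gain of the growth phase yields $L(\sigma)\ge\Omega(\eps\log n)$, and the life-and-death endgame martingale then finishes the proof.

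I expect the main obstacle to be the small-count bootstrap: quantifying that the weaker competitor $Y$ cannot, by chance, open more than a sub-threshold lead before the deterministic drift takes over. This forces a branching-process (rather than concentration) analysis and is responsible for the polynomially-small, as opposed to super-polynomially-small, failure probability. A secondary technical point is verifying that the drift integral $\int(n_x-n_y)\,d\tau$ really is $\Omega(\eps\log n)$: during the long tail of sheep depletion the imbalance reverses sign (once $X$ dominates it consumes $x$-sheep faster than $y$-sheep), but because the sheep densities decay exponentially there, this erosion contributes only $\Oh(1)$ and cannot cancel the $\Omega(\eps\log n)$ accumulated during the main growth phase.
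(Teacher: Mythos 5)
Your proposal is correct in its essential structure, but it takes a genuinely different route from the paper. The paper never tracks the log-ratio $L=\ln(n_X/n_Y)$ directly: its Phase I bounds the two wolf populations \emph{symmetrically} (both end up between $\Omega(\log n)$ and $n^{\Oh(\eps)}$, with no lead established), its Phase II establishes the polynomial separation via interval-by-interval multiplicative growth-rate bounds ($1+(\tfrac34+\tfrac{3\eps}{8})\beta$ versus $1+(\tfrac34-\tfrac{\eps}{8})\beta$ per $\beta n$ steps, proved with a random-graph decomposition of the interactions in each window), and its Phase III is a separate induction for the regime where the sheep are sublinear, before handing off to the same life-and-death martingale you use. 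Your approach replaces Phases II--III by a single drift-plus-martingale analysis of $L$, with $\E[\Delta L]\approx(n_x-n_y)/(2n^2)$ integrating to $\Omega(\eps\log n)$ and the quadratic variation telescoping to $\Oh(1/\min(n_X,n_Y))=o(1)$ once both counts exceed $\log^2 n$; and it replaces Phase I by a Yule-process coupling that controls the law of $L$ at the end of the bootstrap, rather than merely sandwiching each population. What the paper's route buys is that it never needs to reason about the distribution of the ratio at small counts (the $n^{\Oh(\eps)}$ head start that $Y$ may acquire in Phase I is simply absorbed by running Phase II long enough); what your route buys is a cleaner identification of where the $n^{-\Omega(\eps)}$ failure probability comes from (the exponential tail of $W_X/W_Y$) and a one-line treatment of the sheep-depletion endgame via the $\Oh(1)$ erosion bound, which is correct since $\int (n_x+n_y)\,d\tau/n=\Oh(1)$ once the wolves are a constant fraction.

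One small slip worth fixing: the claim that the expected number of wolf--wolf interactions is $o(1)$ ``while both counts stay below $\sqrt n$'' is false — under exponential growth that sum is $\Theta(1)$ at the $\sqrt n$ threshold. It is, however, genuinely $o(1)$ up to the $\log^2 n$ threshold you actually use for the bootstrap, so the argument is unaffected; just state the threshold you need. You should also make explicit the stopping-time convention that the increment bound $|\Delta L|\le 2/\min(n_X,n_Y)$ is only invoked up to the first time $n_Y$ (or $n_X$) drops below $\tfrac12\log^2 n$ — on that event $L$ is already polynomially large (or you are on the discarded $n^{-\Omega(\eps)}$ bad event), so nothing is lost.
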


\InJournal{
\begin{proof}

We will divide the analysis into three consecutive phases.

\paragraph{Phase I.} In the first steps of the process, we experience rather random behavior due to the small populations of the wolves. Whereas we would like to show that $n_X (t) \gg n_Y(t)$, this may initially not be the case: after $cn$ steps, for $c \ll \log n$, with probability roughly $e^{-c}$ we still have $n_X (t) = 1$, while $n_Y(t)$ satisfies exponential growth and in expectation we have $\E n_Y(t) = 2^{\Omega(c)}$. However, we can bound this behavior, showing that at some time $t_f = \Theta(n\log n)$, population $X$ has grown to $n_X(t_f) = \Omega(\log n)$, while $n_Y(t_f)$ is bounded by a small polynomial of $n$, w.h.p.

Specifically, in Phase I we run the process for $f = \log_2 (\alpha \ln n) - 4$ iterations, with the $i$-th iteration, $i=1,2,\ldots,f$ starting at time $t_{i-1}$ and ending at time $t_{i}$ (we assume $t_0 = 0$). The duration of the $i$-th iteration is $\delta t_i = t_i - t_{i-1} = \frac{\alpha  n \ln n}{2^i}$ time steps, with $\alpha = 0.01\eps$.

We start by upper-bounding the size of population $n_Y(t_f)$ at the end of the phase; note that $t_f < 2 \alpha  n \ln n$. We can trivially dominate the growth process for $n_Y(t)$ by a simpler unbounded growth process $n'_Y(t)$ given by the dependence: $n'_Y(t+1) = n'_Y(t) + 1$ with probability $\frac{n'_Y(t)}{n}$ and $n'_Y(t+1) = n'_Y(t)$, otherwise. For any $c_1 \log_2 n < s < \log_2 n -1$, where $c_1 >0$ is an arbitrarily small positive absolute constant, by an application of Chernoff bounds for the negative binomial distribution, we can lower-bound the length of the time interval during which $n'_Y$ grows from $2^s$ to $2^{s+1}$ as at least $n/4$, with probability $1 - n^{-\Omega(1)}$. Thus, we have at time $t_f$:
$$
n_Y(t_f) \leq n'_Y(t_f) \leq 2^{c_1 \log_2 n} 2^{\frac{t_f}{n/4}} < n^{c_1 + 8\alpha} < n^{10 \alpha},
$$
w.h.p, for a suitably chosen value of $c_1$. Likewise, we can show that $n_X(t) < n^{10 \alpha}$ and $n_Y(t) < n^{10 \alpha}$ for all $t\leq t_f$, w.h.p.

Next, conditioned on the above events, we lower-bound the size of $n_X(t_f)$, considering the value $n_X(t_i)$ at the end of each iteration. Note that the probability of an interaction between an individual from $X$ and an individual from $Y$ during Phase I can be upper-bounded as $\Oh(t_f \frac{n^{10\alpha}}{n} \frac{n^{10\alpha}}{n}) = \Otilde (n^{-1 + 20\alpha})$, thus we can assume that no such interaction occurs during Phase I, w.h.p. We now prove by induction that in each of the $f$ iterations of the phase, the value of $n_X$ increases by a multiplicative factor of $2$, w.h.p. By the inductive assumption, fix $1\leq i \leq f$ and let $n_X(t_{i-1}) \geq 2^{i-1}$. Since we have throughout the $i$-th iteration that $n_X(t) < n^{10 \alpha}$ and $n_Y(t) < n^{10 \alpha}$, it follows that $n_x(t) + n_y(t) \geq 1 - n^{20 \alpha} > \frac{1}{2}$. Thus, process $n_X$ during our $i$-th iteration dominates the growth process $n_X'$, such that $n_X(t_{i-1}) = 2^{i-1}$ and $n_X(t+1) = n_X(t) + 1$ with probability $\frac{ 2^{i-1}}{4n}$ and $n'_X(t+1) = n'_X(t)$, otherwise. Once again, applying the Chernoff bound for the negative binomial distribution, we can upper-bound w.h.p.\ the length of the time interval $\tau_i$ during which $n'_X$ grows from $2^{i-1}$ to $2^i$ as:
$$
\tau_i \leq \frac{4n}{2^{i-1}}(c_2\ln n + 2(2^i - 2^{i-1})) = \frac{(8c_2) n\ln n}{2^i} + 8n \leq \frac{(8 c_2+ \alpha/2) n \ln n}{2^i} < \frac{\alpha n \ln n}{2^i} = \delta t_i,
$$
where the first inequality holds with probability $1 - n^{\Omega(1)}$ for any arbitrarily small absolute constant $c_2 > 0$, and the last inequality holds when we choose $c_2 < \alpha /16$. In this way, we have obtained the inductive claim, and overall, we have $n_X(t_f) \geq 2^{f} > \frac{\alpha}{16}\ln  n$, w.h.p. By an analogous analysis, we obtain $n_Y(t_f) \geq 2^{f} > \frac{\alpha}{16}\ln  n$, w.h.p.

\paragraph{Phase II.} In the second phase, we start with populations of wolves $\frac{\alpha}{16}\ln  n < n_X(t_f) < n^{10 \alpha}$ and $\frac{\alpha}{16}\ln  n < n_Y(t_f) < n^{10 \alpha}$. We will show that both populations now experience exponential growth, but with a larger growth rate for population $X$. By the time the total population of wolves has grown to a level of $\Theta(n/\log n)$, we have that $n_X$ asymptotically outgrows $n_Y$.

Let $T$ be the first time step such that at least one of the populations of wolves exceeds the threshold $\frac{n}{\log n}$, i.e., $n_X(T) = \frac{n}{\log n}$ or $n_Y(T) = \frac{n}{\log n}$. Starting from time step $t_f$, we will divide time into intervals of duration $\beta n$, where $\beta > 0$ is a sufficiently small positive constant (we put $\beta = \frac{\eps}{128}$). We will now obtain w.h.p.\ bounds on the growth rate of the two processes during each such time interval, until time $T$. (Since the population of wolves treated as a whole undergoes exponential growth, we have $T = \Oh(n \log n)$ w.h.p., thus we will need to perform union bounds on only a small number of events associated with the $\Oh(\log n)$ time intervals until time $T$.)

Now, fix a time interval $I_t = [t, t + \beta n]$, for $t_f \leq t \leq T$. Within this interval, we construct an undirected graph $G_t = (V, E_t)$, where $V$ is the population, and $E_t$ is the set of all pairs of agents $\{i,j\}$, such that there exists an interaction between $i$ and $j$ during interval $I_t$. We observe that $G_t \sim G'(n,m)$ where $G'(n,m)$ is a variant of the Erd\H{o}s-R\'enyi random graph model where we choose $m = \beta n$ edges uniformly at random, discarding repeated edges. This corresponds to an average edge density of approximately $\frac{2\beta}{n}$. For $i\geq 1$, let $V = S_{\leq 2} \cup S_{>2}$, where $S_{\leq 2}$ and $\cup S_{>2}$ are subsets of agents belonging to connected components of $G_t$ having at most $2$ vertices and more than $2$ vertices, respectively. By performing a random graph analysis, we have w.h.p.:
\begin{align*}
|S_{\leq 2}| \geq& (1-o(1))\left(\left(1-\frac{2\beta}{n}\right)^{n-1} n + 2\frac{2\beta}{n}\left(1-\frac{2\beta}{n}\right)^{2n-3}{n \choose 2}\right) \geq \\
\geq& (1-2\beta) n + 2\beta n (1 - 4\beta) \geq n - 8\beta^2 n.
\end{align*}
Thus, we have $|S_{>2}|<  8\beta^2 n$, w.h.p. Denoting by $Y_{>2} = S_{>2}\cap Y(t)$, by independence of choice of sets $S_{>2}$ and $Y(t)$ and a simple Chernoff bound, we obtain $|Y_{>2}| < 16 \beta^2 n_Y(t)$, w.h.p. We will upper bound the size of set $Y(t + \beta n)$ as follows:
$$
n_Y(t + \beta n) \leq n_Y(t) + \Delta_Y[t,t+ \beta n] + |Y_{>2}|,
$$
where $\Delta_Y[t,t+ \beta n]$ is the number of individuals which joined $Y$ as a result of interactions having individuals from set $Y(t)$ as the initiator in the time interval $[t,t+ \beta n]$. The number of considered interactions can be upper-bounded as $(1 + c_3) \beta n_Y(t)$, for an arbitrarily small positive constant $c_3>0$, w.h.p. In each interaction involving an element of set $Y(t)$, the current size of $Y$ increases by $1$ if the interaction involves an element from $y$, and remains unchanged otherwise. We can thus upper-bound the value $\Delta_Y[t,t+ \beta n]$  through stochastic domination, running a process $Y'$ for $(1 + c_3) \beta n_Y(t)$ steps, starting from $Y'_0 = 0$ and increasing $Y'$ by $1$ with probability $p_Y$ in each step, where:
$$
p_Y = \left(\frac{1}{2} - \eps\right) + \frac{1}{2} \left(\frac{1}{2} - \eps\right) \leq \frac{3}{4} - \frac{\eps}{2}.
$$
Applying another simple Chernoff bound, we eventually obtain w.h.p.:
\begin{align*}
\Delta_Y[t,t+ \beta n] \leq& (1 + c_3) \beta n_Y(t) \cdot (1 + c_3)p_Y \leq\\
\leq& (1+3c_3) \left(\frac{3}{4} - \frac{\eps}{2}\right) \beta n_Y(t) \leq \left(\frac{3}{4} - \frac{\eps}{4}\right) \beta n_Y(t),
\end{align*}
for sufficiently small choice of constant $c_3$. Overall, we have w.h.p.
\begin{equation}
n_Y(t + \beta n) \leq \left( 1 + \left(\frac{3}{4} - \frac{\eps}{4} + 16\beta \right)\beta\right)n_Y(t) \leq \left( 1 + \left(\frac{3}{4} - \frac{\eps}{8}\right)\beta\right)n_Y(t). \label{eq:growthY}
\end{equation}
Now, we lower-bound the rate of growth of $n_X$ in the time interval $[t, t + \beta n]$. Since we have $n_Y \leq \frac{n}{\log n}$ throughout the considered interval, the number of interactions in the time interval $[t, t + \beta n]$ having an element from $X(t)$ as the prey can be upper-bounded as $c_4 \ln n$, w.h.p., for an arbitrarily small positive constant $c_4>0$. We denote by $X^*(t)$ the subsect of individuals from $X(t)$ which do not become prey during the considered time interval; we have $|X^*(t)| \geq n_X(t) - c_4 \ln n \geq (1-\frac{16c_4}{\alpha}) n_X(t)$, for a suitable choice of constant $c_4$. Now, we lower-bound the number of interactions involving an individual from $X^*$ as the initiator during the time interval $[t, t + \beta n]$ as $(1-c_5) \beta X^*(t)$, w.h.p., for an arbitrarily small constant $c_5 > 0$. Since we have $n_x \geq (1 + \eps)\frac{n}{2} - \frac{2 n}{\log n}$ and $n_y \geq (1 - \eps)\frac{n}{2} - \frac{2 n}{\log n}$ throughout the considered Phase II of the process, it follows that the probability of increasing $n_X$ by $1$ in a step in which an element from $X^*$ is the initiator is at least:
$$
p_X = \left(\frac{1}{2} + \eps - \frac{2}{\log n}\right) + \frac{1}{2} \left(\frac{1}{2} - \eps - \frac{2}{\log n}\right) \geq \frac{3}{4} + \frac{\eps}{2} - \frac{3}{\log n}.
$$
Overall, we get that the number of new elements joining $X$ in time interval $[t, t + \beta n]$  is at least $(1 -c_6)\beta \left (\frac{3}{4} + \frac{\eps}{2}\right) X^*(t)$, w.h.p., for an arbitrarily small positive constant $c_6 >0$. Of these elements, we likewise have that (an arbitrarily small) constant proportion will be lost due to interactions initiated by $Y$ within the interval $[t, t + \beta n]$, whereas the remaining ones will belong to $X(t + \beta n)$. Overall, we obtain w.h.p.:
\begin{align}
&n_X(t + \beta n) \geq X^*(t) + (1 -c_7)\beta \left (\frac{3}{4} + \frac{\eps}{2}\right) X^*(t) \geq \nonumber\\
&\left(1 + (1 -c_7)\beta \left (\frac{3}{4} + \frac{\eps}{2}\right)\right) \left(1-\frac{16c_4}{\alpha}\right) n_X(t) \geq\left( 1 + \left(\frac{3}{4} + \frac{3\eps}{8}\right)\beta\right)n_X(t)\label{eq:growthX}
\end{align}
for some suitable choice of arbitrarily small positive constant $c_7 > 0$ and of $c_4$.

Equations~\eqref{eq:growthY} and \eqref{eq:growthX} provide a separation of the growth rates of processes $X$ and $Y$.

Observe, that for small enough $\beta$, we have (since $\frac{3/4+ \varepsilon3/8}{3/4 - \varepsilon/8} < 1+\frac23 \varepsilon$):
$$\frac{n_X(t+\beta n)}{n_X(t)} \ge \left(\frac{n_Y(t+\beta n)}{n_Y(t)}\right)^{1+\frac23 \varepsilon}.$$

Thus, we eventually obtain at the end of Phase II, w.h.p.:
$n_X(T) = \frac{n}{\log n}$, $n_Y(T) = \bigo( n^{0.1 \varepsilon + 1/(1+2/3 \varepsilon)} ) = \bigo( n^{1 - \varepsilon \frac{3}{10}})$ for $\varepsilon \le 1$.

\paragraph{Phase III.}
We are now in a situation that, w.h.p., in some timestep $T_0$, $n_X(T_0) \ge \frac{n}{\log n}$, $n_Y(T_0) = \bigo(n^{\kappa})$ for some constant $\frac{3}{4} \le \kappa < 1$. We can safely assume, that $n_Y(T_0) = \Theta(n^{\kappa})$, since by a simple coupling argument, swapping some of $x$ and $y$ for $Y$ cannot help with reaching absorbing state of only $X$.

We will analyse the speed of growth of both $n_X(t)$ and $n_Y(t)$.

\begin{claim}
\label{lem:12power}
\textit{There is increasing sequence of steps $T_0,T_1,T_2,\ldots,T_j$ such that $T_j$ is the first element of this sequence satisfying $n_x(T_j)+n_y(T_j) < n^{3/4}$, such that, w.h.p.:
$\frac{n_Y(T_i)}{n_Y(T_0)} \le \left(\frac{n_X(T_i)}{n_X(T_0)}\right)^{12}$, and that $n_X(T_i) =  \Thetatilde(n_X(T_0))$. }
 \end{claim}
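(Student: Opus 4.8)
The plan is to control the single quantity $M(t) = \ln n_Y(t) - 12\ln n_X(t)$, whose non-increase across the checkpoints is exactly equivalent to the desired inequality $n_Y(T_i)/n_Y(T_0) \le (n_X(T_i)/n_X(T_0))^{12}$. First I would compute the one-step drift. Writing out the transition probabilities of the WS matrix on $K_n$, the only surviving contributions to $\E[\Delta n_X]$ and $\E[\Delta n_Y]$ come from predation on the sheep (the $X$--$Y$ drift terms $P_{XY}=P_{YX}=1$ cancel in expectation), giving $\E[\Delta n_X \mid \text{state}] = \frac{n_X}{n^2}(n_x + \tfrac12 n_y)$ and $\E[\Delta n_Y\mid\text{state}] = \frac{n_Y}{n^2}(n_y + \tfrac12 n_x)$. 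Passing to logarithms (the second-order corrections are negative for $\ln n_Y$ and, for $\ln n_X$, of relative order $1/n_X = \Otilde(1/n)$, hence negligible against the first-order term), I obtain
\begin{equation*}
\E[\Delta M \mid \text{state}] \le \frac{1}{n^2}\Big[(n_y + \tfrac12 n_x) - 12(n_x + \tfrac12 n_y)\Big] = -\frac{1}{n^2}\Big(5 n_y + \tfrac{23}{2} n_x\Big) \le -\frac{5(n_x+n_y)}{n^2}.
\end{equation*}
The role of the exponent $12$ is precisely that this stays strictly negative for \emph{every} split of the sheep between types $x$ and $y$ (even exponent $2$ suffices for the bare first-order drift, but the larger margin comfortably absorbs the error terms and, later, the noise). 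This robustness matters because $x$ is eaten twice as fast as $y$, so the ratio $n_x/n_y$ drifts during the phase and we never want to track it.

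Next I would pin down $n_X$ and fix the checkpoints. Since $X$ only gains agents by converting sheep and only loses them to $Y$, and the total sheep mass is at most $n$ while the number of $Y$-eats-$X$ events over the whole phase is $\Otilde(n^\kappa)$ (using $n_Y = \Otilde(n^\kappa)$, which the inequality certifies inductively on earlier intervals), $n_X$ rises by at most a factor $\log n$ above $n_X(T_0)\ge n/\log n$ and falls only by a lower-order amount; this yields $n_X(t) = \Thetatilde(n_X(T_0))$ throughout, the second assertion of the claim. I would then subdivide Phase III into $\Oh(\log n)$ consecutive intervals $[T_i,T_{i+1}]$ delimited by the halving of the total sheep population $n_x+n_y$, with $T_j$ the first checkpoint at which $n_x+n_y<n^{3/4}$; on each interval the sheep mass stays $\Theta(s_i)$, and since sheep decay at relative rate $\Theta(n_X/n^2)=\Thetatilde(1/n)$, each interval lasts $\tau_i=\Thetatilde(n)$ steps.

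Finally I would run a supermartingale concentration argument interval by interval. The one-step increment of $M$ is dominated by the change of $\ln n_Y$, which is at most $2/n_Y \le 2/n^{3/4}$ since $n_Y\ge n^{3/4}$ up to $T_j$ (the $\ln n_X$ increment being the smaller $\Otilde(1/n)$); hence $M(t)-\sum_{\tau<t}\E[\Delta M]$ is a martingale with increments $\Oh(n^{-3/4})$. Azuma's inequality over the $\tau_i=\Thetatilde(n)$ steps of interval $i$ bounds the deviation by $\Otilde(\sqrt{\tau_i}\,n^{-3/4}) = \Otilde(n^{-1/4})$ w.v.h.p., whereas the accumulated negative drift is at least $\Omega(\tau_i\,s_i/n^2) = \Omegatilde(s_i/n)$; these compare favorably exactly when $s_i\ge n^{3/4}$, i.e.\ throughout the phase, so $M(T_{i+1})\le M(T_i)$ w.h.p. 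Telescoping over the $\Oh(\log n)$ intervals with a union bound gives $M(T_i)\le M(T_0)$ for all $i$, which is the claimed inequality.

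The main obstacle is this last concentration step near the end of the phase: there the drift per step degrades to $\Thetatilde(n^{-5/4})$ while the per-step fluctuation of $\ln n_Y$ remains as large as $n^{-3/4}$, so a naive application of Azuma over all of Phase III at once would fail. It is precisely the subdivision into sheep-halving intervals --- together with the stopping threshold set at $n^{3/4}$ rather than lower --- that keeps the accumulated drift ahead of the fluctuations on every interval. Some additional care is needed because the lengths $\tau_i$ are random, which I would handle by first establishing the w.h.p.\ bound $\tau_i=\Thetatilde(n)$ and then invoking a stopping-time version of Azuma's inequality.
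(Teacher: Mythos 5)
Your proposal is correct in substance but takes a genuinely different route from the paper's. The paper never introduces a Lyapunov quantity: it defines each checkpoint $T_{i+1}$ by waiting for exactly $n_Y(T_i)$ ``wolf attacks sheep'' events after $T_i$ (a window of length $\approx 4n^{1/4}n_Y(T_i)$ steps), shows that within such a window all four populations move by at most a constant factor (the $X$--$Y$ exchanges being a mean-zero random walk whose displacement $(T_{i+1}-T_i)^{2/3}$ is $o(n_Y(T_i))$), and then applies Chernoff bounds to the split of those conversion events between $Y$ (share at most $\approx 2n_Y/n_X$) and $X$ (share at least $\approx 1/3$), yielding $n_Y(T_{i+1})\le n_Y(T_i)(1+3q_i)$ and $n_X(T_{i+1})\ge n_X(T_i)(1+q_i/4)$ with $q_i=n_Y(T_i)/n_X(T_i)$, whence the exponent $12=3/(1/4)$. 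Your supermartingale analysis of $M=\ln n_Y-12\ln n_X$ over sheep-halving intervals buys something the paper's accounting does not: it exposes that the critical exponent is $2$ (since the instantaneous ratio $(n_y+\tfrac12 n_x)/(n_x+\tfrac12 n_y)$ of the logarithmic growth rates never exceeds $2$), so that $12$ is merely a comfortable margin, whereas in the paper $12$ is forced by the slack constants in its Chernoff estimates; it also avoids tracking the drifting split $n_x/n_y$ entirely. The one place you need to shore up is the last interval, where $s_i=\Theta(n^{3/4})$: plain Azuma gives a fluctuation $\Otilde(\sqrt{\tau_i}\,n^{-3/4})=\Otilde(n^{-1/4})$ that only \emph{matches}, up to polylogarithmic factors, the accumulated drift $\Omegatilde(s_i/n)=\Omegatilde(n^{-1/4})$, so the per-interval inequality $M(T_{i+1})\le M(T_i)$ is borderline rather than won there. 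This is easily repaired in either of two ways: measure the deviation from $T_0$ rather than interval by interval (the cumulative drift is dominated by the first interval, where it is already $\Omegatilde(1)$, dwarfing the total fluctuation $\Otilde(n^{-1/4})$ at every checkpoint), or replace Azuma by a variance-based (Freedman-type) bound, noting that $\Delta\ln n_Y$ is nonzero only with probability $\Otilde(n_Y/n)$ per step, so the total variance over an interval is $\Otilde(1/n_Y)=\Otilde(n^{-3/4})$ and the deviation is $\Otilde(n^{-3/8})\ll n^{-1/4}$. With either fix the argument closes.
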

\textit{Proof.}
We proceed by induction on $i$:

We assume we have a moment $T_i \ge T_0$. From the inductive assumption: $n_Y(T_i) = \Thetatilde(n_Y(T_0)) = \Thetatilde(n^\kappa)$, and that $n_x(T_i)+n_y(T_i) = \Omega(n^{3/4})$.
We pick as $T_{i+1}$ a timestep such that in timesteps $\{T_i,\ldots,T_{i+1}\}$ there are $n_Y(T_i))$ events of $X$ or $Y$ attacking $x$ or $y$.

Since $n_x(t) + n_y(t) \ge \frac12 n^{3/4}$, at each step with probability at least $(\frac12-o(1)) n^{-1/4}$ there is attack of $X$ or $Y$ on $x$ or $y$. Thus, by Chernoff bound, following is w.h.p.:
$$T_{i+1} - T_i = 4 n^{1/4} n_Y(T_i)$$

Thus, the attacks of $X$ on $Y$ and of $Y$ on $X$ (a random walk, so to speak) accounts, w.h.p., for a change of population in the timesteps $T_i,\ldots,T_{i+1}$ (denoted as $\Delta_i$):
$$ \Delta_i \le (T_{i+1} - T_i) ^ {2/3} = \Otilde( n^{\kappa} \cdot n^{1/4} )^{2/3} = o(n^{\kappa}) = o(n_Y(T_i))$$

Observe that for any $T_i \le t \le T_{i+1}$:
$$n_Y(t) \le 2n_Y(T_i) + \Delta_i \le 2.01n_Y(T_i),$$
$$n_Y(t) \ge n_Y(T_i) - \Delta_i \ge 0.99 n_Y(T_i),$$
$$n_X(t) \le n_X(T_i) + n_Y(T_i) + \Delta_i \le 1.01 n_X(T_i),$$
$$n_X(t) \ge n_X(T_i) - \Delta_i \ge 0.99 n_X(T_i).$$

Thus, if in the timestep $t$ one of the members of populations  $x$ or $y$ is drawn as a second participant and one of $X$ or $Y$ is drawn as a first participant, we have increase in population of $X$ with probability at least $(1-p_i)/2$ and increase in population of $Y$ with probability at most $p_i$, for
$$p_i \le \frac{n_Y(t)}{n_X(t)+n_Y(t)} \le 2.03 \cdot \frac{n_Y(T_i)}{n_X(T_i)}.$$

So, by Chernoff bound, we have, w.h.p:
\begin{equation}\label{eq:bla1}n_Y(T_{i+1}) \le n_Y(T_i) + n_Y(T_i) \cdot 1.1 \cdot p_i + \Delta_i \le n_Y(T_i) \cdot (1+3\frac{n_Y(T_i)}{n_X(T_i)}) \end{equation}
\begin{eqnarray}\label{eq:bla2}n_X(T_{i+1}) \ge& n_X(T_i) - \Delta_i + n_Y(T_i) \cdot 0.9 \cdot (1-p_i)/2 \ge \nonumber\\
\ge& n_X(T_i) \cdot \left(1 + \frac{1}{3}\frac{n_Y(T_i)}{n_X(T_i)} - \frac{1}{3}\left(\frac{n_Y(T_i)}{n_X(T_i)}\right)^2\right)\end{eqnarray}

From the inductive assumption:
\begin{equation}\label{eq:bla3}\frac{n_Y(T_{i})}{n_X(T_{i})} = \Thetatilde(n^{\kappa-1}). \end{equation}

Thus, we observe that (by \eqref{eq:bla2} and \eqref{eq:bla3}):
$$n_X(T_{i+1}) \ge n_X(T_i) \cdot \left(1 + \frac{1}{4}\frac{n_Y(T_i)}{n_X(T_i)}\right)$$
and using \eqref{eq:bla1}:
$$\frac{n_Y(T_{i+1})}{n_Y(T_{i})}  \le \left(\frac{n_X(T_{i+1})}{n_X(T_{i})} \right)^{12}$$
which completes the inductive proof of claim.

Thus, we can now complete the proof of main theorem.
By Lemma~\ref{lem:12power}, at moment $T_j$ we will have following:
$n_X(T_j) = \Theta(n)$, $n_Y(T_j) = \bigo( n^{\kappa} \cdot (\log n)^{12} )$, $n_x(T_j)+n_y(T_j) = \bigo(n^{3/4})$.
At this moment, we can as well assume that all of $x$ and $y$ becomes $Y$, giving us a win for $X$ in following life-death game of wolves with probability at least $1-\frac{n^{3/4}+n^{\kappa}(\log n)^{12}}{n}$, which holds w.h.p.
\end{proof}
}

\section{The Rock-Paper-Scissors (RPS) Protocol}\label{sec:rps}

In this section, our goal is to show that the RPS protocol reaches each of absorbing states with almost equal probability, given that the initial population of each species is linear (or slightly sub-linear) in $n$.
\InJournal{We recall that the probability matrix of the RPS protocol is given by Definition~\ref{def:rps}. The corresponding matrix $A$ is the following: $A=\left[ \begin{array}{rrr}0 & 1 & -1 \\ -1 & 0 & 1 \\ 1 & -1 & 0\end{array}\right]$.} The RPS protocol admits a cyclic symmetry of behavior with respect to its species. For each of the species $a\in \{1,2,3\}$, the relative change $\Delta x_a$ in population of this species in the given step can be expressed as:
\begin{equation}\label{eq:deltana}
\Delta x_a = \frac{1}{n} \cdot \Delta n_a =
\begin{cases}
+1/n, &\text{ with probability }x_{a}x_{a+1},\\
-1/n, &\text{ with probability }x_{a+2}x_{a},\\
0, &\text{ otherwise},
\end{cases}
\end{equation}
where the population of at most one species changes in every step. The indices of populations are always $1, 2,$ or $3$, and other values should be treated as $\modd 3$, in the given range. We also introduce the continuous dynamics $\cont x(t)$ corresponding to the RPS process, given for each species by the differential equation:
\begin{equation}\label{eq:contrps}
\frac{d \cont x_a}{dt} = \frac{\cont x_a}n (\cont x_{a+1} - \cont x_{a+2}),
\end{equation}
which corresponds precisely to the continuous dynamics \eqref{eq:lv}, up to an additional time-scaling factor of $n$ introduced for easier comparison with the discrete process.
In all further considerations, we set the potential $U$ used in the analysis as:
\InJournal{
$$U(x)= \sum_{i=1}^3 \ln x_i = \ln (x_1 x_2 x_3).$$
}
\InConference{
$U(x)= \ln x_1 + \ln x_2 + \ln x_3.$
}
Lines $U = \const$ correspond to orbits in the continuous setting~\eqref{eq:contrps}.


\begin{theorem}[coin-flip consensus property of RPS]\label{thm:rps}
For any state $x$ such that $x_a > n^{-0.002}$ for all $a\in\{1,2,3\}$, the probability of the system reaching any one of its three possible absorbing states is $\frac{1}{3} \pm \Otilde(n^{-0.05})$.
\end{theorem}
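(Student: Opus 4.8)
The plan is to exploit the threefold cyclic symmetry of RPS so as to reduce the theorem to a single equidistribution statement. Write $p_a(x)$ for the probability of absorption at the monochromatic state of type $a$ started from $x$, and let $\sigma$ be the relabelling $a\mapsto a+1$, acting on configurations by $\sigma x=(x_3,x_1,x_2)$. Since the protocol is invariant under $\sigma$, one has $p_a(x)=p_{a+1}(\sigma x)$, equivalently $p_a(\sigma x)=p_{a-1}(x)$. Consequently, for any stopping time $\tau$ occurring strictly before absorption, if the laws of $X_\tau$ and $\sigma X_\tau$ are within total variation $\eta$, then $p_a=\E[p_a(X_\tau)]$ and $\E[p_a(\sigma X_\tau)]=p_{a-1}$ differ by at most $\eta$; so $p_1,p_2,p_3$ are mutually within $O(\eta)$, and with $p_1+p_2+p_3=1$ this gives $p_a=\tfrac13\pm O(\eta)$. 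Hence it suffices to exhibit such a $\tau$ with $\eta=\Otilde(n^{-0.05})$. Because $\sigma$ preserves the potential $U$ exactly and merely rotates the configuration about the center $x^{\ast}=(\tfrac13,\tfrac13,\tfrac13)$, approximate $\sigma$-invariance of $\mathrm{Law}(X_\tau)$ is precisely the statement that an angular coordinate of $X_\tau$ is equidistributed modulo $2\pi/3$.

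To make the angular coordinate precise I would introduce the complex order parameter $Z(x)=x_1+\omega x_2+\omega^2 x_3$ with $\omega=e^{2\pi i/3}$, which obeys $Z(\sigma x)=\omega Z(x)$, so that the phase $\Phi=\arg Z$ is advanced by $2\pi/3$ under $\sigma$, and uniformity of $\Phi_\tau$ modulo $2\pi/3$ is exactly $\E[e^{3mi\Phi_\tau}]=\Otilde(n^{-0.05})$ for every nonzero integer $m$. Linearising \eqref{eq:contrps} at $x^\ast$ one sees that the circulant generator has a purely imaginary eigenvalue of modulus $\Theta(1/n)$ on the mode $Z$, so a single discrete step gives $\E[Z_{t+1}\mid\mathcal F_t]=(1+i\,\Theta(1/n)+O(1/n^2))Z_t+(\text{nonlinear terms})$: the modulus $|Z_t|$ is a proxy for the radius tied to $U_t$ (the center being a repeller for the noisy process), while $\Phi_t$ advances by a deterministic increment of order $1/n$ plus a martingale-difference fluctuation. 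The two bookkeeping facts I need are: (a) from the drift computation behind Theorem~\ref{thm:lvtime}, in the regime where all species are $\Theta(n)$ the process $U_t$ is a random walk with downward drift $\Theta(1/n^2)$ and per-step variance $\Theta(1/n^2)$, so that $U$ must fall by $\Theta(\log n)$ before any coordinate reaches $O(1/n)$, giving an admissible window of $\Theta(n^2\log n)$ steps (here the threshold $x_a>n^{-0.002}$ guarantees we begin with such a window); and (b) throughout this window $|Z_t|=\Theta(1)$, so the conditional per-step variance of $\Phi_t$ is $\Theta(1/n^2)$.

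The core is then a characteristic-function (martingale-type) estimate showing that accumulated phase noise randomises $\Phi$ modulo $2\pi/3$. Taking $\tau$ to be the first step at which $U_t$ reaches a suitable target level, I expand $\E[e^{3mi\Phi_\tau}]$ step by step: each factor satisfies $|\E[e^{3mi\Delta\Phi_t}\mid\mathcal F_t]|\le 1-c\,m^2\Var(\Delta\Phi_t\mid\mathcal F_t)+(\text{cubic error})\le \exp(-c'\,m^2\,\Var(\Delta\Phi_t\mid\mathcal F_t))$, so that $|\E[e^{3mi\Phi_\tau}]|\le\E[\exp(-c'm^2 V_\tau)]$ with $V_\tau=\sum_t\Var(\Delta\Phi_t\mid\mathcal F_t)$. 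Using the window length $\Theta(n^2\log n)$ and per-step variance $\Theta(1/n^2)$ yields $V_\tau=\Omega(\log n)$ w.h.p., whence $|\E[e^{3mi\Phi_\tau}]|=\Otilde(n^{-0.05})$ for every nonzero $m$; summing the decaying Fourier coefficients gives the desired discrepancy $\eta=\Otilde(n^{-0.05})$, and feeding this back into the symmetry reduction of the first paragraph closes the argument. I expect the \emph{main obstacle} to be the uniform control of the phase increments and of $V_\tau$ across the entire simplex interior, especially on near-boundary orbits (allowed by the threshold $n^{-0.002}$): there the continuous flow slows drastically near the corners, $|Z_t|$ is a less clean proxy for the radius, and population fluctuations become relatively large, so that the per-step variance lower bound, the cubic error terms in the characteristic-function product, and the guarantee that $\tau$ precedes absorption must all be handled carefully — ideally via action-angle variables of the integrable system \eqref{eq:contrps} or a rescaling of time that makes the angle advance uniform. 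Pinning down $V_\tau=\Omega(\log n)$ with a large enough constant is exactly what determines the exponent $0.05$ and the admissibility threshold, and is where the real work lies.
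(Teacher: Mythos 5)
Your symmetry reduction is sound and matches the paper's starting point (both arguments reduce the theorem to comparing absorption probabilities from $x$ and from its cyclic shift $\sigma x$), but the route you take from there has a genuine gap at its central step: the passage from ``the angular coordinate $\Phi_\tau$ is equidistributed modulo $2\pi/3$ in the sense of a few small Fourier coefficients'' to ``$\mathrm{Law}(X_\tau)$ and $\mathrm{Law}(\sigma X_\tau)$ are within total variation $\eta=\Otilde(n^{-0.05})$''. The law of $X_\tau$ is an atomic measure on a lattice of spacing $1/n$, and $\sigma$ permutes that lattice; total-variation closeness to the rotated law therefore requires matching \emph{individual point masses}, i.e.\ a local limit theorem at the lattice scale, jointly in the radial coordinate $U_\tau$ and the angle (these are correlated through the dynamics, so controlling the angular marginal alone is not enough). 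Your characteristic-function estimate cannot deliver this: ``summing the decaying Fourier coefficients'' fails because the coefficients of an atomic measure do not decay in $m$, and your own cubic error term accumulates to $\Otilde(m^3 T/n^3)=\Otilde(m^3/n)$ over the window $T=\Theta(n^2\log n)$, so the bound is vacuous already for $m\gtrsim n^{1/3}$ --- far below the frequencies $\sim n$ needed to resolve the lattice. What you actually obtain is a weak (Wasserstein-type) closeness of the two laws, and to convert that into closeness of $p_a$-values you would need $p_a(\cdot)$ to be continuous in that weak metric with a usable modulus. But the regularity of the harmonic functions $p_a$ is precisely the hard content of the theorem; assuming it makes the argument circular, and proving it is what the paper's machinery is for.

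Concretely, the paper avoids this issue by never comparing laws at a stopping time: it runs the walk from $x$ for roughly $1/3$ of the continuous orbit so that it lands $\Otilde(n^{-0.5+1.5\gamma})$-close to $\sigma^{-1}x$ in both $d_\infty$ and $d_U$ (Lemmas~\ref{lem:convU}--\ref{lem:infdist}), and then builds an explicit five-phase coupling (Lemma~\ref{lem:coupling}) of the walks from $x$ and from $\sigma^{-1}x$ that forces them to coalesce with probability $1-\Otilde(n^{-\eps})$, using the anti-concentration half of Lemma~\ref{lem:martingale} to drive the potential difference of the two walks to zero and a final lazy-random-walk coupling to match populations exactly. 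This directly bounds $|p_r(x)-p_r(\sigma^{-1}x)|$ with no appeal to regularity of $p_r$. If you want to salvage your Fourier approach, the missing ingredient is exactly such a coupling-based continuity estimate for $p_a$ (at which point the Fourier step becomes redundant), or else a genuine joint local CLT for $(U_\tau,\Phi_\tau)$ at resolution $1/n$, which is a substantially harder statement than what you sketch. Two smaller points: your claim $|Z_t|=\Theta(1)$ fails for initial data near the centre $(1/3,1/3,1/3)$, which the hypothesis $x_a>n^{-0.002}$ permits (the paper handles this by first letting $U$ drop below $-12$); and your window-length and per-step-variance estimates degrade on near-boundary orbits, as you note, so the constant in $V_\tau=\Omega(\log n)$ is not uniform over the regime the theorem covers.
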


\InJournal{The rest of this section is devoted to the proof of Theorem~\ref{thm:rps}.}
The proof of the theorem relies on the observation that the discrete RPS protocol approximately follows the limit cycle (orbit) of its continuous version. More precisely, we will observe that for an appropriately chosen starting state $x(0) = (x_1, x_2, x_3)$ of the system, there is a time moment $t$ (corresponding to an approximate traversal of $1/3$ of the limit cycle) for which the state is given as $x(t) = (x_3 + \Delta x_3, x_1 + \Delta x_1, x_2 + \Delta x_2)$, with $\Delta x_i$ sufficiently small. We will then use this to observe that if the probability of reaching any fixed absorbing state $i$ from state $(x_1, x_2, x_3)$ is $p$ and of reaching absorbing state $i$ from state $(x_1 + \Delta x_1, x_2 + \Delta x_2, x_3 + \Delta x_3)$ is $p_\Delta$, then by cyclic symmetry of populations, the probability of reaching state $(i+1)\modd 3$ from state $x(t)$ is also $p_\Delta$. If $p \approx p_\Delta$, then state $x(0)$ leads to absorbing states $i$ and $(i+1)\modd 3$ with almost the same probability.

At an intuitive level, the main arguments of the proof are the following. To show that the probability of reaching an absorbing states are almost the same for points $x(0) = (x_1,x_2,x_3)$ and $y(0) = (x_3,x_1,x_2)$, we perform a coupling of walks starting from $x(0)$ and $y(0)$. Here, coupling of Markovian processes is understood in the usual sense (cf.~e.g.\cite{LPW06}), though it is worth noting that since we are interested only in reaching an absorbing state (and not measuring the number of steps after which such a state is reached), we can in some steps of the coupling decide to delay one of the walks, allowing the other to run, provided that each of the processes remains unbiased. For simplicity, suppose that a walk $x$ is located at a point at which all populations are of linear size in $N$ and the difference in size between the largest and smallest population is also linear in $n$ (e.g., $U(x) = -20$.) The behavior of the (undelayed) walk $x$ under our evolution in the next $t$ steps (for $t$ sufficiently small with respect to $n$) can be seen as a superposition of three types of motion:
\begin{enumerate}
\item Propagation along the trajectory $U(x) = \const$ at a speed approximately given by the evolution of the continuous process~\eqref{eq:contrps}. The Euclidean distance traversed in a single step is $\Theta(1/n)$, or $\Theta(t/n)$ over $t$ steps.
\item Random drift along the trajectory $U(x) = \const$ (slowing or accelerating with respect to the average speed). Over a short interval time of length $t$, this drift shifts the point by $\pm \Otilde(\sqrt t/n)$ along its trajectory.
\item Random drift orthogonal to the trajectory $U(x) = \const$. Over a short interval time of length $t$, this drift shifts the potential $U$ of the point by $\pm \Otilde(\sqrt t/n)$.
\end{enumerate}
The analysis of the process is somewhat technical, since the two types of random drift have slightly biased averages\InJournal{ (in particular, in Section~\ref{sec:polynomial} we took advantage of the fact that the orthogonal drift has an outward bias over very long intervals of time)}, the probabilities of different moves are changing over time, and the motion in different directions is not independent. The drift and the propagation speed also depend on the relation between the maximum and minimum of the sizes of the three populations, which change in time. Nevertheless, the random drifts of our process behaves closely enough to a combination of independent random walks that we can deal with them with a martingale-type analysis. \InJournal{We present the necessary tools in Subsection~\ref{sec:rpstech}, formulating them so that the claims hold even when the smallest population is sublinear. After that, in Subsection~\ref{sec:rpscoupling} we formalize our coupling framework and perform the claimed coupling of the points $x$ and $y$ (Lemma~\ref{lem:coupling}).}

\InJournal{
\subsection{Technical lemmas}\label{sec:rpstech}
}

To begin with, we recall the following simple property of a simple random walk on a line: a walk starting from point $0$ and proceeding for $T$ steps is confined to an interval of the form $[-\Otilde(\sqrt T), \Otilde(\sqrt T)]$ w.v.h.p, but is likely to hit all points at a distance of $o(\sqrt T)$ from $0$. By a Doob martingale analysis, we state a generalization of this property applicable to a wider class of processes (to the best of our knowledge requiring weaker assumptions than those in bounds given in the literature).

\begin{lemma}[concentration and anti-concentration]\label{lem:martingale}
Let $\eps > 0$ be an absolute constant. For $t = 1,\ldots,T$, let $X(t) = \sum_{\tau=1}^t \delta(\tau)$, be a random process which satisfies the following condition for some non-negative parameters $c, \alpha, \sigma$: If $|X(t-1)| \leq c\sqrt{T^{1+\eps}} + \alpha T$, then:
\begin{itemize}
\item[(i)] $\delta(t)$ is a bounded random variable, regardless of the history of the process: $$\left|\delta(t)\left|\right.\delta(t-1), \delta(t-2), \ldots, \delta(0)\right| \leq c,$$
\item[(ii)] The expectation of $\delta(t)$ is bounded, regardless of the history of the process: $$|\E (\delta(t) | \delta(t-1), \delta(t-2), \ldots, \delta(0))| \leq \alpha,$$
\item[(iii)] The standard deviation of $\delta(t)$ is lower-bounded, regardless of the history of the process: $$|\Var (\delta(t) | \delta(t-1), \delta(t-2), \ldots, \delta(0))| \geq \sigma^2.$$
\end{itemize}
Then:
\begin{itemize}
\item (Concentration bound) With probability at least $1 - e^{-\Omega(T^\eps)}$ we have:
\begin{equation}\label{eq:conc}
\forall_{t\in\{1,\ldots,T\}}\ |X(t)| \leq  c\sqrt{T^{1+\eps}} + \alpha T.
\end{equation}
\item (Anti-concentration bound)
For any $D \geq c$, with probability \\at least $1-82 \left(\frac{c^2}{\sigma^2}\frac{(D+\alpha T) \ln T}{\sigma \sqrt T}\right)^{2/3}$, we have:
\begin{equation}\label{eq:anticonc}
\exists_{t\in\{1,\ldots,T\}}\ X(t) \geq D.
\end{equation}
\end{itemize}
\end{lemma}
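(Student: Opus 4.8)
The plan is to prove the two bounds separately, in each case reducing the process to a genuine martingale on which standard optional-stopping and Azuma machinery applies, while carefully respecting the fact that assumptions (i)--(iii) are only guaranteed to hold inside the band $|X(t-1)|\le M$, where I write $M := c\sqrt{T^{1+\eps}}+\alpha T$.

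For the \emph{concentration bound} I would use a truncation argument to remove the conditional nature of the hypotheses. Define a modified increment by zeroing it out once the running sum has left the band: set $\hat\delta(t)=\delta(t)$ if $|\hat X(t-1)|\le M$ and $\hat\delta(t)=0$ otherwise, where $\hat X(t)=\sum_{\tau\le t}\hat\delta(\tau)$ (well-defined by induction on $t$). By construction $\hat X$ agrees with $X$ until the first exit from the band, and since $|\delta|\le c$ already forces $\alpha\le c$, the truncated increments satisfy $|\hat\delta(t)|\le c$ and $|\E[\hat\delta(t)\mid\mathcal F_{t-1}]|\le\alpha$ \emph{unconditionally}. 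Hence $Y(t)=\sum_{\tau\le t}\bigl(\hat\delta(\tau)-\E[\hat\delta(\tau)\mid\mathcal F_{\tau-1}]\bigr)$ is a martingale with increments bounded by $c+\alpha\le 2c$, and Azuma's inequality gives $\Pr[\max_{t\le T}|Y(t)|\ge c\sqrt{T^{1+\eps}}]\le e^{-\Omega(T^\eps)}$. On the complementary event one has $|\hat X(t)|\le |Y(t)|+\alpha T<M$ for all $t$, so the band is in fact never left and $\hat X\equiv X$, which yields \eqref{eq:conc}.

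For the \emph{anti-concentration bound} I would center the walk by the compensator $A(t)=\sum_{\tau\le t}\E[\delta(\tau)\mid\mathcal F_{\tau-1}]$ (so $|A(t)|\le\alpha T$), obtaining the martingale $Y(t)=X(t)-A(t)$; since $X\ge Y-\alpha T$, it suffices to show $Y$ reaches $D':=D+\alpha T$. I would place two barriers, an upper one at $D'$ and a lower one at $-H$ with $H:=\tfrac12\sigma\sqrt T$, and let $\tau$ be the first time $Y$ leaves $(-H,D')$, capped at $T$. Because $\sigma\le c$ one checks $H\le c\sqrt{T^{1+\eps}}$, so both barriers sit inside the band and (i)--(iii) hold at every step up to $\tau$, with overshoots at most $2c$. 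Optional stopping on $Y$ gives $\E[Y(\tau)]=0$; splitting over the three exit modes (hit $D'$, hit $-H$, or time out) bounds the lower-barrier probability by $p_L\le (D'+2c)/H$. Optional stopping on the martingale $Y(t)^2-\sum_{\tau\le t}\Var(\delta(\tau)\mid\mathcal F_{\tau-1})$ gives $\E[Y(\tau)^2]=\E\langle Y\rangle_\tau\ge\sigma^2\,\E[\tau]\ge\sigma^2 T\,p_O$ for the time-out probability $p_O$; combining this with $\E[Y(\tau)^2]\le (D'+2c)^2+(H+2c)^2 p_L+H^2 p_O$ and using $\sigma^2T-H^2=\tfrac34\sigma^2T$ to absorb the $H^2p_O$ term solves for $p_O=\Oh((D'+c)/(\sigma\sqrt T))$. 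Adding the two, the probability of never reaching $D'$ is $\Oh\bigl((D+\alpha T+c)/(\sigma\sqrt T)\bigr)$.

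The main obstacle is precisely the conditional validity of (i)--(iii): every Azuma or optional-stopping step must be applied to a process that has been forced to remain in the band, which is why I truncate in the first part and choose barriers strictly inside $[-M,M]$ in the second, checking $D'\le M$ before invoking the assumptions. A second delicate point is the scale of $H$: taking it much smaller inflates $p_L$, while $H>\sigma\sqrt T$ would push $H^2$ past $\sigma^2T$ and collapse the second-moment inequality, so the choice $H=\tfrac12\sigma\sqrt T$ is what makes the estimate close. I note finally that this linear-in-$(D+\alpha T)/(\sigma\sqrt T)$ estimate is in fact \emph{stronger} than the stated bound, and implies it: in the only regime where the bound in \eqref{eq:anticonc} is non-trivial one has $z:=(D+\alpha T)/(\sigma\sqrt T)\lesssim 1$, where $z\le z^{2/3}$, and together with the factors $c^2/\sigma^2\ge 1$ and $\ln T\ge 1$ the quantity $82\bigl(\tfrac{c^2}{\sigma^2}\,z\ln T\bigr)^{2/3}$ comfortably dominates the linear bound I obtain.
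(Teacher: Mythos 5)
Your proposal is correct. The concentration half is essentially the paper's own argument: the same truncation of the increments at the exit from the band $|X|\le c\sqrt{T^{1+\eps}}+\alpha T$, the same Doob centering, and the same application of Azuma plus a union bound, with the observation that on the good event the barrier is never actually reached so the truncated process coincides with $X$.

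The anti-concentration half, however, takes a genuinely different route. The paper works with the martingale truncated at the level $D'$ over the \emph{whole} horizon, lower-bounds $\E[X''(T)^2]$ by $p_2^*\sigma^2T$ via orthogonality of increments, and then derives a contradiction from two applications of Markov's inequality (one forcing a heavy left tail because the second moment is large while the variable is bounded above by $2D'$, one forbidding such a tail because the mean is zero); this yields the exponent $2/3$ in the stated bound. You instead run an optional-stopping argument on the exit time from a two-sided corridor $(-H,D')$ with $H=\tfrac12\sigma\sqrt T$, using $\E[Y(\tau)]=0$ to control the lower-barrier probability and the quadratic-variation martingale $Y^2-\langle Y\rangle$ to control the time-out probability. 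This is cleaner and gives a \emph{linear} bound $\Oh((D+\alpha T+c)/(\sigma\sqrt T))$, which is strictly stronger than the paper's $z^{2/3}$-type bound; your closing reduction (using $z\le z^{2/3}$ for $z\le1$, $c\ge\sigma$, $\ln T\ge1$, and triviality of the claim for large $z$) correctly shows it implies the stated inequality. Two small points to tighten: the time-out contribution to $\E[Y(\tau)^2]$ should be $\max(H,D')^2p_O$ rather than $H^2p_O$, which matters only if $D'>H$ --- but the stated bound exceeds $1$ unless $D+\alpha T\ll\sigma\sqrt T$, so you may assume $D'\le H$ outright (and this same restriction is what justifies placing both barriers inside the band where hypotheses (i)--(iii) apply). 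With those regime checks made explicit, the argument is complete.
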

\InJournal{
\begin{proof}
We apply a succession of Doob-type martingale filters to the process under consideration. For a given realization of the process, let $X'(t) = \sum_{\tau =1}^t \delta'(\tau)$, where $\delta'(\tau) \equiv \delta(\tau) - \E(\delta(\tau) | \delta(\tau-1), \delta(\tau-2), \ldots, \delta(0))$ if $X'(t-1) \leq c\sqrt{T^{1+\eps}}$, and $\delta'(\tau) \equiv 0$, otherwise. In other words, we convert our process into a martingale by subtracting expectations of increments, and at the same time enclose $X'(t)$ between two ``barriers'' placed at $\pm (c\sqrt{T^{1+\eps}})$. If one of these barriers is crossed, $X'$ becomes fixed at this value thereafter.

For the concentration bound, we start by showing that $X'(t)$ has very small probability of crossing one of the barriers. Indeed, for all moments of time $t$ before the barriers are hit we have, $X(t) = X'(t) + \sum_{\tau=1}^t \E(\delta(\tau) | \delta(\tau-1), \delta(\tau-2), \ldots, \delta(0))$, and $|X'(t)| \leq c\sqrt{T^{1+\eps}}$. We immediately have by induction on time that then the conditions $|X(t)| \leq c\sqrt{T^{1+\eps}} + \alpha T$ and $|\E(\delta(\tau) | \delta(\tau-1), \delta(\tau-2), \ldots, \delta(0))|\leq \alpha$ are satisfied, and so $|\delta(t)| \leq c$ and also $|\delta'(t)|\leq 2c$. For any moment of time $t$ after one of the barriers is hit we have $\delta'(t) = 0$; thus, the condition $|\delta'(t)|\leq 2c$ holds throughout the time interval $t = 1,\ldots, T$. By Azuma's inequality for martingales, we have for any time $t$:
$$
\Pr[|X'(t)| \geq c\sqrt{T^{1+\eps}}] \leq 2 \exp\left(-\frac{c^2T^{1+\eps}}{2(2c)^2T}\right) = 2 e^{-T^\eps/8}.
$$
Thus, taking the union bound over all moments of time, the bound $[|X'(t)| < c\sqrt{T^{1+\eps}}]$ holds throughout the time interval $t = 1, \ldots, T$ with probability $p_1 = 1 - Te^{-T^\eps/8} = 1 - e^{\ln T-T^\eps /8}$. Thus, with probability $p_1$ the barriers are not reached, and we have $|\delta'(t)- \delta(\tau)| \leq \alpha$ throughout the entire time interval $[1,T]$. Thus, we obtain that the sought concentration bound for process $X$ holds with probability $p_1 = 1-e^{-\Omega (T^\eps)}$.

To show the anti-concentration bound, fix $D' > c$ arbitrarily. We will upper-bound the probability $p_2$ that the martingale $X'$ does not hit a barrier located at level $D'$ (i.e., $p_2 = \Pr[\forall_{t\in\{1,\ldots,T\}}\ X'(t) < D']$). Unlike probability $p_1$, which is very close to $1$, we expect $p_2$ to be close to $0$. Define martingale $X''(t) = \sum_{\tau=1}^T\delta''(\tau)$, where $\delta''(\tau) \equiv \delta'(\tau)$ if $X''(\tau-1) < D'$, and $\delta''(\tau) = 0$ otherwise. Observe that, conditioned on the events that process $X'$ does not hit either of its barriers at $\pm (c\sqrt{T^{1+\eps}})$ (which holds with probability at least $p_1$) and that process $X''$ does not hit its barrier at $D'$ (which holds with probability $p_2$), in a given realization of the process, $\delta''(t)$ and $\delta(t)$ have the same variance. Thus, by $(iii)$ we can write the following bound on the variance of $\delta''(t)$ which holds without conditioning:
$$
\Var(\delta''(t)) \geq p_2^* \sigma^2,
$$
where we introduce the notation: $p_2^* = p_2 - (1-p_1) = p_2 - e^{\ln T-T^\eps/8}$.
We now show that $p_2 \leq 16 \left(\frac{c^4}{\sigma^4}\frac{D'^2}{\sigma^2 T^{1-2\eps}}\right)^{1/3} + e^{\ln T-T^\eps/8}$. Suppose, to the contrary, that $p_2^* > 16 \left(\frac{c^4}{\sigma^4}\frac{D'^2}{\sigma^2 T^{1-2\eps}}\right)^{1/3}$. By the additivity of variance of a martingale over its elements, we have:
$$
\E(X''^2(T)) = \Var(X''(T)) = \sum_{t=1}^T \Var(\delta''(t)) \geq p_2^* \sigma^2 T .
$$
On the other hand, since $|X''(t)| \leq c \sqrt{T^{1+\epsilon}} + c \leq 2c \sqrt{T^{1+\epsilon}}$ by the same bound imposed on process $X'(t)$ through its definition, $t \in \{1,\ldots, T\}$, we also have:
$$
X''^2(T) \leq 4c^2 T^{1+\epsilon}.
$$
Since $X''^2(T)$ is an upper-bounded non-negative random variable, we can apply Markov's inequality to lower-bound its heavy tail:
\begin{equation}
\Pr\left[X''^2(T) \geq \frac{p_2^* \sigma^2 T}{4}\right] \geq \frac{1}{2} \frac{p_2^* \sigma^2 T}{4c^2 T^{1+\epsilon}} = \frac18 p_2^* \sigma^2 c^{-2} T^{-\epsilon}
\end{equation}
and by transforming the expression on the left-hand side, we get:
\begin{equation}\label{eq:heavy1}
\Pr\left[|X''(T)| \geq 2D'\frac{\sqrt{p_2^*} \sigma \sqrt{T}}{4D'}\right] \geq \frac18 p_2^* \sigma^2 c^{-2} T^{-\epsilon},
\end{equation}
Since the maximum value attained by $X''(T)$ is upper-bounded by $D'+c < 2D'$, and by assumption we have $p_2^* > 16 \left(\frac{c^4}{\sigma^4} \frac{D'^2}{\sigma^2 T^{1-2\eps}}\right)^{1/3} > \frac{16D'^2}{\sigma^2 T}$, we have $\frac{\sqrt{p_2^*} \sigma \sqrt{T}}{4D'}>1$ and $\Pr\left[X''(T) \geq 2D'\frac{\sqrt{p_2^*} \sigma \sqrt{T}}{4D'}\right]=0$. So, we can drop the absolute value in expression~\eqref{eq:heavy1}:
\begin{equation}\label{eq:heavy}
\Pr\left[X''(T) < -2D'\frac{\sqrt{p_2^*} \sigma \sqrt{T}}{4D'}\right] \geq \frac18 p_2^* \sigma^2 c^{-2} T^{-\epsilon}.
\end{equation}
On the other hand, $\E X''(T)=0$ and $X''(T) < 2D'$, so by Markov's inequality:
\begin{equation}\label{eq:light}
\Pr\left[X''(T) < -2D' \frac{\sqrt{p_2^*} \sigma \sqrt{T}}{4D'}\right] \leq \frac{1}{1+\frac{\sqrt{p_2^*} \sigma \sqrt{T}}{4D'}} <
\frac{4D'}{\sqrt{p_2^*} \sigma \sqrt{T}}.
\end{equation}
Combining the right-hand sides of expressions~\eqref{eq:light} and~\eqref{eq:heavy}, we obtain:
$$
\frac18 p_2^* \sigma^2 c^{-2} T^{-\epsilon} < \frac{4D'}{\sqrt{p_2^*} \sigma \sqrt{T}},
$$
and subsequently:
$$
p_2^* < \left(\frac{1024c^4}{\sigma^4} \frac{D'^2}{\sigma^2 T^{1-2\eps}}\right)^{1/3} < 16 \left(\frac{c^4}{\sigma^4} \frac{D'^2}{\sigma^2 T^{1-2\eps}}\right)^{1/3},
$$
a contradiction with our assumption. In this way, we have shown:
$$
p_2 \leq 16 \left(\frac{c^4}{\sigma^4}\frac{D'^2}{\sigma^2 T^{1-2\eps}}\right)^{1/3} + e^{\ln T-T^\eps/8}.
$$

Now, conditioned on the event that $X'(t)$ does not hit any of the points $\pm (c\sqrt{T^{1+\eps}})$ which holds with probability at least $p_1 = 1 - e^{\ln T-T^\eps/8}$, having $X'(t) \geq D'$ at some time $t$ implies $X(t) \geq D' - at \geq D' - \alpha T$. Thus, putting $D' = D + \alpha T$, we have:
$$
\Pr[\forall_{t \in \{1,\ldots,T} X(t) < D] \leq p_2 + p_1 = 16 \left(\frac{c^4}{\sigma^4}\frac{(D+\alpha T)^2}{\sigma^2 T^{1-2\eps}}\right)^{1/3} + 2e^{\ln T-T^\eps/8}.
$$
Choosing $\eps$ so that $T^{\eps}=11\ln T$, we can rewrite the above as:
\begin{align*}
\Pr[\forall_{t \in \{1,\ldots,T\}} X(t) < D] &\leq 16 \left(\frac{c^4}{\sigma^4}\frac{121 (D+\alpha T)^2 \ln^2 T}{\sigma^2 T}\right)^{1/3} + 2T^{-3/8} < \\
&< 82 \left(\frac{c^2}{\sigma^2}\frac{(D+\alpha T) \ln T}{\sigma \sqrt T}\right)^{2/3},
\end{align*}
which completes the proof of our anti-concentration bound.
\end{proof}
}
We now need to define a meaningful random process based on the RPS protocol, for which we could take advantage of Lemma~\ref{lem:martingale} in the proof of Theorem~\ref{thm:rps}. In fact, we will need to bound more than one such process to close the analysis. We first introduce two measures of distance of a pair of points $x^{(a)}$, $x^{(b)}$ in our state space:
\begin{itemize}
\item $d_U(x^{(a)}, x^{(b)}) = |U(x^{(a)}) - U(x^{(b)})|$,
\item $d_\infty(x^{(a)}, x^{(b)}) = \|x^{(a)} - x^{(b)}\|_\infty = \max \{|x^{(a)}_1 - x^{(b)}_1|, |x^{(a)}_2 - x^{(b)}_2|, |x^{(a)}_3 - x^{(b)}_3|\}$.
\end{itemize}
\InConference{
Taking into account Lemma~\ref{lem:martingale}, we now show a sequence of claims bounding the distances $d_U$ and $d_{\infty}$ of a process progressing under the discrete RPS protocol from that under the continuous dynamics~\eqref{eq:contrps}. For compactness, these are given here in the form of the following summary lemma.
\begin{lemma}\label{lem:convU}
Let $x(0)$ be a point in the state space of RPS with $U(x(0)) > -\gamma \ln n$ for some absolute constant $0 < \gamma < 1/6$, let $x(t)$ be the random variable representing the point reached after following the population protocol for $t$ steps starting from point $x(0)$.
Next, consider the process $\cont x(t)$ governed by the continuous RPS dynamics~\eqref{eq:contrps} with any starting point $\cont x(0)$ such that $d_\infty(\cont x(0),x(0)) \leq \Delta$, for some $\Delta >0$. Then, for sufficiently large $n$ and any integer $T>0$, the following claim holds:
\begin{itemize}
\item If $T \leq n^{5/3}$, then: $\forall_{t\in\{1,\ldots,T\}}\ d_U(x(t),x(0)) = \Otilde(T^{0.5}/n^{1-\gamma})$, w.v.h.p.
\item If $T \leq n^{2/3}$, then: $\forall_{t\in\{1,\ldots,T\}}\ d_\infty(x(t),\cont x(t)) = \Otilde(\Delta + T^{0.5}/n)$, w.v.h.p.
\item If $T \leq n^{5/3}$ and $-10 > U(x(0)) > - \gamma \ln n$ for some absolute constant $0 < \gamma < 1/6$, then: $\forall_{t\in\{1,\ldots,T\}}\ d_\infty(x(t),\cont x(t)) = \Otilde( (T n^{\gamma-2/3}+1) \cdot (\Delta + T^{0.5} n^{\gamma-1}))$, w.v.h.p.
\item If $n^{6\gamma} \leq T \leq n^{4/3-8\gamma}$, $\Delta \leq T^{0.5} n^{\gamma-1}$, and $-10 > U(x(0)) > - \gamma \ln n$ for some absolute constant $0 < \gamma < 1/6$, then there exists an integer time step $T' = (1 + o(1)) T$, such that $d_\infty (x(T'), \cont x(T)) = \Otilde(T^{0.5}n^{\gamma-1})$, w.v.h.p.
\end{itemize}
\end{lemma}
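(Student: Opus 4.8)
The plan is to prove the four bullets in sequence, in each case distilling the vector-valued RPS dynamics into a scalar summary process to which the concentration (and, for the phase alignment, anti-concentration) part of Lemma~\ref{lem:martingale} can be applied. The unifying observation is that a single discrete RPS step is an \emph{unbiased} Euler discretization of~\eqref{eq:contrps}: from~\eqref{eq:deltana} one checks that $\E[\Delta x_a\mid x(t)] = \tfrac{x_a}{n}(x_{a+1}-x_{a+2})$, which is exactly the right-hand side of~\eqref{eq:contrps} at $x(t)$, while the residual $\xi_a(t) = \Delta x_a - \E[\Delta x_a\mid x(t)]$ is a mean-zero martingale difference bounded by $2/n$, of per-coordinate variance $\Theta\big(x_a(x_{a+1}+x_{a+2})/n^2\big)$. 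The hypothesis $U(x(0)) > -\gamma\ln n$ forces the smallest coordinate above $n^{-\gamma}$, so each step changes $U$ by at most $c=\Otilde(n^{\gamma-1})$ and moves $x$ in $\ell_\infty$ by $1/n$. For the first bullet I would set $X(t)=U(x(t))-U(x(0))$ and reuse the Taylor expansion of Section~\ref{sec:polynomial} with $b=(1,1,1)$ (so $b^TA=0$, i.e.\ case (i)), giving conditional drift $\E[\Delta U\mid x(t)] = -\Theta\big(\tfrac1{n^2}\sum_a \tfrac{x_{a+1}+x_{a+2}}{x_a}\big)$, hence $\alpha=\Otilde(n^{\gamma-2})$. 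Feeding $(c,\alpha)$ into the concentration bound yields $|X(t)|=\Otilde(c\sqrt T+\alpha T)=\Otilde(\sqrt T\, n^{\gamma-1})$ for all $t\le T\le n^{5/3}$ (the $\alpha T$ term being dominated since $\alpha T/(c\sqrt T)=n^{-1}\sqrt T\le n^{-1/6}$), and since this is $o(1)$ a short bootstrap confirms the smallest population never drops below $(1-o(1))n^{1-\gamma}$, closing the circularity in the constants.

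For the second bullet I would track the error vector $e(t)=x(t)-\cont x(t)$ coordinatewise. Since the discrete drift equals $f(x(t))$ and the Euler step of~\eqref{eq:contrps} equals $f(\cont x(t))+O(n^{-2})$, subtraction gives $e(t+1)-e(t)=Df(\cont x(t))\,e(t)+\xi(t)+O(\|e\|^2/n+n^{-2})$, and every entry of the Jacobian $Df$ is $O(1/n)$. A discrete Gr\"onwall estimate then produces amplification $e^{O(T/n)}$, which is $1+o(1)$ precisely when $T\le n^{2/3}$; applying the concentration bound of Lemma~\ref{lem:martingale} to each (mildly amplified) coordinate of $\sum_\tau\xi(\tau)$ with $c=2/n$ and $\alpha=0$ bounds the accumulated noise by $\Otilde(\sqrt T/n)$, so $d_\infty(x(t),\cont x(t))=\Otilde(\Delta+T^{0.5}/n)$.

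The third and fourth bullets are the technical heart, because for $T$ as large as $n^{5/3}$ the naive factor $e^{O(T/n)}=e^{O(n^{2/3})}$ is useless. The crucial structural fact is that~\eqref{eq:contrps} is conservative: $U$ is invariant ($f\cdot\nabla U\equiv 0$), so the orbits are the closed level lines $U=\const$ and the linearized flow has zero Lyapunov exponent; nearby orbits therefore do not separate exponentially but only \emph{shear}, with the along-orbit phase lag growing linearly in time. I would accordingly split $e(t)$ into a radial part, captured by the potential difference $d_U$ and controlled by the first bullet to be $\Otilde(\sqrt T\,n^{\gamma-1})$, and a tangential part. A radial offset $\eta$ feeds into the tangential direction through the $U$-dependence of the orbital speed; integrating this shear over $T$ steps yields the multiplicative factor $(Tn^{\gamma-2/3}+1)$, whose exponent $n^{\gamma-2/3}$ reflects the sensitivity of the orbital speed to $U$ when the smallest coordinate is as small as $n^{-\gamma}$ and the flow slows near the simplex corners. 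This establishes $d_\infty(x(t),\cont x(t))=\Otilde\big((Tn^{\gamma-2/3}+1)(\Delta+T^{0.5}n^{\gamma-1})\big)$.

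Finally, for the fourth bullet I would use the freedom (granted by the coupling framework of Section~\ref{sec:rps}) to run the discrete walk for a slightly different number of steps: choosing $T'=(1+o(1))T$ to realign the discrete phase with the continuous phase of $\cont x(T)$ cancels exactly the linear shear term that dominates the third bullet. What survives is only the radial potential fluctuation together with the tangential random drift, both of size $\Otilde(\sqrt T\,n^{\gamma-1})$ (by the first bullet and by Lemma~\ref{lem:martingale}, the latter applied to the tangential increments), giving $d_\infty(x(T'),\cont x(T))=\Otilde(T^{0.5}n^{\gamma-1})$. The window $n^{6\gamma}\le T\le n^{4/3-8\gamma}$ and the hypothesis $\Delta\le T^{0.5}n^{\gamma-1}$ are exactly what makes the reparametrization well defined and keeps the residual shear, after realignment, below the noise floor. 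The hard part will be making the heuristic ``shear, not exponential divergence'' rigorous: this demands quantitative control of the orbit period and of the speed variation along an orbit passing within $n^{-\gamma}$ of the corners, plus careful bookkeeping of how the correlated and slightly biased radial and tangential drifts interact.
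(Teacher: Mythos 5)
Your treatment of the first two bullets matches the paper's. For the potential drift you apply the concentration half of Lemma~\ref{lem:martingale} to $X(t)=U(x(t))-U(x(0))$ with $c=\Otilde(n^{\gamma-1})$ and $\alpha=\Otilde(n^{\gamma-2})$, exactly as the paper does (with $v=e^{U(x(0))}>n^{-\gamma}$, $c=4/(nv)$, $\alpha=8/(n^2v)$), including the bootstrap that keeps the smallest population bounded below. For the second bullet the paper compares both the discrete and the continuous trajectories to a common \emph{frozen-coefficient} linearization $\contz x_a(t)=x_a(0)\bigl(1+\tfrac tn(x_{a+1}(0)-x_{a+2}(0))\bigr)$ and bounds each deviation separately (accumulated quadratic error $\Oh(T^2/n^2)$ plus noise $\Otilde(T^{0.5}/n)$), whereas you run a discrete Gr\"onwall estimate on $e(t)=x(t)-\cont x(t)$; both are valid for $T\le n^{2/3}$ since the amplification $e^{O(T/n)}$ is $1+o(1)$ there, so this is only a cosmetic difference.

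The gap is in bullets three and four. Your diagnosis --- the dynamics is conservative, nearby orbits shear rather than separate exponentially, and the shear can be cancelled by reparametrizing time --- is exactly the right picture and is the same one the paper uses, but the proposal stops at the picture. The statement ``integrating this shear over $T$ steps yields the multiplicative factor $(Tn^{\gamma-2/3}+1)$'' is asserted, not derived, and a linearization argument alone does not deliver it: over $T\le n^{5/3}$ steps a naive Jacobian product still gives $e^{O(T/n)}=e^{O(n^{2/3})}$, and ``zero Lyapunov exponent'' is an asymptotic statement that does not by itself control a product of $n^{5/3}$ non-commuting matrices near the simplex corners. What actually tames this in the paper is a re-projection scheme: the interval $[0,T]$ is cut into $l=\Oh(T/n^{2/3})$ blocks, at each block boundary the discrete state is projected onto the nearest point $p^{(i)}$ of the reference orbit, bullet two is applied on each short block, and the per-block phase lags are summed \emph{additively} along the orbit. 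Making that work requires two geometric facts you would still need to prove: a comparison between $d_\infty$ and $d_U$ for a point and its nearest same-potential neighbour (the paper's Lemma~\ref{lem:duinfty}, which needs $x_{\min}\le e^{U}$ versus $x_{\max}\ge 1/3$ to get $d_\infty<d_U$), and a two-sided relation $0.05\,\cont t\,e^U/n\le d_\infty(p,q)\le \cont t/n$ between $d_\infty$ along an orbit and the continuous traversal time (Lemma~\ref{lem:disttime}, proved via convexity and the $2\pi/3$ rotational symmetry of the orbit); the factor $n^{\gamma-2/3}$ per block is precisely $n^{1+\gamma}/n$ (the loss in converting time lags back to $d_\infty$, from the $e^{-U}\le n^{\gamma}$ slowdown near the corners) divided by nothing more mysterious than the block count normalization. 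Bullet four then needs an explicit argument that the discrete trajectory actually \emph{hits} the set of nearest points to $\cont x(T)$ over the admissible band of potentials at some $T'=(1\pm o(1))T$, which again leans on Lemma~\ref{lem:disttime}; your appeal to ``the freedom granted by the coupling framework'' is not needed here and does not substitute for that hitting argument. In short: right strategy, same as the paper's, but the quantitative core of the two hard bullets is exactly the part left as a heuristic.
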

}

\InJournal{
\begin{lemma}\label{lem:convU}
Let $x(0)$ be a point in the state space of RPS with $U(x(0)) > -\gamma \ln n$ for some absolute constant $0 < \gamma < 1/6$, let $x(t)$ be the random variable representing the point reached after following the population protocol for $t$ steps starting from point $x(0)$. Then, for sufficiently large $n$ and $T \leq n^{5/3}$, the following claim holds: $\forall_{t\in\{1,\ldots,T\}}\ d_U(x(t),x(0)) = \Otilde(T^{0.5}/n^{1-\gamma})$, w.v.h.p.
\end{lemma}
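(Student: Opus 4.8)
The plan is to apply the concentration half of Lemma~\ref{lem:martingale} to the scalar process $X(t) = U(x(t)) - U(x(0))$, whose one-step increments are $\delta(t) = U(x(t)) - U(x(t-1))$, reading off the parameters $c$ and $\alpha$ from a second-order Taylor expansion of the logarithm. Since in each active step exactly one species gains a member while its cyclic successor loses one (cf.~\eqref{eq:deltana}), and $U = \ln(n_1 n_2 n_3) - 3\ln n$ up to the constant $-3\ln n$ that cancels in $\delta$, a single active interaction changes $U$ by one of three quantities of the form $[\ln(n_a+1)-\ln n_a] + [\ln(n_{a+2}-1)-\ln n_{a+2}]$. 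Expanding $\ln(m\pm 1) - \ln m = \pm\frac1m - \frac{1}{2m^2} + O(m^{-3})$ and weighting the three cases by the interaction probabilities $x_a x_{a+1}$ from~\eqref{eq:deltana}, I would first check that the first-order contribution to $\E[\delta(t)\,|\,\text{history}]$ cancels identically: the linear part equals $\frac{1}{n^2}\sum_{i,j} A_{ij} n_j = \frac{1}{n^2}\sum_j (\mathbf{1}^T A)_j\, n_j$, which is zero because the columns of the skew-symmetric RPS matrix $A$ sum to zero. This is precisely case (i) of the proof of Theorem~\ref{thm:lvtime} specialized to $b=\mathbf{1}$.

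What survives is the second-order remainder $\E[\delta(t)] = -\frac{1}{2n^2}\sum_i \frac{x_{i+1}+x_{i+2}}{x_i} + O(\text{third order})$, a strictly negative (outward-biased) drift whose magnitude is $O(n^{\gamma-2})$ provided every $x_i \geq (1-o(1))n^{-\gamma}$; the worst-case single-step change is bounded by $|\delta(t)| = O(n^{\gamma-1})$ under the same population bound. These yield candidate parameters $\alpha = O(n^{\gamma-2})$ and $c = O(n^{\gamma-1})$. Because hypotheses (i)--(ii) of Lemma~\ref{lem:martingale} are only required while $|X(t-1)| \leq c\sqrt{T^{1+\eps}} + \alpha T$, it suffices to verify that this region is consistent with the population bound: there $U(x(t-1)) = U(x(0)) + X(t-1) \geq -\gamma\ln n - o(1)$, so $x_1 x_2 x_3 \geq (1-o(1))n^{-\gamma}$, and since each factor is at most $1$ we recover $x_i \geq (1-o(1))n^{-\gamma}$ for every $i$, i.e.\ $n_i \geq (1-o(1))n^{1-\gamma}$. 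This re-derives exactly the bound used to obtain $c$ and $\alpha$, closing the self-consistency loop.

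It remains to invoke the lemma and simplify the radius. Choosing $\eps$ a small constant with $T^{\eps} = \Omega(\log^2 n)$ (so the failure probability $e^{-\Omega(T^\eps)}$ is w.v.h.p.), Lemma~\ref{lem:martingale} gives $|X(t)| \leq c\sqrt{T^{1+\eps}} + \alpha T$ for all $t \leq T$ w.v.h.p. The first term is $\Otilde(\sqrt{T}\,n^{\gamma-1}) = \Otilde(T^{0.5}/n^{1-\gamma})$, while $\alpha T = O(n^{\gamma-2}T)$; their ratio is $O(n^{-1}\sqrt{T}) \leq O(n^{-1/6})$ for $T \leq n^{5/3}$, so the drift term is dominated and $d_U(x(t),x(0)) = |X(t)| = \Otilde(T^{0.5}/n^{1-\gamma})$ uniformly in $t\leq T$, as claimed.

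The one genuinely delicate point is the bootstrapping just described: the increment bounds $c,\alpha$ are valid only when all populations exceed roughly $n^{1-\gamma}$, yet that lower bound is itself a consequence of $X$ staying inside the concentration radius. The resolution hinges on the radius being $o(1)$ in potential units, and this is exactly where the hypotheses $T \leq n^{5/3}$ and $0<\gamma<1/6$ enter: at the extreme $T = n^{5/3}$ the radius is $\Otilde(n^{\gamma-1/6}) = o(1)$ precisely because $\gamma < 1/6$, so $U$ can never drift far enough to let a population fall out of the regime in which the estimates on $c$ and $\alpha$ were derived.
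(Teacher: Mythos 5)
Your proposal is correct and takes essentially the same route as the paper: the paper likewise applies the concentration half of Lemma~\ref{lem:martingale} to $X(t)=U(x(t))-U(x(0))$ with $c=4/(nv)=\Oh(n^{\gamma-1})$ and $\alpha=8/(n^2v)=\Oh(n^{\gamma-2})$ for $v=e^{U(x(0))}>n^{-\gamma}$, using the same self-consistency observation that inside the concentration radius the potential drops by $o(1)$ (so every population stays above roughly $n^{1-\gamma}$), with the expectation bound read off from the case-(i) computation \eqref{eq:expdeltacasei} exactly as you describe. The only quibble is a harmless index relabeling (an interaction initiated by species $a$ decrements $n_{a+1}$, not $n_{a+2}$), which washes out in the cyclic sum.
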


\begin{proof}
To show the claim, we apply concentration bound~\eqref{eq:conc} to the process $X(t) = U(x(t)) - U(x(0))$. We have $X(t) = \sum_{\tau=1}^t \delta(\tau)$, with random variable $\delta(t)$ given by: $\delta(t) = U(x(t)) - U(x(t-1))$. 
Let $v = e^{U(x(0))} > e^{-\gamma \ln n} = n^{-\gamma}$. We verify that the first two conditions of Lemma~\ref{lem:martingale} hold for $\eps$ such that $T^\eps = \log^2 n$, $c = \frac{4}{nv}$, $\alpha = \frac{8}{n^2 v}$. Indeed, suppose that the condition from Lemma~\ref{lem:martingale}:
$$
|U(x(t-1)) - U(x(0))| = X(t-1) \leq c T^{0.5}\log n + \alpha T
$$
is satisfied at some time $t-1$, $l\leq t \leq T$. Under this assumption, we have for $T\leq n^{5/3}$ and sufficiently large $n$:
\begin{align*}
U(x(t-1)) \geq& U(x(0)) - \frac{4 T^{0.5}\log n}{nv} - \frac{8 T}{n^2v} > \\
>& U(x(0)) - n^{-1/6+\log \log n+\gamma} > U(x(0)) - \ln 2,
\end{align*}
hence the minimum size of a species at time $t-1$ is then lower-bounded by:
$$
e^{U(x(t-1))} > \frac{v}{2}.
$$
Following~\eqref{eq:deltabound}, we then have:
\begin{align*}
|\delta(t)| &\leq \max_{a\in\{1,2,3\}} \left| \left(\ln \frac{n_a(t-1) +1}{n} - \ln \frac{n_a(t-1)}{n}\right) + \left(\ln \frac{n_{a+1}(t-1) -1}{n} - \ln \frac{n_{a+1}(t-1)}{n}\right)\right| < \\
 &< \max_{a\in\{1,2,3\}} |\ln(n_a(t-1)) - \ln(n_a(t-1) - 1)| < \max_{a\in\{1,2,3\}} \frac{2}{n_a(t-1)} < \frac{2}{n e^{U(x(t-1))}} < \frac{4}{nv} =c.
\end{align*}
Next, we bound the expectation $\E \delta(t)$ by transforming the equality in~\eqref{eq:expdeltacasei} at point $x(t-1)$:
\begin{align}
\E\delta(t) &\leq - \frac{1}{3n^2} \sum_{a=1}^3\left (\frac{n_a(t-1)}{n_{a+1}(t-1)} +  \frac{n_{a+1}(t-1)}{n_{a}(t-1)}\right) < 0,\\
\E\delta(t) &\geq - \frac{2}{3n^2} \sum_{a=1}^3\left (\frac{n_a(t-1)}{n_{a+1}(t-1)} +  \frac{n_{a+1}(t-1)}{n_{a}(t-1)}\right)
\geq \nonumber\\
&\geq - \frac{4}{n \cdot \min_{a\in\{1,2,3\}} n_a(t-1)} > \frac{-8}{n^2v} = -\alpha.
\end{align}
Thus, we can apply concentration bound~\eqref{eq:conc}, obtaining that:
$$
\forall_{t\in\{1,\ldots,T\}}\
|U(x(t)) - U(x(0))| = |X(t)| \leq  \frac{4 T^{0.5}\log n}{nv} + \frac{8 T}{n^2v} = \Otilde(T^{0.5}/n^{1-\gamma})
$$
holds with very high probability $1-e^{-\Omega(T^{\eps})} = 1-e^{-\Omega(\log^2 n)}$.
\end{proof}

\begin{lemma}\label{lem:disttime}
For a pair of points $p,q$ lying on the same orbit $U(p) = U(q) = U < -6$, let $\cont t = \cont t(p,q) > 0$ denote that smallest time such that for the continuous process~\eqref{eq:contrps} originating at $p$ we have $\cont p(\cont t) =q$. If $q$ is separated from $p$ by at most $1/3$ of the orbit ($\cont t(p,q)\leq \cont t(p,p)/3$), then the following relations hold:
$$
0.05\frac{\cont t}{n} e^U \leq d_\infty (p,q) \leq \frac{\cont t}{n}.
$$
\end{lemma}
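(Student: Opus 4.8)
The plan is to prove the two inequalities by entirely different means: the right-hand (upper) bound is an immediate consequence of a velocity estimate, while essentially all the work is in the left-hand (lower) bound, which is also the only place where the $1/3$-orbit hypothesis and the assumption $U<-6$ are used. For the upper bound I would simply observe that every coordinate of the field~\eqref{eq:contrps} obeys $|d\cont x_a/dt| = \frac1n \cont x_a|\cont x_{a+1}-\cont x_{a+2}| \le \frac1n$, since $\cont x_a\le 1$ and $|\cont x_{a+1}-\cont x_{a+2}|\le 1$. Integrating the flow from $p$ to $q$ gives $|q_a-p_a|\le \cont t/n$ for each $a$, hence $d_\infty(p,q)\le \cont t/n$; this needs neither $U<-6$ nor the $1/3$-orbit assumption.

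For the lower bound I would combine a uniform pointwise lower bound on the flow speed with a chord-versus-arclength comparison. Writing $\rho=e^U=\cont x_1\cont x_2\cont x_3$ and letting $v=\|d\cont x/dt\|_2$ be the Euclidean speed, the key trick is that dividing by $\rho$ turns coordinate differences into differences of reciprocals: since $\cont x_a\cont x_{a+1}\cont x_{a+2}=\rho$, we have $\cont x_a|\cont x_{a+1}-\cont x_{a+2}|=\rho\,|1/\cont x_{a+1}-1/\cont x_{a+2}|$, so that $nv\ge \max_a \cont x_a|\cont x_{a+1}-\cont x_{a+2}|=\rho\,(1/x_{\min}-1/x_{\max})$, where $x_{\min},x_{\max}$ are the least and greatest coordinates. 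Because $x_{\max}\ge 1/3$ and $x_{\min}\le\rho^{1/3}$, and $U<-6$ forces $\rho^{-1/3}>e^{2}$, this yields $1/x_{\min}-1/x_{\max}>e^{2}-3>4$, i.e.\ $v\ge 4e^{U}/n$ at \emph{every} point of the orbit.

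It remains to convert the speed bound into a displacement bound, which is where convexity and the $1/3$-orbit hypothesis enter. The orbit is the boundary of the convex body $\{x:\ \sum_i\ln x_i\ge U\}$ (convex since $\sum_i\ln x_i$ is concave), so its tangent turns monotonically, by $2\pi$ over the full loop; because the cyclic relabelling $(x_1,x_2,x_3)\mapsto(x_3,x_1,x_2)$ is precisely the rotation by $2\pi/3$ about the simplex centre that advances the flow by one third of a period, each third subtends turning exactly $2\pi/3$, and the sub-arc from $p$ to $q$ subtends turning at most $2\pi/3<\pi$. Invoking the elementary fact that a convex arc whose tangent turns by total angle $\phi\le\pi$ has chord at least $\cos(\phi/2)$ times its arclength $L$, I get $\|q-p\|_2\ge\cos(\pi/3)\,L=\tfrac12 L$. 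Chaining $L=\int_0^{\cont t}v\,dt\ge \cont t\cdot 4e^U/n$ with the norm inequality $d_\infty\ge\|q-p\|_2/\sqrt3$ then gives $d_\infty(p,q)\ge\frac{2}{\sqrt3}\,\frac{\cont t}{n}\,e^U$, comfortably above the required $0.05\,\frac{\cont t}{n}e^U$ (the large slack suggests the stated constant is deliberately loose).

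The hard part is the geometric chord/arclength comparison together with the careful accounting of the turning angle. The $1/3$-orbit hypothesis is exactly what keeps the total turning strictly below $\pi$, and it is essential: over a longer sub-arc the tangent could turn by up to $2\pi$, so the chord, and hence $d_\infty$, could collapse toward $0$ while $\cont t$ stays bounded below, destroying the lower bound. The speed estimate itself is clean once one normalises by $\rho$, so I expect the convexity/turning-angle step to demand the most care (in particular, justifying that the symmetry makes each third subtend turning exactly $2\pi/3$, and pinning down the convex chord inequality).
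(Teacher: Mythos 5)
Your proof is correct and follows essentially the same route as the paper's: the upper bound via the trivial per-coordinate speed bound $1/n$, and the lower bound by combining a pointwise flow-speed lower bound of order $e^U/n$ with a convexity argument that uses the $2\pi/3$ rotational symmetry and the $1/3$-orbit hypothesis to compare the chord $d_\infty(p,q)$ with the arclength. The paper phrases the geometric step via the tangent-intersection triangle and the law of cosines (apex angle $\geq \pi/3$) rather than your turning-angle/$\int\cos$ computation, and uses the slightly weaker speed bound $e^U(1/3 - e^{U/3})/n \geq e^U/(6n)$, but these differences are cosmetic.
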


\begin{proof}
The bound $d_\infty (p,q) \leq \frac{\cont t}{n}$ follows immediately from the observation that for any $t\geq 0$ and infinitesimal time interval $dt$, we have for the continuous process~\eqref{eq:contrps}: $d_{\infty}(\cont p(t+ dt), \cont p(t)) \leq \frac{dt}{n}$. To show the lower bound on $d_\infty (p,q)$, observe that the linear distance covered in each infinitesimal time step can be lower-bounded as follows:
\begin{align*}
&\|p(t+ dt)- \cont p(t)\|_2 \geq d_{\infty}(\cont p(t+ dt), \cont p(t)) \geq \frac{1}{n}\min_{a\in \{1, 2, 3\}} p_a(t) \cdot \\
&\cdot (\max_{a\in \{1, 2, 3\}} p_a(t) - \min_{a\in \{1, 2, 3\}} p_a(t)) dt \geq \frac{e^U}{n} \left(\frac{1}{3} - e^{U/3}\right) dt \geq \frac{e^U}{6n} dt,
\end{align*}
where we took into account that $e^U = \prod_{a=1}^3 p_a(t)$ and that $U < -6$. Denoting by $l$ the length of the traversal from $p$ to $q$ along the orbit $U$, we obtain:
$$
l \geq \frac{e^U}{6n} \cont t
$$

The orbit $x_1 x_2 x_3 = e^U$ is the boundary of a convex subset of the plane $x_1 + x_2 + x_3 = 1$. Moreover, the entire fragment of the considered orbit between points $p$ and $q$ is contained within the triangle $(p,q,o)$, where $o$ is the intersection point of the straight lines in the considered plane, adjacent to the orbit at points $p$ and $q$, respectively. By the rotational symmetry of the orbit under rotation by $2\pi / 3$ in its plane around the point $(1/3, 1/3, 1/3)$ and the fact that $p$ and $q$ are apart from each other by at most $1/3$ of the orbit, it follows that $\angle(poq) \geq \pi /3$. By applying the law of cosines to triangle $(p,q,o)$ we obtain:
$$\|p-o\|_2 + \|q-o\|_2 \leq 2 \|p-q\|_2.$$
Now, we can write the following bounds:
$$
d_\infty (p,q) \geq \frac1{\sqrt 2} \|p-q\|_2 \geq \frac1{2\sqrt 2} (\|p-o\|_2 + \|o-q\|_2) > \frac l{2\sqrt 2} \geq \frac{e^U}{12\sqrt 2 n} \cont t > 0.05 \frac{e^U}{n} \cont t,
$$
which completes the proof.
\end{proof}

\begin{lemma}\label{lem:duinfty}
Let $x$ be a point in the state space and let $p$  be a point with minimum infinity norm distance to $x$ from among all points having the same potential as $p$, i.e., $p \in \arg\min_{p'} \{d_\infty(p', x(t_i)) : U(p') = U(p)\}$. If $U(x) < -3$ and $d_\infty (x, p) < e^{U(x)}$, then
$$d_\infty (x, p) < d_{U}(x,p).$$
\end{lemma}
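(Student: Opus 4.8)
The plan is to replace the analytically awkward $\ell_\infty$-nearest point $p$ by a point reached from $x$ by moving straight along the gradient of $U$, and then to compare the rate at which $U$ changes with the rate at which the position changes. Throughout, write $g(z)$ for the gradient of $U$ along the simplex, i.e.\ $g_i(z)=1/z_i-\tfrac13\sum_j 1/z_j$ (the projection of $(1/z_1,1/z_2,1/z_3)$ onto $\{v:\sum_i v_i=0\}$). I first record what the hypotheses buy. Since $e^{U(x)}=x_1x_2x_3\le \min_i x_i$ and, more precisely, the product of the two non-minimal coordinates is at most $1/4$, one gets $\min_i x_i\ge 4e^{U(x)}$; combined with $\|x-p\|_\infty=d_\infty(x,p)<e^{U(x)}$ this forces $|x_i-p_i|/x_i<1/4$ for every $i$, so all coordinates stay positive and $x,p$ are multiplicatively close. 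This single bound controls every error term below.

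\emph{Reduction.} Let $c=U(p)$ and let $O_c=\{z:U(z)=c\}$ be the orbit through $p$, on which $p$ is the $\ell_\infty$-closest point to $x$. Let $q$ be the first point at which the ray from $x$ in direction $\pm g(x)$ (sign chosen so that $U$ moves towards $c$) meets $O_c$, so that $x-q=s\,g(x)$ for a scalar $s$. Since $q\in O_c$ and $p$ minimises $d_\infty(\cdot,x)$ over $O_c$, we have $d_\infty(x,p)\le d_\infty(x,q)$, while $U(q)=c=U(p)$ gives $d_U(x,q)=d_U(x,p)$. Hence it suffices to prove $d_\infty(x,q)<d_U(x,q)$, and the gain is that $x-q$ is now \emph{exactly} parallel to $g(x)$.

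\emph{Comparison and the gradient inequality.} From $x-q=s\,g(x)$ we get exactly $d_\infty(x,q)=|s|\,\|g(x)\|_\infty$, while the mean value theorem on the segment $[x,q]$ gives $d_U(x,q)=|s|\,|g(\xi)\cdot g(x)|$ for some $\xi\in[x,q]$. The $<1/4$ coordinate perturbation pins $g(\xi)$ to $g(x)$ up to a bounded relative factor, so $g(\xi)\cdot g(x)\ge(1-\eta)\|g(x)\|_2^2$ with $\eta<1$ a controlled constant, and the claim reduces to $\|g(x)\|_2^2>\|g(x)\|_\infty$ (with room for $\eta$). Writing $a_i=1/x_i$ and $\bar a=\tfrac13\sum_i a_i$, a short computation gives $\|g(x)\|_2^2=\tfrac13\big[(a_1-a_2)^2+(a_2-a_3)^2+(a_3-a_1)^2\big]\ge\tfrac13(a_{\max}-a_{\min})^2$, whereas $\|g(x)\|_\infty=\max_i|a_i-\bar a|\le a_{\max}-a_{\min}$; hence $\|g(x)\|_2^2/\|g(x)\|_\infty\ge\tfrac13(a_{\max}-a_{\min})$. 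Since $a_{\max}=1/\min_i x_i\ge e^{-U(x)/3}$ (using $\min_i x_i\le e^{U(x)/3}$ by AM--GM) and $a_{\min}=1/\max_i x_i\le 3$, this quantity exceeds $1$ as soon as the smallest coordinate is small enough, i.e.\ once $U(x)$ is a sufficiently large negative constant below its maximal value $\ln\tfrac1{27}$ --- precisely the regime of the RPS coupling, where one species is small.

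\emph{Main obstacle.} The two delicate points are the reduction and the error control. For the reduction I must check that the ray along $g(x)$ actually reaches $O_c$ at small distance: this uses the a priori bound $d_U(x,p)\le 1$, obtained from $d_\infty(x,p)<e^{U(x)}$ together with the Lipschitz estimate $|U(x)-U(p)|\le\|x-p\|_2\cdot\max\|\nabla U\|$ on the segment (the two factors being of order $e^{U}$ and $e^{-U}$, whose product is $\bigo(1)$). The harder and truly central step is the gradient inequality $\|g(x)\|_2^2>\|g(x)\|_\infty$: it is false near the barycenter (where $\|g\|_2\to 0$) and becomes true, with growing margin, exactly when the minimal coordinate is small. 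The heart of the proof is therefore to turn the hypothesis into a quantitative lower bound on $1/\min_i x_i$ that is large enough to beat both $\|g(x)\|_\infty$ and the constant $\eta$ lost in the segment comparison; this is where one must verify that the configuration is sufficiently far from $(1/3,1/3,1/3)$, and it is the reason the clean statement needs the potential to be bounded a definite constant below its maximum.
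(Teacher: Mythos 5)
Your route is genuinely different from the paper's, but as it stands it has two concrete gaps. First, the reduction to the gradient-ray point $q$ is not free: both the existence of $q$ (in the case $U(p)>U(x)$ the ray along $+g(x)$ need not meet the convex superlevel set $\{U\ge c\}$ at all) and the error factor $\eta$ require an a priori bound on the \emph{length} $|s|$ of the segment $[x,q]$, whereas the hypothesis only controls $d_\infty(x,p)$. Converting your bound $d_U(x,p)\le 1$ into a bound on $|s|$ needs a lower bound on the directional derivative $\|g\|_2^2$ along the whole segment --- which is precisely the quantity your endgame is trying to bound from below, and which degenerates near the barycenter; the argument is circular there. Second, and more seriously, you leave the ``heart of the proof'' --- the inequality $(1-\eta)\tfrac13(a_{\max}-a_{\min})>1$ --- unverified, and it cannot be verified under the stated hypothesis: since $U(x)=\ln(x_1x_2x_3)\le\ln\tfrac{1}{27}\approx-3.296$ for \emph{every} state, the condition $U(x)<-3$ is vacuous, and near the barycenter $a_{\max}-a_{\min}\to 0$, so your lower bound (and indeed the lemma's conclusion itself, e.g.\ for $x=(\tfrac13,\tfrac13,\tfrac13)$ and a nearby orbit, where $d_U\approx 9\,d_\infty^2\ll d_\infty$) fails. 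Your own estimate $a_{\max}\ge e^{-U/3}>e$ against $a_{\min}\le 3$ gives a possibly negative quantity, so nothing closes. To be fair, the paper's proof stumbles at the same spot: its assertion $x_{\min}<e^{U(x)}$ is false (one has $x_{\min}\ge 4e^{U(x)}$, exactly as you note), and the resulting inequality $\frac{1}{2x_{\min}}-\frac{1}{x_{\max}}>1$ fails e.g.\ at $(0.2,0.2,0.6)$, where $U\approx-3.73$. The lemma is true, and is only applied, in the regime $U\lesssim -10$, where both arguments can be completed; but your write-up explicitly defers this step rather than establishing it.

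Setting the shared constant issue aside, the paper's argument is considerably more elementary and avoids all of your error control. It does not pass to a different target point on the orbit: it keeps $d=d_\infty(x,p)$ and constructs a comparison point $r$ by transferring $d$ units of mass from the largest population to the smallest --- which is the extremal direction for the ratio $dU/d(d_\infty)$ on the slice $\sum_i v_i=0$, yielding $U(r)-U(x)\ge d\bigl(\frac{1}{2x_{\min}}-\frac{1}{x_{\max}}\bigr)$ by a two-term logarithm estimate with no mean-value theorem and no perturbation factor $\eta$; your gradient direction only certifies the ratio $\|g\|_2^2/\|g\|_\infty\ge\tfrac13(a_{\max}-a_{\min})$, losing a factor of~$3$. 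It then uses the minimality of $p$ via the intermediate value theorem on the segment $[x,r]$ (any crossing of the orbit before $r$ would be a closer point) to conclude $U(p)\ge U(r)$. Note the dual uses of minimality: the paper compares potentials at equal $\ell_\infty$ distance, while you compare $\ell_\infty$ distances at equal potential. Both are legitimate, but the paper's version turns the whole lemma into one explicit computation, which is the shape of argument you should aim for here.
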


\begin{proof}
Let $d = d_\infty (x, p)$. Let $x_{\min} = \min_{a\in\{1,2,3\}} x_a$ and let $x_{\max} = \max_{a\in\{1,2,3\}} x_a$. We note that $x_{\min}<e^{U(x)}<0.05$ and $x_{\max}\geq 1/3$. Assume first that $U(p) > U(x)$, and let $r$ be the point reached from $x$ by moving $d$ individuals from the largest population in $x$ to the smallest population in $x$.
By the definition of the potential, we have:
$$
U(r) \geq U(x) + \ln\frac{x_{\min}+d}{x_{\min}} - \ln\frac{x_{\max}+d}{x_{\max}} \geq U(x) + d\left(\frac{1}{2x_{\min}} - \frac{1}{x_{\max}}\right) > U(x) + d.
$$
Since  $d = d_\infty (x, p) = d_\infty (x, r)$, the straight line segment connecting $x$ and $r$ does not intersect with the potential orbit $U=U(p)$, hence $U(p)\geq U(r) > U(x) + d$, which completes the proof for the considered case.

When $U(p) < U(x)$, we define $r$ by moving $d$ individuals from the smallest population in $x$ to the largest population in $x$; we omit the details of the analogous argument.
\end{proof}

\begin{lemma}\label{lem:infdist}
Let $x(0)$ be a point in the state space of RPS and let $x(t)$ be the random variable representing the point reached after following the population protocol for $t$ steps starting from point $x(0)$. Then, for sufficiently large $n$ and any positive integer $T$, the following claim holds:
\begin{itemize}
\item[(i)] If $T \leq n^{2/3}$, then: $\forall_{t\in\{1,\ldots,T\}}\ d_\infty(x(t),\contz x(t)) = \Otilde(T^{0.5}/n)$, w.v.h.p.,
where $\contz x(t) = (\contz x_1(t), \contz x_2(t), \contz x_3(t))$ is the following linear approximation to the considered process:
 $$\contz x_a(t) = x_a(0)\left(1 + \frac{t}{n} ( x_{a+1}(0) -  x_{a+2}(0))\right),\text{ for $a\in\{1,2,3\}$}.$$
\end{itemize}
Next, consider the process $\cont x(t)$ governed by the continuous RPS dynamics~\eqref{eq:contrps} with any starting point $\cont x(0)$ such that $d_\infty(\cont x(0),x(0)) \leq \Delta$, for some $\Delta >0$. Then, the following claims hold:
\begin{itemize}
\item[(ii)] If $T \leq n^{2/3}$, then: $\forall_{t\in\{1,\ldots,T\}}\ d_\infty(x(t),\cont x(t)) = \Otilde(\Delta + T^{0.5}/n)$, w.v.h.p.
\item[(iii)] If $T \leq n^{5/3}$ and $-10 > U(x(0)) > - \gamma \ln n$ for some absolute constant $0 < \gamma < 1/6$, then: $\forall_{t\in\{1,\ldots,T\}}\ d_\infty(x(t),\cont x(t)) = \Otilde( (T n^{\gamma-2/3}+1) \cdot (\Delta + T^{0.5} n^{\gamma-1}))$, w.v.h.p.
\item[(iv)] If $n^{6\gamma} \leq T \leq n^{4/3-8\gamma}$, $\Delta \leq T^{0.5} n^{\gamma-1}$, and $-10 > U(x(0)) > - \gamma \ln n$ for some absolute constant $0 < \gamma < 1/6$, then there exists an integer time step $T' = (1 + o(1)) T$, such that $d_\infty (x(T'), \cont x(T)) = \Otilde(T^{0.5}n^{\gamma-1})$, w.v.h.p.
\end{itemize}
\end{lemma}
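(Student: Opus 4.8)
The plan is to treat the four parts as a progression from very short to long time scales, each built on the previous. For part~(i) I would write the coordinate process as $x_a(t) - x_a(0) = \sum_{\tau=1}^t \delta_a(\tau)$ with $\delta_a(\tau) = x_a(\tau)-x_a(\tau-1) \in \{-1/n,0,+1/n\}$ as in~\eqref{eq:deltana}, and split it into its predictable drift and a martingale remainder. By~\eqref{eq:deltana} the conditional mean of $\delta_a$ is $\frac{x_a}{n}(x_{a+1}-x_{a+2})$, so the centred remainder satisfies the hypotheses of Lemma~\ref{lem:martingale} with $c = \Oh(1/n)$ and $\alpha = 0$; concentration bound~\eqref{eq:conc} then pins it at $\Otilde(T^{0.5}/n)$ w.v.h.p. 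The drift part, summed over $t$ steps, differs from the frozen-at-zero drift defining $\contz x_a(t)$ only by replacing $x(\tau-1)$ with $x(0)$ in each summand; using the trivial a~priori bound $\|x(\tau-1)-x(0)\|_\infty \le T/n$, this error is $\Oh(T^2/n^2)$, which is $\Otilde(T^{0.5}/n)$ exactly when $T \le n^{2/3}$ (i.e.\ $T^{3/2}\le n$). Adding the two contributions gives~(i).

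For part~(ii) I would pass through the linear approximations by the triangle inequality, $d_\infty(x(t),\cont x(t)) \le d_\infty(x(t),\contz x(t)) + d_\infty(\contz x(t), \contz{\cont x}(t)) + d_\infty(\contz{\cont x}(t), \cont x(t))$, where $\contz{\cont x}$ is the linear approximation to the continuous flow~\eqref{eq:contrps} started at $\cont x(0)$. The first term is bounded by~(i); the last is the Taylor remainder of the smooth flow, whose second derivative is $\Oh(1/n^2)$, hence $\Oh(T^2/n^2) = \Otilde(T^{0.5}/n)$ for $T\le n^{2/3}$; the middle term compares two linear functions whose coefficients differ by $\Oh(\Delta)$ (since $d_\infty(\cont x(0),x(0))\le\Delta$), amplified only by $1 + t/n = 1+o(1)$, hence $\Oh(\Delta)$. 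Summing yields the claimed $\Otilde(\Delta + T^{0.5}/n)$.

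Part~(iii) is where the real work lies, and I would prove it by concatenating~(ii) over $\Theta(Tn^{\gamma-2/3})$ blocks, each short enough for~(ii) to apply. The danger is that naive chaining multiplies the per-block Lipschitz constant of the flow and gives exponential blow-up $e^{\Oh(T/n)}$; this must be avoided by exploiting the \emph{neutral stability} of~\eqref{eq:contrps}, i.e.\ the conservation of $U$ along its orbits. Concretely I would split the running gap into an orthogonal (potential) part and a tangential (phase) part. The orthogonal part is controlled directly by Lemma~\ref{lem:convU}, since the discrete $U$ drifts by only $\Otilde(T^{0.5}n^{\gamma-1})$ over the whole horizon, and by Lemma~\ref{lem:duinfty} this also bounds the corresponding $d_\infty$ displacement. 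The tangential part does \emph{not} blow up exponentially, because two trajectories on the same orbit keep a fixed phase offset; the only growth comes from the slightly different traversal speeds of nearby orbits, which via Lemma~\ref{lem:disttime} converts a potential gap of order $T^{0.5}n^{\gamma-1}$ into a phase error accumulating \emph{linearly} in time, producing the amplification factor $(Tn^{\gamma-2/3}+1)$. Combining the orthogonal bound, the tangential bound, and the initial gap $\Delta$ gives the stated product.

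Finally, part~(iv) removes the tangential amplification by a time reparametrization. Since the discrete walk traverses its orbit at a speed differing from the continuous one only by the accumulated phase error, and since the hypotheses $n^{6\gamma}\le T\le n^{4/3-8\gamma}$ and $\Delta \le T^{0.5}n^{\gamma-1}$ force this error to correspond to a time shift that is $o(T)$, there exists an integer $T' = (1+o(1))T$ at which the phases align; after alignment only the orthogonal deviation survives, and Lemma~\ref{lem:duinfty} together with the $U$-bound of Lemma~\ref{lem:convU} leaves $d_\infty(x(T'),\cont x(T)) = \Otilde(T^{0.5}n^{\gamma-1})$. I expect the main obstacle to be the bookkeeping in~(iii)--(iv): proving that the tangential error is a genuine timing mismatch (hence removable by reparametrization) rather than true divergence, and making the tangential/orthogonal splitting rigorous when the smallest population is as small as $n^{-\gamma}$ and the probabilities in~\eqref{eq:deltana} themselves drift within each block.
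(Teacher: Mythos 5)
Your plan follows the paper's own proof essentially step for step: part (i) is the concentration bound of Lemma~\ref{lem:martingale} applied coordinatewise against the frozen linear approximation (the paper folds your drift term into the parameter $\alpha=\Oh(T/n^2)$ rather than centring first, which is equivalent); part (ii) is the same triangle inequality through the linear approximation with the $\Oh(T^2/n^2)$ Taylor remainder; part (iii) is exactly the paper's concatenation of (ii) over blocks of length $n^{2/3}$, with the orthogonal deviation controlled by Lemmas~\ref{lem:convU} and~\ref{lem:duinfty} and the tangential deviation converted to additively accumulating time shifts via Lemma~\ref{lem:disttime}; and part (iv) is the paper's reparametrization argument. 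The only slight imprecision is attributing the tangential growth in (iii) solely to differing orbit speeds --- in the paper the per-block phase increment is the projection onto the reference orbit of the full per-block deviation from (ii) --- but since your plan explicitly sums these per-block gaps via Lemma~\ref{lem:disttime}, this does not change the argument.
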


\begin{proof}
\begin{figure}
\centering
\ifpdf
\includegraphics[height=6cm,trim= 3cm 5.5cm 3cm 4cm,clip]{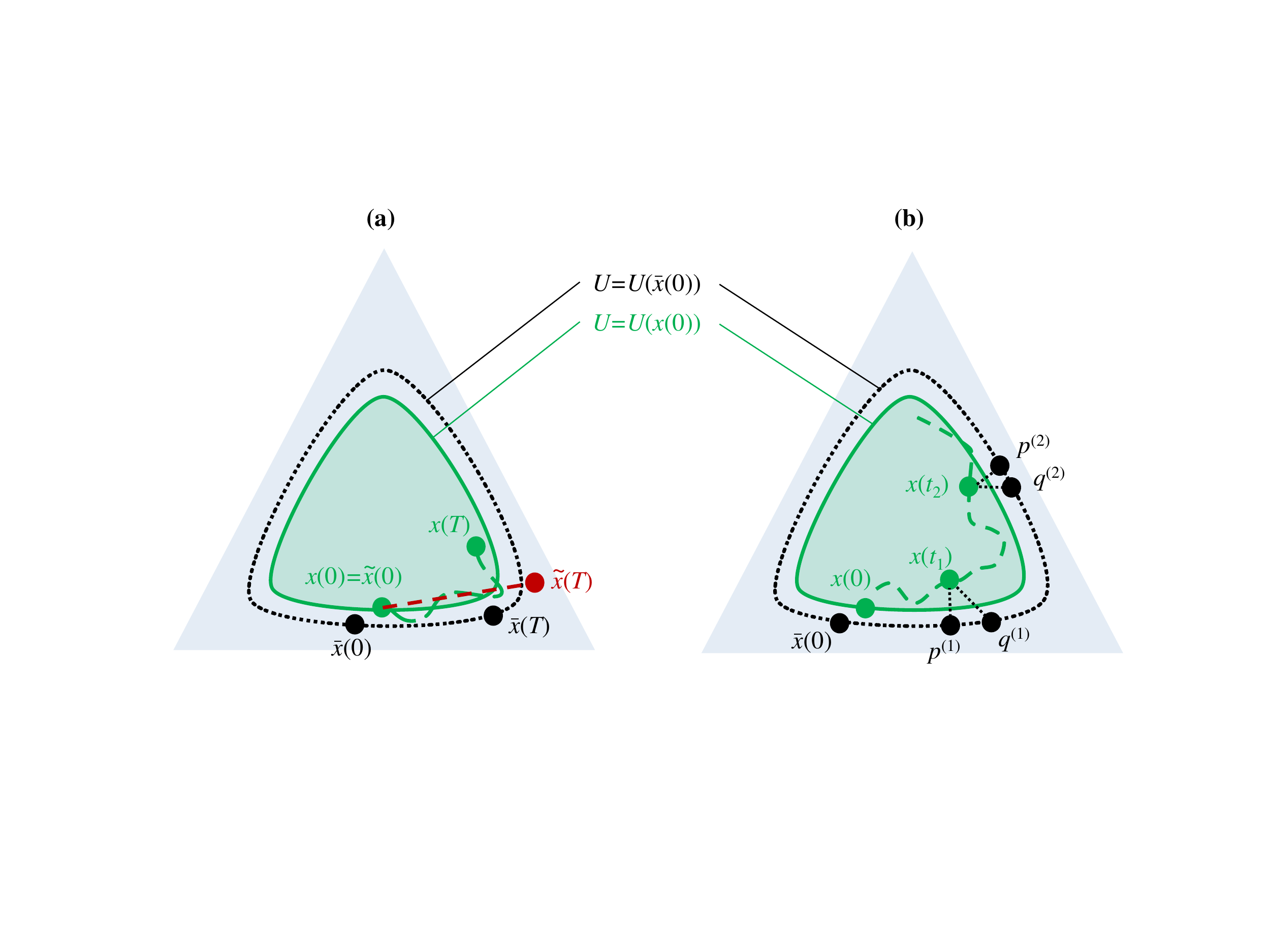}
\fi
\caption{Illustration of the proof of Lemma~\ref{lem:infdist}: (a) Short-term view ($T < n^{2/3}$) and (b) Long-term view ($T>n^{2/3}$).}
\label{fig1}
\end{figure}
To show Claim (i), we apply the concentration bound~\eqref{eq:conc} to each of the random processes $X_a(t) = x_a(t) - \contz x_a(t)$, for $a\in \{1,2,3\}$. We have $X_a(t) = \sum_{\tau=1}^t \delta_a(\tau)$, where $\delta_a(t)$ is given as:
$$
\delta_a(t) = \Delta x_a(t-1) - \frac{1}{n} x_a(0)( x_{a+1}(0) -  x_{a+2}(0)),
$$
and $\Delta x_a$ is a random variable with distribution described by \eqref{eq:deltana}. We have $|\delta_a(t)| \leq \frac{2}{n}$ and
\begin{equation}\label{eq:edelta}
|\E\delta_a(t)| = \frac{1}{n} |x_a(t-1)( x_{a+1}(t-1) -  x_{a+2}(t-1)) - x_a(0)( x_{a+1}(0) -  x_{a+2}(0))| < \frac{6t}{n^2} \leq \frac{6T}{n^2},
\end{equation}
where we took into account that $|x_i(t-1)-x_i(0)|<t/n$, for all $i\in\{1,2,3\}$ by the definition of the population dynamics, which changes the size of each species by at most $1$ in each step. Thus, applying the concentration bound with $c = \frac{2}{n}$, $\alpha = \frac{6T}{n^2}$, and $\eps = \log^2 n$, we have that for $T\leq n^{2/3}$:
$$
\forall_{t\in\{1,\ldots,T\}}\
|x_a(t) - \contz x_a(t)| = |X_a(t)| \leq  \frac{2 T^{0.5}\log n}{n} + \frac{6 T^2}{n^2} = \Otilde(T^{0.5}/n)
$$
holds with very high probability $1-e^{-\Omega(T^{\eps})} = 1-e^{-\Omega(\log^2 n)}$. Applying the union bound over all $a\in \{1,2,3\}$, we obtain the claim.

We now proceed to prove Claim (ii), assuming $T \leq n^{2/3}$ (see Fig.~\ref{fig1}(a)). By Claim (i), we have $d_\infty(x(t),\contz x(t)) = \Otilde(T^{0.5}/n)$ for all $t \leq T$, w.v.h.p. It remains to observe that the continuous process is also close to the linear approximation $\contz x$ (compare with~\eqref{eq:edelta}):
\begin{align*}
&d_\infty(\cont x(t),\contz x(t)) \leq \\
&\leq \Delta + \max_{a\in\{1,2,3\}}\int_{\tau=0}^t \frac{1}{n}\left| \cont x_a(\tau)( \cont x_{a+1}(\tau) - \cont  x_{a+2}(\tau)) - x_a(0)( x_{a+1}(0) -  x_{a+2}(0))\right| d\tau \leq\\
&\leq \Delta + \max_{a\in\{1,2,3\}}\int_{\tau=0}^t \frac{1}{n}\left| \cont x_a(\tau)( \cont x_{a+1}(\tau) - \cont  x_{a+2}(\tau)) - \cont x_a(0)( \cont x_{a+1}(0) -  \cont x_{a+2}(0))\right| d\tau +\\
& \phantom{\leq}+ \max_{a\in\{1,2,3\}}\int_{\tau=0}^t \frac{1}{n}\left| \cont x_a(0)( \cont x_{a+1}(0) - \cont  x_{a+2}(0)) - x_a(0)( x_{a+1}(0) -  x_{a+2}(0))\right| d\tau \leq\\
&\leq \Delta + \int_{\tau=0}^t \frac{6\tau}{n^2} d\tau
+ \int_{\tau=0}^t \frac{6\Delta}{n} d\tau
= \Delta + \frac{3t^2}{n^2} + \frac{6 t \Delta}{n}\leq \Delta + \frac{3T^2}{n^2} + \frac{6T \Delta}{n} = \Oh(\Delta + T^{0.5}/n).
\end{align*}
Taking into account that $d_\infty(x(t),\cont x(t)) \leq  d_\infty(x(t),\contz x(t)) + d_\infty(\contz x(t),\cont x(t))$, we obtain the claim.

To prove Claim (iii), we only need to consider the case of $T> n^{2/3}$, which is not covered by the stronger Claim (ii). Let $0=t_0 < t_1 < \ldots < t_l = T$ be fixed integer moments of time chosen so that $t_i - t_{i-1} \leq n^{2/3}$, for $1\leq i \leq l$, and $l = \Oh(T / n^{2/3})$. For $1\leq i \leq l$, let $p^{(i)}$ denote the random variable representing the point on the orbit of the considered continuous dynamics $\cont x$, located closest in the infinity norm to the point $x(t_i)$ obtained after $t_i$ steps of evolution of the discrete dynamics:
$$
p^{(i)} \in \arg\min_p \{d_\infty(p, x(t_i)) : U(p) = U(\cont x(0))\},
$$
and for consistency of notation, let $p^{(0)} = \cont x(0)$.
By Lemma~\ref{lem:convU}, we have w.v.h.p.\ that for all $0\leq i \leq l$:
$
d_U (x(0), x(t_i)) = \Otilde(T^{0.5}/n^{1-\gamma})
$
and taking into account that
$
d_U(x(0), p^{(i)}) \leq \Delta,
$
we obtain
$$
d_U (x(t_i), p^{(i)}) = \Otilde(\Delta + T^{0.5}/n^{1-\gamma}).
$$
By Lemma~\ref{lem:duinfty}, we have $d_\infty (x(t_i), p^{(i)}) < d_U (x(t_i), p^{(i)})$, thus,
\begin{equation}\label{eq:distxp}
d_\infty (x(t_i), p^{(i)})= \Otilde(\Delta + T^{0.5}/n^{1-\gamma}).
\end{equation}

Now, for $1\leq i \leq l$, define $q^{(i)}$ as the point obtained by applying the continuous dynamics~\eqref{eq:contrps} to point $p^{(i-1)}$ for $\delta t_i = t_i - t_{i-1}$ steps. Note that $q^{(1)} = \cont x(t_1)$, but that for $i>1$, a similar correspondence need not hold precisely. However, since $\delta t_i \leq n^{2/3}$, we can apply Claim $(ii)$ to the discrete dynamics starting at point $x(t_{i-1})$ and the continuous dynamics starting at point $p^{(i-1)}$ and running for $\delta t_i$ steps. Substituting $\Otilde(\Delta + T^{2/3}/n^{1-\gamma})$ for ``$\Delta$'' and $\Oh(n^{2/3})$ for ``$T$'' in Claim (ii), we have:
\begin{equation}\label{eq:distxq}
d_\infty (x(t_i), q^{(i)}) = \Otilde\left(\frac{(n^{2/3})^{0.5}}{n} + (\Delta + T^{0.5}/n^{1-\gamma})\right) = \Otilde (\Delta + T^{0.5}/ n^{1-\gamma}),
\end{equation}
with very high probability. Combining the above bound with~\eqref{eq:distxp}, we obtain:
$$
d_\infty (p^{(i)}, q^{(i)}) = \Otilde (\Delta + T^{0.5}/ n^{1-\gamma}).
$$
By applying the union bound, the above also holds for all $1 \leq i \leq l$, w.v.h.p. Now, we introduce the following auxiliary notation. For any pair of points $p, q$, such that $U(p) = U(q)$, we define by $\cont t(p,q)$ the smallest value of time $t$ such that either point $p$ is obtained by applying the continuous dynamics \eqref{eq:contrps} to starting point $q$ for time $t$, or vice versa. For the considered set of points, the following bound holds:
$$
\cont t(\cont x(T), q^{(l)}) \leq \sum_{i=1}^{l-1} \cont t (p^{(i)}, q^{(i)}).
$$
By Lemma~\ref{lem:disttime}, we obtain that for a pair of points $p,q$ at distance $o(n)$ on the considered orbit $-10 > U = U(\cont x (0)) > -\gamma\ln n$, the following relations hold:
$$
d_\infty (p,q) = \Omega(\cont t (p,q)/n^{1+\gamma})
$$
and
$$
d_\infty (p,q) = \Oh(\cont t (p,q)/n).
$$
Thus, we have:
\begin{align*}
d_\infty (\cont x(T), q^{(l)}) &= \Otilde( l \cdot (\Delta + T^{0.5} n^{\gamma-1})\frac{n^{1+\gamma}}{n}) = \Otilde( (T n^{\gamma-2/3}) \cdot (\Delta + T^{0.5} n^{\gamma-1})).
\end{align*}
Combining the above with bound~\eqref{eq:distxq} for point $q^{(l)}$, we get:
\begin{align*}
d_\infty (\cont x(T), x(T)) &= \Otilde( (T n^{\gamma-2/3}) \cdot (\Delta + T^{0.5} n^{\gamma-1})) + \Otilde (\Delta + T^{0.5} n^{\gamma-1})=\\
&= \Otilde( (T n^{\gamma-2/3}) \cdot (\Delta + T^{0.5} n^{\gamma-1}))
\end{align*}
The above holds w.v.h.p. By stating the obtained claim for all values of $T' \in [n^{2/3}, T]$, and applying a union bound over all such $T'$, as well as for the case of smaller $T'$ covered by Claim (ii), we obtain Claim (iii).

To show claim (iv), we consider the set of points $p(U)$ which lie at minimum distance $d_\infty$ from point $\cont x (T)$, taken over all potential orbits $U = U(x(0))\pm \Otilde (T^{0.5}/n^{1-\gamma})$. Taking into account Lemma~\ref{lem:convU}, we will show that the evolution $x(t)$ will intersect with set $p(U)$ after $T' = (1\pm o(1)) T$ steps; in this way, we will obtain the claim directly from Lemma~\ref{lem:duinfty}. By claim (iii), after $T$ steps, our discrete evolution has reached a point $x(T)$ such that $d = d_\infty(x(T),\cont x(T)) = \Otilde( (T n^{\gamma-2/3}+1) \cdot (\Delta + T^{0.5} n^{\gamma-1})) \leq \Otilde( T n^{-1-2\gamma})$, w.v.h.p. Taking into account Lemma~\ref{lem:disttime}, we conclude that the intersection of the trajectory with set $p(U)$ took place at time $T' = T \pm O(d n^{1+\gamma}) = (1\pm o(1)) T$.
\end{proof}
}

\InJournal{
\subsection{The coupling framework}\label{sec:rpscoupling}

For a fixed absorbing state $r \in \{(1,0,0), (0,1,0), (0,0,1)\}$, we denote by $p_r(x)$ the probability that the RPS population protocol~
\eqref{eq:deltana}, starting from initial state $x$, reaches absorbing state $r$. To show that for a pair of states $x,y$ we have $p_r(x)
\approx p_r(y)$, we define a coupling on the evolution of states $x$ and $y$ over time. Formally, we denote by $R$ the set of possible actions which can be taken in any step by the considered population protocol. In the context of RPS, we define $R = \{1,\ldots, n(n-1)\}$, and treat elements of $R$ as identifiers of ordered pairs of elements of the population. By convention, we will assume that the elements are ordered in each step, so that in a state $(x_1, x_2, x_3)$, the $x_1$ elements belonging to species $1$ have the smallest identifiers, while the $x_3$ elements belonging to species $3$ have the largest. For any $s\geq 0$ and sequence $S = (S_1, \ldots, S_s) \in R^s$, we denote by $x[S]$ the state reached by executing the population protocol for $s$ steps, starting from state $x$, choosing in the $i$-th step, $1\leq i \leq s$, an action of the protocol encoded by the value $S_i \in R$.

We will define $F$ to be a random number generator over $R$ if subsequent calls to $F$ return elements from $R$ chosen uniformly and independently at random; formally, $F = (F_1, F_2, \ldots,)$, where all $F_i \sim unif(R)$ are independent random variables. By a slight abuse of notation, we will denote by $x[F,s]$ the random variable in the state space corresponding to the sequence of actions $x[(F_1, \ldots, F_s)]$.

We now provide a method for ``skipping'' some elements returned by a random number generator, to create a new random number generator. We recall that if the decision whether or not to skip the $i$-th element in the sequence returned by a random number generator depends only on the values of the previously generated elements (up to the $(i-1)$-st, then such a generator remains unbiased. Formally, we call a sequence of functions $\phi = (\phi_1, \phi^\circ_1, \phi_2, \phi^\circ_2, \ldots)$, with $\phi_i : R^{i-1} \to \{0,1\}$ being measurable (an elimination function) and $\phi^\circ_i : F_i \mapsto \phi^\circ_i(F_i) \sim unif(R)$ acting as a transformation of random variable $F_i$, a \emph{resampler} on random number generators. For a random number generator $F$, we denote by $\phi F = ((\phi F)_1, (\phi F)_2, \ldots)$ the sequence of random variables with values in $R$, realized inductively as follows. The first random variable $(\phi F)_1$ takes the value of $\phi^\circ_{i_1}(F_{i_1})$, where $i_1 > 0 $ is the smallest index such that $\phi_{i_1} (F_1, F_2, \ldots, F_{i_1-1}) = 1$. Next, $(\phi F)_2$ takes the value of $\phi^\circ_{i_2}(F_{i_2})$, where $i_2 > i_1$ is the smallest index larger than $i_1$ such that $\phi_{i_1} (F_1, F_2, \ldots, F_{i_2-1}) = 1$, and so on. We note that the resampled distribution $\phi F$ is also a (uniform, independent) random number generator over $R$.

We are now ready to formulate the standard coupling technique in terms of applying resamplers to random number generators.

\begin{proposition}[Coupling of delayed walks]
Let $F$ be a random number generator over $R$, let $x, y$ be two arbitrary starting points in the state space, and let $r$ be an arbitrary absorbing state. If there exist two filters $\phi_x$, $\phi_y$ on a random number generator such that $\lim_{t\to +\infty} (x[\phi_x F,t]) = \lim_{t\to +\infty} (y[\phi_y F,t])$ holds with probability $1-\eps$, for some $\eps \geq 0$, then $|p_r(x) - p_r(y)|\leq \eps$.
\end{proposition}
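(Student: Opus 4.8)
The plan is to reduce the statement to the standard coupling inequality, using the key property established just above in the construction of resamplers: applying any resampler (filter) to a uniform, independent random number generator again yields a uniform, independent random number generator. My first step would be to record the consequence of this fact, namely that the coupled walk $x[\phi_x F, \cdot]$ is distributed exactly like the free walk $x[F, \cdot]$, and similarly $y[\phi_y F, \cdot]$ like $y[F, \cdot]$. Since by Theorem~\ref{thm:lvtime} each walk almost surely reaches an absorbing state in finite time and then stays frozen there, the limits $\lim_{t\to+\infty} x[\phi_x F, t]$ and $\lim_{t\to+\infty} y[\phi_y F, t]$ are well-defined random variables, and law-preservation gives
$$p_r(x) = \Pr\!\left[\lim_{t\to+\infty} x[\phi_x F, t] = r\right], \qquad p_r(y) = \Pr\!\left[\lim_{t\to+\infty} y[\phi_y F, t] = r\right].$$
The point of this reformulation is that both absorption probabilities are now expressed over the \emph{same} source of randomness $F$, which is precisely what lets the coupling bite.

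Next I would introduce the events $A_x = \{\lim_t x[\phi_x F, t] = r\}$ and $A_y = \{\lim_t y[\phi_y F, t] = r\}$, together with the coupling event $C = \{\lim_t x[\phi_x F, t] = \lim_t y[\phi_y F, t]\}$, which by hypothesis satisfies $\Pr[C] \ge 1 - \eps$. The crux is the elementary observation that on $C$ the two limits coincide, so $A_x$ and $A_y$ carry the same indicator there; hence $A_x \cap C = A_y \cap C$ and therefore $\Pr[A_x \cap C] = \Pr[A_y \cap C]$. Splitting $p_r(x) = \Pr[A_x]$ and $p_r(y) = \Pr[A_y]$ according to whether $C$ occurs and cancelling the equal $C$-parts leaves
$$|p_r(x) - p_r(y)| = \left|\Pr[A_x \cap C^c] - \Pr[A_y \cap C^c]\right| \le \Pr[C^c] \le \eps,$$
where the penultimate inequality uses that both $\Pr[A_x \cap C^c]$ and $\Pr[A_y \cap C^c]$ lie in the interval $[0, \Pr[C^c]]$. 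This is exactly the claimed bound.

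Since this is ultimately the textbook coupling inequality, I do not expect any single computation to be deep. The step demanding the most care is the first one: justifying rigorously that resampling preserves the law of the walk, so that the coupled processes are genuine realizations of the free walk and $p_r$ is computed correctly through them. This rests on the resampler construction above, and one must verify measurability of the elimination functions $\phi_i$ together with the fact that the skipped indices almost surely induce a legitimate reindexing. A closely related subtlety worth spelling out is that the limiting absorbing state is a measurable function of $F$ and is reached in almost surely finite time, so that $A_x$, $A_y$, and $C$ are well-defined measurable events; both facts follow from the finite-time convergence guaranteed by Theorem~\ref{thm:lvtime} applied to the RPS dynamics on $K_n$.
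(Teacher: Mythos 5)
Your proposal is correct and follows essentially the same route as the paper, which disposes of the proposition in one sentence by observing that the marginals $\phi_x F$ and $\phi_y F$ each represent an unbiased evolution of the system; you simply make explicit the law-preservation step and the standard splitting on the coupling event $C$ that the paper leaves implicit.
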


The validity of the above proposition follows from the fact that each of the marginals $\phi_x F$, $\phi_y F$ represents an unbiased evolution of the system. Informally, we will apply the above technique as follows. We will sample numbers from $R$ from a random number generator. Based on all the numbers sampled so far, we will decide if the \emph{next} number to be sampled should be applied only in the evolution of the process originating from $x$, in the evolution of the process originating from $y$, or in both those processes. (If only one process undergoes evolution in the given step, the other process can be thought of as delayed in this step). In the following, we will consider the walks in the state space originating from $x$ and $y$, and we will execute the following types of phases:
\begin{itemize}
\item One of the walks progresses until a given termination condition is met, while the other walk is delayed.
\item Both walks progress simultaneously in a given step $t$, with one walk following action $\phi_{x\ t}^{\circ}(F_t)$ and the other following action $\phi_{y\ t}^{\circ}(F_t)$ in this step.
\end{itemize}
}
\noindent
We are now ready to apply the coupling technique to obtain the main technical result of this section.

\begin{lemma}\label{lem:coupling}
Fix $\gamma = 0.005$ and $\eps = 0.05$. Let $x(0) = (x_1(0), x_2(0), x_3(0))$ be arbitrarily fixed with $-12 > U(x(0)) > -\gamma \ln n$, and let $y(0) = (x_3(0), x_1(0), x_2(0))$. Then, there exist a coupling of $x$ and $y$ which leads to the same absorbing state with probability $1 - \Otilde (n^{-\eps})$.
\end{lemma}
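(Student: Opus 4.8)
The plan is to combine the rotational symmetry of the continuous orbit with the orbit-tracking estimates of Lemma~\ref{lem:infdist}, and only then to merge the two walks into a single trajectory. First I would note that $U(y(0)) = U(x(0))$, since $y(0)$ is a cyclic permutation of $x(0)$ and the potential $U = \ln(x_1 x_2 x_3)$ is permutation-invariant; hence $x(0)$ and $y(0)$ lie on a common orbit of~\eqref{eq:contrps}. The continuous dynamics is invariant under the coordinate rotation $a \mapsto a+1$ (rotation by $2\pi/3$ about $(1/3,1/3,1/3)$ in the plane $x_1+x_2+x_3 = 1$), and this rotation advances the flow by exactly one third of the period; therefore the continuous process started at $x(0)$ satisfies $\cont x(\cont T) = y(0)$, where $\cont T$ is the time to traverse $1/3$ of the orbit. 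Using Lemma~\ref{lem:disttime} with $p = x(0)$, $q = y(0)$ (whose $d_\infty$-separation is $\Theta(1)$ when $U < -12$), I would bound $n\, d_\infty \le \cont T \le 20\, n\, d_\infty\, e^{-U}$, so that $\cont T$ lies in $[\Theta(n), \Theta(n^{1+\gamma})]$ and in particular satisfies the hypotheses $n^{6\gamma} \le \cont T \le n^{4/3-8\gamma}$ of Lemma~\ref{lem:infdist}(iv) for $\gamma = 0.005$.

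Working inside the resampler framework of Section~\ref{sec:rpscoupling}, Phase~1 of the coupling advances walk $x$ for $T' = (1+o(1))\cont T$ steps while freezing $y$ at $y(0)$. Applying Lemma~\ref{lem:infdist}(iv) with $\cont x(0) = x(0)$ and $\Delta = 0$ gives, w.v.h.p.,
$$
d_\infty\bigl(x(T'),\, y(0)\bigr) = d_\infty\bigl(x(T'),\, \cont x(\cont T)\bigr) = \Otilde(\cont T^{0.5}\, n^{\gamma-1}) = \Otilde(n^{-0.49}),
$$
so after Phase~1 the two configurations differ in each species count by at most $D = \Otilde(n^{0.51})$, which is negligible against the smallest species size $\Omega(n^{1-\gamma})$ guaranteed throughout by $U > -\gamma\ln n$ together with Lemma~\ref{lem:convU}.

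Phase~2 is the merge and is the crux of the argument. I would now let both walks run and couple them so that the two degrees of freedom of their discrepancy --- the potential gap $d_U(x,y)$ and the phase offset along the common orbit --- are each driven to zero; once the two integer configurations coincide, the identity coupling forces the same absorbing state thereafter. Because the expected increment of the discrepancy vector is coupling-independent (it is fixed by the marginals) and the linearization of~\eqref{eq:contrps} about an orbit is a neutral center, no coupling can create net contracting drift; instead I would use a reflection-type coupling so that each component performs a bounded-increment, essentially zero-drift random walk of per-step variance $\Theta(1)$, and apply the anti-concentration half of Lemma~\ref{lem:martingale} to steer each component to the value producing exact coincidence (converting between $d_\infty$ and $d_U$ via Lemma~\ref{lem:duinfty}). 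The merge is completed within $M \le n^{5/3}$ steps, over which the concentration bound of Lemma~\ref{lem:convU} keeps $U$ within $o(1)$ of its initial value, hence below $-11$ and above $-\gamma\ln n$, so that no species is eliminated during the merge. This establishes a coupling that reaches a common absorbing state with probability $1 - \Otilde(n^{-\eps})$, as claimed; fed into the Coupling of delayed walks proposition it moreover yields $|p_r(x) - p_r(y)| = \Otilde(n^{-\eps})$.

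I expect the merge in Phase~2 to be the main obstacle, for two reasons. First, the discrepancy lives on a two-dimensional constrained lattice ($\sum_a z_a = 0$) and rotates under the neutral linearized flow while it diffuses, so reaching \emph{exact} coincidence is a two-dimensional hitting problem; I would circumvent this by first aligning the potentials (making the product $n_1 n_2 n_3$ agree) and then, on the common orbit, aligning the phase, so that each stage becomes a one-dimensional application of Lemma~\ref{lem:martingale}, but the bookkeeping of the non-constant variance through each stage is delicate. Second, one must preclude premature absorption: if some species vanished while the discrepancy were still nonzero, the two walks could fall into different absorbing states, so the merge budget $M$ must be balanced against the outward drift of $U$ established in Section~\ref{sec:polynomial}. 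Tracking these competing scales --- the Phase~1 discrepancy $D = \Otilde(n^{0.51})$, the merge budget $M \le n^{5/3}$, and the anti-concentration failure --- is precisely what fixes the final exponent $\eps = 0.05$.
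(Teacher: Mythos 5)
Your Phase~1 matches the paper's: advance $x$ for $T'=(1+o(1))\cont T$ steps while freezing $y$, invoke Lemma~\ref{lem:infdist}(iv) with $\Delta=0$, and land at $d_\infty(x(T'),y(0))=\Otilde(\cont T^{0.5}n^{\gamma-1})$. You have also correctly identified the two degrees of freedom of the discrepancy (potential gap and phase offset along the orbit) and the right tools for each. The gap is in the merge. A single pass of ``align the potentials, then align the phase'' does not produce coincidence: hitting $d_U\approx 0$ via the anti-concentration bound of Lemma~\ref{lem:martingale} takes $T=\Theta(n^{2+7\gamma+\eps}D^2)$ steps, during which the phase offset grows, and the subsequent catch-up along the orbit (one walk delayed) takes $\Theta(d'_i n^{1+\gamma})$ steps during which stochastic fluctuations re-open a potential gap of order $\sqrt{t}/n$. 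What one pass buys is not coincidence but a \emph{contraction}: the paper shows that with the chosen constants the new discrepancy is at most $D/2$ in both $d_U$ and $d_\infty$, and then iterates $O(\log n)$ times, accumulating a failure probability of $\Otilde(n^{-\eps})$ per iteration. Verifying that the contraction factor is genuinely below $1$ --- i.e., that $\Otilde(D n^{-1/12+8.5\gamma+0.75\eps})<D/2$ --- is exactly the computation that fixes $\gamma=0.005$ and $\eps=0.05$, and it is absent from your argument.

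Even after the iteration terminates you only reach $d_\infty<n^{-1+\rho}$ with $\rho=0.07$, i.e., species counts differing by up to $n^{\rho}$ individuals; ``steering each component to the value producing exact coincidence'' by anti-concentration alone does not finish the job, because you must make two integer coordinates agree simultaneously and the anti-concentration bound only controls one level-crossing at a time. The paper's endgame decouples this: Phase~3 equalizes one species count exactly by running only the lagging walk and exploiting the deterministic drift $\Omega(n^{-1-\gamma})$ of the fastest-growing species; Phase~4 then couples the two walks so that this species stays matched while the difference $Z(t)$ of the remaining discrepancy performs a lazy one-dimensional random walk started at $\Oh(n^{\gamma+\rho})$, which hits $0$ within $\Otilde(n^{4\gamma+2\rho})\ll n^{0.5-\eps}$ steps w.v.h.p. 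Without these two discrete phases (or a substitute), your coupling never actually coalesces. A smaller slip: the per-step variance of the discrepancy processes is not $\Theta(1)$ but $\Thetatilde(n^{-2-\gamma})$, since each interaction moves a single agent; this is what forces the $n^{2+7\gamma+\eps}D^2$ time scale in the potential-alignment step.
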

\InConference{\begin{proof}[sketch]}
\InJournal{\begin{proof}}
The proof proceeds by a coupling of walks originating from $x$ and $y$\InJournal{ in the above-described framework}. By a slight abuse of notation, we will denote by $x(t)$ and $y(t)$ the position of each of the two walks in the state space after $t$ steps\InJournal{ (i.e., after $t$ applications of the random generator $F$)}, which may include steps in which a given walk is delayed. Our goal is to make points $x(t)$ and $y(t)$ coalesce within a small number of steps $T$, i.e., to obtain $x(T)=y(T)$ with probability $1-\Oh(n^{-\eps})$, where $T \ll n^{1.33}$. \InJournal{Consequently, taking into account Lemma~\ref{lem:convU}, we will silently assume that the bounds $-(1+o(1)) 12 > U(x(t)) > - (1+o(1))\gamma \ln n $ and $-(1+o(1)) 12 > U(y(t)) > - (1+o(1))\gamma \ln n $ are preserved throughout the coupling.}

The coupling proceeds in five phases\InJournal{, illustrated in Fig.~\ref{fig2}}. \InJournal{We start by providing a high level overview,}\InConference{We limit this proof sketch to a high-level overview,} thinking for now of the potential $U(x(0)) = \Theta(1)$ to simplify calculations. In Phase 1, point $x$ approaches point $y$, which is stopped. In this way, the infinity norm distance between $x$ and $y$ is reduced, at the cost of increasing the $d_U$ distance to slightly over $n^{-0.5}$. Next, in Phase 2 we run both walks independently, so that the distance $d_U$ in time follows a random evolution resembling a random walk, and after slightly more than $n$ steps, the value $d_U = 0$ is hit with sufficiently high probability. Whereas the walks are now orthogonally aligned, we also need to align them along the orbit, since we may at this point have them at a distance of $d_\infty > n^{-1/3}$ apart. By allowing the slower walk to catch up, we reduce $d_\infty$ to slightly more than $n^{-2/3}$, at the cost of increasing $d_U$ to a similar value. In this way, we have decreased the norm in both distances (from about $n^{-0.5}$ to about $n^{-2/3}$). We iterate Phase 2, reducing each time the distance between the two walks in both norms, up to an iteration in which the size of the populations in $x$ and $y$ differ by an arbitrarily small polynomial in $n$. \InJournal{(We remark that multiple iterations of Phase 2 may also be replaced by a path-coupling argument of the type used in~\cite{LPW06}, but this does not necessarily simplify the proof.)} At this point, only a very small number of time steps remains until coalescence. We first align the two states by evolving one of them until the size of one of the three populations is identical for $x$ and $y$ at the end of Phase 3, and then perform a standard coupling by correlating the evolution of $x$ and $y$ in Phase 4, so as to make the sizes of the other two populations meet for $x$ and $y$, while maintaining equality on the size of the population coalesced in Phase 3. Finally, after the coupling is achieved, we evolve the coalesced state into an absorbing state in Phase~5.
\InJournal{

\begin{figure}[ht]
\centering
\ifpdf
\mbox{%
\framebox{\includegraphics[height=6.5cm,trim=4.5cm 0.5cm 2.5cm 0.0cm,clip]{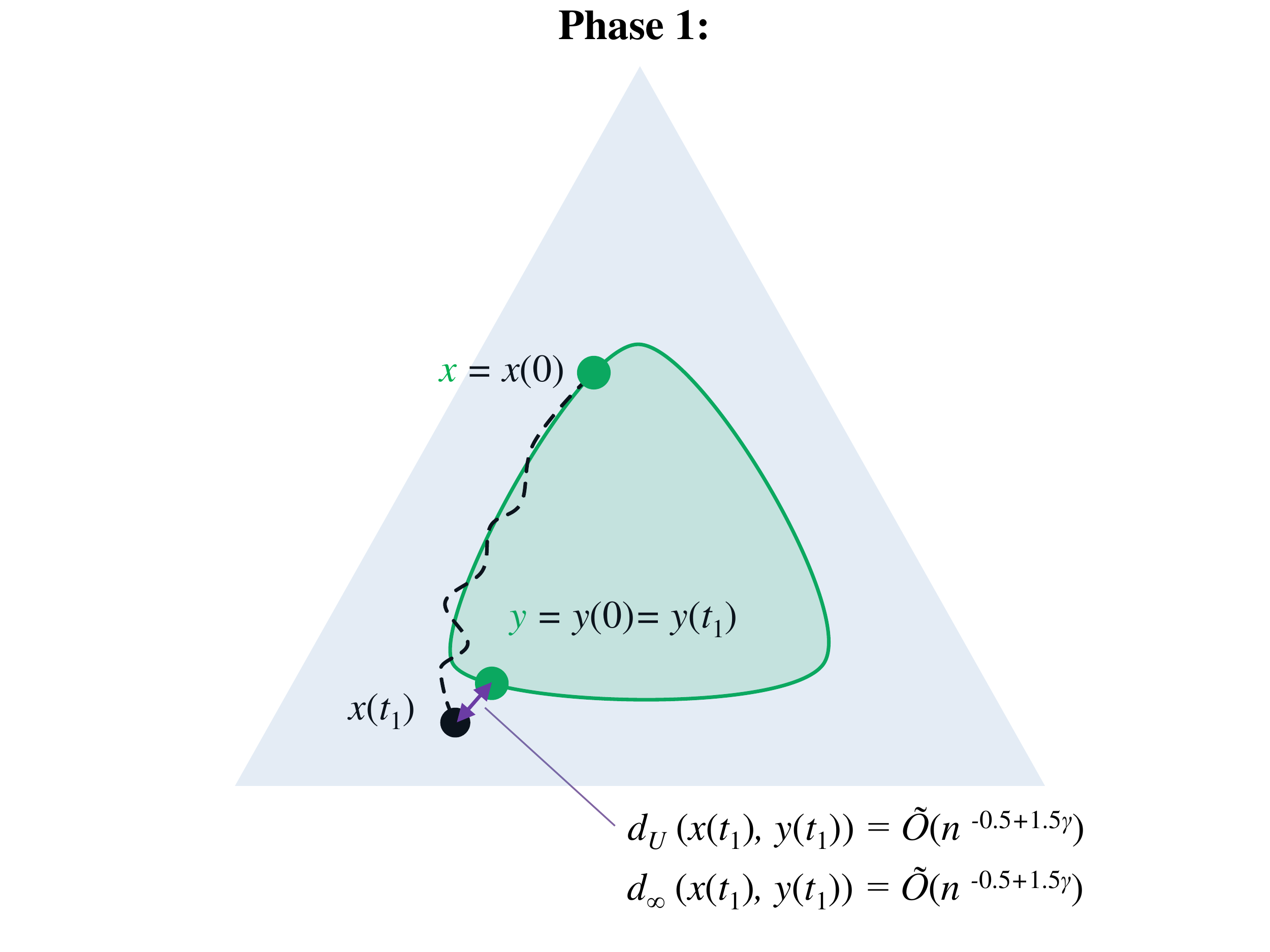}}
\framebox{\includegraphics[height=6.5cm,trim=0.5cm 0.5cm 4.5cm 0.0cm,clip]{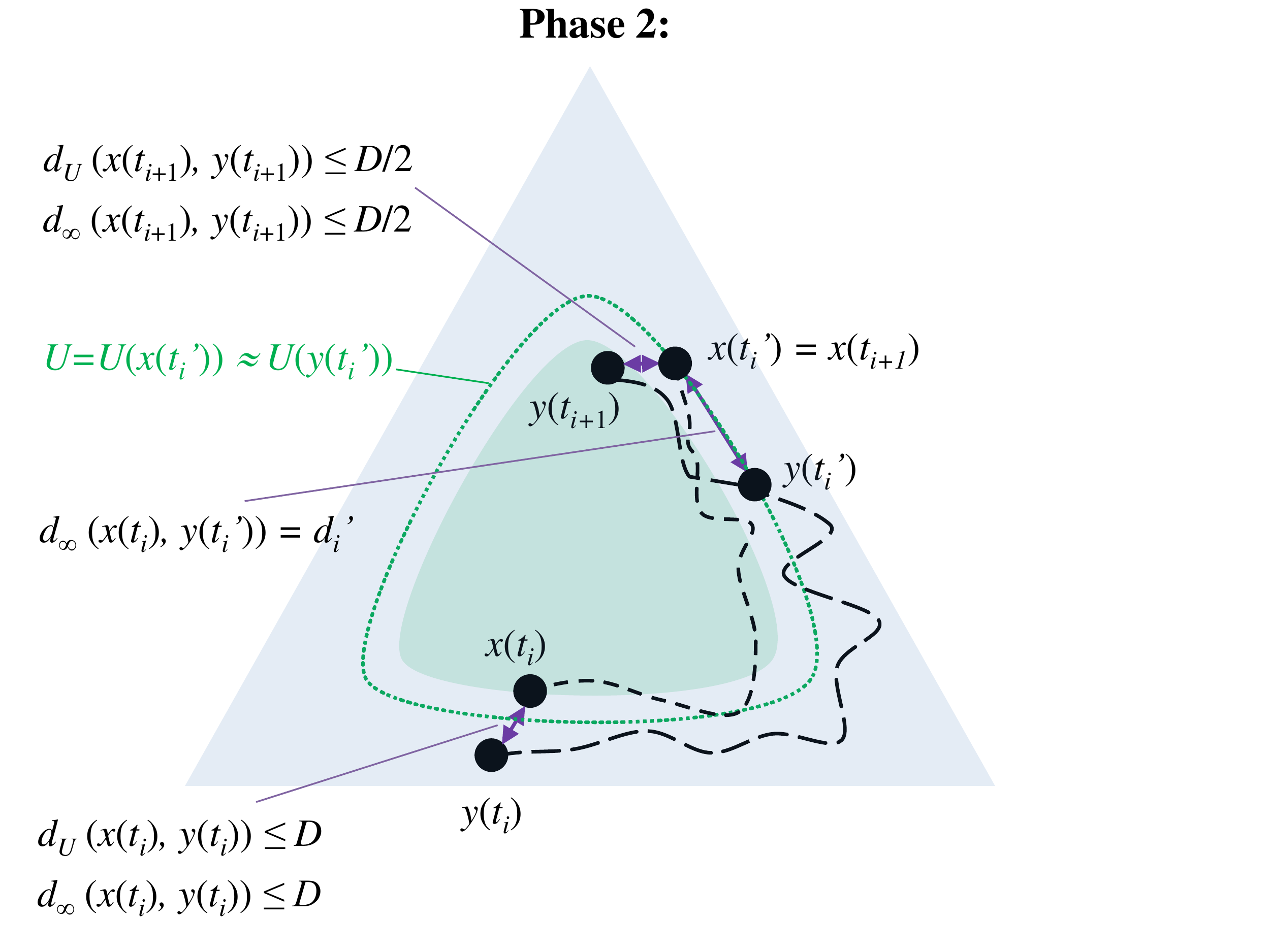}}}\\[0.8mm]
\framebox{\hspace{1.035cm}\includegraphics[height=3cm,trim=0cm 2.5cm 0cm 8cm,clip]{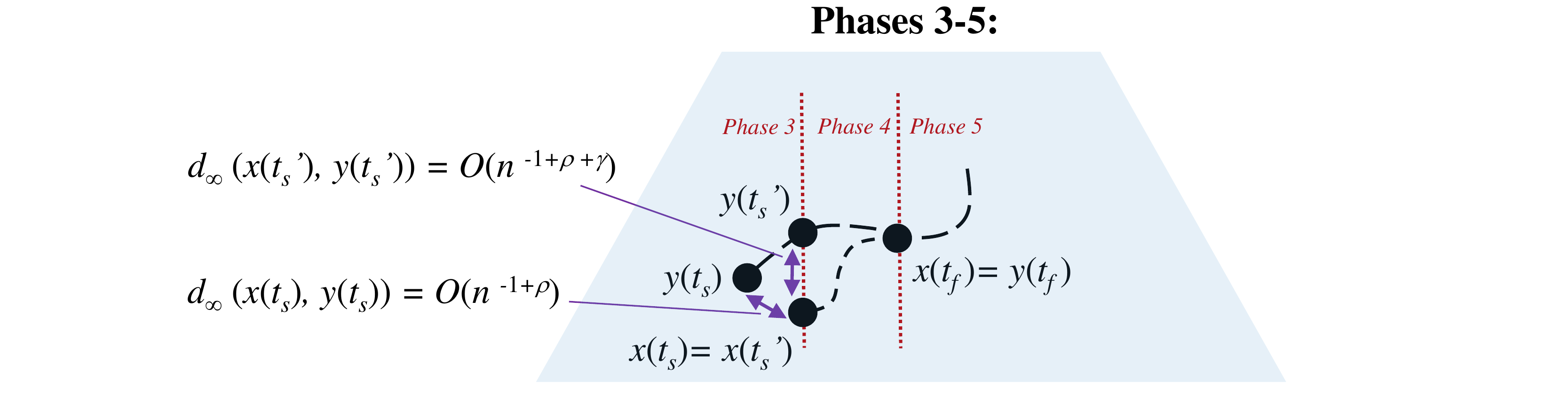}\hspace{1.035cm}}
\fi
\caption{Illustration of the coupling for the proof of Lemma~\ref{lem:coupling}}\label{fig2}
\end{figure}
\begin{itemize}
\item \emph{Phase $1$}. In the first $t_1$ steps, for all $t < t_1$ the walk $x(t)$ progresses while the walk $y(t)$ is delayed. We choose the duration of the phase so that $x(t_1)$ and $y(t_1)$ as close to each other in the infinity norm as possible, after walk $x(t)$ has traversed approximately $1/3$ of its orbit around the state space. Formally, let $T$ be the number of steps of the continuous evolution which transforms point $x(0)$ into point $y(0)$; we have by Lemma~\ref{lem:disttime} that $t_1 = \Oh(n / e^{U(x)}) = \Oh(n^{1+\gamma})$. Next, let $t_1$ be chosen as the time $T'=(1+o(1))T$ which follows from Lemma~\ref{lem:infdist}(iv), i.e., the time such that the discrete process starting from $x(0)$ after $T'$ steps satisfies $d_\infty(y(0),x(T')) = \Otilde(T^{0.5}n^{\gamma-1}) = \Otilde(n^{-0.5+1.5\gamma})$), w.v.h.p. Moreover, we also have $d_U(y(0),x(T')) = \Otilde(T^{0.5}n^{\gamma-1}) = \Otilde(n^{-0.5+1.5\gamma})$) by Lemma~\ref{lem:convU}, w.v.h.p. Since walk $x$ progresses for $t_1 = T'$ steps and walk $y$ is delayed throughout ($y(t_1) = y(0)$), at the end of this phase, we obtain $d_U(y(t_1), x(t_1)) = \Otilde(n^{-0.5+1.5\gamma})$ and $d_\infty(y(t_1), x(t_1)) = \Otilde(n^{-0.5+1.5\gamma})$, w.v.h.p.

\item \emph{Phase $2$} consists of a certain number (logarithmic in $n$) of iterations. We start iteration $i$ at time $t_i$ with $d_U(y(t_i), x(t_i)) \leq D$ and $d_\infty(y(t_1), x(t_1)) \leq D$, where $D\leq n^{-0.5+1.5\gamma}$. In each iteration, we start by running both walks in parallel until a time $t'_i$ at which the two evolutions are located on the same potential orbit (up to integer rounding): $d_U(y(t'_i), x(t'_i)) = \Oh(n^{\gamma-1})$. We introduce the following notation: let $r(t) \equiv U(x(t_i+t)) - U(y(t_i + t))$ and consider the process $X(t) = r(t) - r(0)$. We can use the following sum representation: $X(t) = \sum_{\tau=1}^t \delta(\tau)$, with the increment $\delta(\tau)$ given as $\delta(\tau) = r(\tau) - r(\tau-1).$ Notice that $|r(0)| = d_U(y(t_i), x(t_i)) = \Oh(n^{\alpha_i})$. We now apply the anticoncentration bound of Lemma~\ref{lem:martingale} to process $X(t)$ with $|r(0)|= d_\infty(y(t_i), x(t_i)) \leq D$ to bound the probability of $r(t)$ having opposite sign to $r(0)$, for some $t \in [0,T]$, where the length of the interval is suitably chosen ($T = n^{2 + 7 \gamma + \eps} D^2$). Without loss of generality of the argument, let $r(0)\leq 0$; then, we  and ask about the probability of the event $r(t)\geq 0$, which is equivalent to $X(t) = r(t) - r(0) > -r(0) = D$ occurring. (If $r(0)>0$, we apply the anticoncentration bound to process $-X$ instead.) Now, by analyzing the change of potential $U$ in a step of the discrete dynamics~\eqref{eq:deltana}, for the different pairs of possible agents, we obtain that Lemma~\ref{lem:martingale} is applicable to process $X$ with the following parameters (compare with the proof of Lemma~\ref{lem:convU} for a related derivation):
$$
\begin{cases}
c = \Otilde(n^{-1+\gamma})\\
\alpha = \Otilde(n^{-2+\gamma})\\
\sigma = \Omegatilde(n^{-1-0.5\gamma})\\
T = n^{2 + 7 \gamma + \eps} D^2,\\
\end{cases}
$$
where $\eps$ is an absolute constant belonging to the range $0 < \eps < 0.25 - 4.75\gamma$.

Then, we can bound the probability of failure for equation~\eqref{eq:anticonc} as follows:
\begin{align*}
&82 \left(\frac{c^2}{\sigma^2}\frac{(D+\alpha T) \ln T}{\sigma \sqrt T}\right)^{2/3} =
\Otilde\left(\left(\frac{(n^{-1+\gamma})^2}{(n^{-1-0.5\gamma})^2}\frac{D+ n^{\gamma-2} T}{n^{-1-0.5\gamma} \sqrt T}\right)^{2/3}\right) =\\
&=\Otilde\left(n^{2/3+7\gamma/3}\frac{D^{2/3}}{T^{1/3}} + n^{-2/3+3\gamma} T^{1/3}\right)=\Otilde(n^{-\eps}).
\end{align*}
At this point, the time shift of the two processes can be bounded as:
\begin{align*}
d'_i = d_{\infty}(x(t'_i), y(t'_i)) &=
\Otilde( (T n^{\gamma-2/3}+1) \cdot (D + T^{0.5} n^{\gamma-1})) = \\
& = \Otilde((D^2 n^{4/3+ 8\gamma + \eps}+1)D n^{4.5\gamma + 0.5\eps}).
\end{align*}
Now, we proceed from time $t'_i$ to time $t_{i+1}$ by aligning the two walks. Without loss of generality, suppose that $x(t'_i)$ is lagging behind along its trajectory with respect to $y(t'_i)$ (i.e., that a continuous evolution from point $x(t'_i)$ reduces its distance to point $y(t'_i)$). We now continue the evolution of point $x(t'_i)$ until it becomes close to point $y(t'_i)$, which remains motionless. By Lemma~\ref{lem:infdist} (iv), we then have:
\begin{align*}
d_{\infty}(x(t_{i+1}), y(t_{i+1})) &= \Otilde \left(\frac{(t_{i+1}-t'_i)^{0.5}}{n^{1-\gamma}}\right) = \Otilde \left(\frac{({d'_i} n^{1+\gamma})^{0.5}}{n^{1-\gamma}}\right)=\\
& = \Otilde((D n^{2/3+ 4\gamma + 0.5\eps}+1)D^{0.5} n^{-0.5+3.75\gamma+0.25\epsilon}).
\end{align*}
For $D > n^{-2/3 -4\gamma-0.5\eps}$, we further bound this as:
\begin{align*}
d_{\infty}(x(t_{i+1}), y(t_{i+1})) &= \Otilde(D n^{2/3+ 4\gamma + 0.5\eps} (n^{-0.5+1.5\gamma})^{0.5} n^{-0.5+3.75\gamma+0.25\epsilon}) = \\
&= \Otilde(D n^{-1/12+ 8.5\gamma + 0.75\eps}) < D/2 =  d_{\infty}(x(t_{i}), y(t_{i})) /2,
\end{align*}
where the final inequality holds when we set $\gamma = 0.005$, $\eps = 0.05$, and sufficiently large $n$.

For $D \leq n^{-2/3-4\gamma-0.5\eps}$, we apply the following bound:
\begin{align*}
d_{\infty}(x(t_{i+1}), y(t_{i+1})) &= \Otilde(D^{0.5} n^{-0.5+3.75\gamma+0.25\epsilon})  < D/2 =  d_{\infty}(x(t_{i}), y(t_{i})) /2,
\end{align*}
where the final inequality holds for sufficiently large $n$ when $D = \omega(n^{-1+7.5\gamma+0.5\epsilon})$.

We also recall that by Lemma~\ref{lem:infdist} (iv), we can bound $d_{U}(x(t_{i+1}), y(t_{i+1}))$ similarly to $d_{\infty}(x(t_{i+1}), y(t_{i+1}))$, obtaining $d_{U}(x(t_{i+1}), y(t_{i+1})) < D/2$ under the same assumptions.

Thus, in every iteration of the coupling phase, starting from a pair of points $(x(t_i), y(t_i))$ such that $D = \omega(n^{-1+7.5\gamma+0.5\epsilon})$, with probability $1- \Otilde(n^{-\eps})$ we reach in $O(n^{2 + 7 \gamma + \eps} D^2)$ steps a new pair of points $(x(t_{i+1}), y(t_{i+1}))$ such that $d_{\infty}(x(t_{i+1}), y(t_{i+1}))<D/2$ and $d_{U}(x(t_{i+1}), y(t_{i+1}))<D/2$. The next iteration then starts. The phase is completed after less than $s$ iterations, where $s = \Oh(\log n)$, with success probability $1- \Otilde(n^{-\eps})$. For the chosen values of constants $\gamma$ and $\eps$, at the end of the phase, we obtain that $D < n^{-1+\rho}$, where $\rho = 0.07$.

\item \emph{Phase 3.} We start this phase from a pair of points $(x(t_{s}), y(t_{s}))$, such that $d_\infty(x(t_{s}), y(t_{s})) < n^{-1+\rho}$. Our goal is to reach in a small number of steps a pair of points $(x(t'_{s}), y(t'_{s}))$ such that one of the species has populations of exactly the same size in $x(t'_{s})$ and $y(t'_{s}))$, without increasing $d_\infty(x(t'_{s}), y(t'_{s}))$ too much. Let $a$ be the species such that the value of $x_{a+1}(t_{s}) - x_{a+2}(t_{s})$ is maximized over $a \in\{1,2,3\}$ ($a$ represents the species which is most likely to grow quickly in the next few steps). We then continue to evolve that of the walks $x(t_{s}), y(t_{s})$, which has a smaller size of population $a$, while the other walk remains paused throughout this phase. The phase ends at a time $t'_{s}$, such that $x_a(t'_s) = y_a(t'_s)$. We restrict ourselves to a consideration of the case when $y_a(t_s) < x_a(t_s)$ (walk $y$ progresses, while $x$ is paused). Then, at each of the considered steps $t_s + t$, for $0 \leq t < n^{0.5}$, we have:
    \begin{align*}
    &\E(y_a(t_s+t+1) - y_a(t_s+t)) = \frac{y_a(t_s+t)}{n}(y_{a+1}(t_s+t) - y_{a+2}(t_s+t))\geq \\
    &\geq \frac{y_a(t_s)}{n}(y_{a+1}(t_s) - y_{a+2}(t_s))-\frac{3t}{n^2} \geq n^{-1-\gamma}(x_{a+1}(t_s) - x_{a+2}(t_s) - 2n^{-1+\rho})-\frac{3t}{n^2} =\\ &= \Omega(n^{-1-\gamma}).
    \end{align*}
    The change of $y_a$ in each step of the walk is bounded by $1/n$. Applying Azuma's inequality, we obtain that w.v.h.p. the deviation of $y_a$ from its expectation is bounded by $\Otilde(\sqrt t/n)$. Thus, w.v.h.p., for $t = c n^{\gamma+\rho}$ for a sufficiently large constant c, we have:
    $$
    y_a(t_s + t) = y_a(t_s) + \Omega(n^{-1-\gamma} c n^{\gamma+\rho} \geq y_a(t_s) + d_\infty(x(t_{s}), y(t_{s})) > x_a(t_s).
    $$
    Thus, the phase is completed within $O(n^{\gamma+\rho})$ steps w.v.h.p. At the end of the phase, we have $x_a(t'_s) = y_a(t'_s)$, and $d_\infty(x(t'_{s}), y(t'_{s})) = O(n^{-1 + \gamma+\rho})$.

\item \emph{Phase 4.}
In this penultimate phase, we allow both walks to progress simultaneously in each time step starting from step $t'_s$ until they coalesce, under the following coupling. At time step $t'_s$, we arrange the elements of the populations of $x(t)$ and $y(t)$ in linear order, from $1$ to $n$, putting those from population $1$ leftmost and those from population $3$ rightmost. Every step of a coupling corresponds to drawing a uniformly random pair $(i,j) \in \{1,\ldots,n\}^2$ from our random number generator, which corresponds to an interaction in which the $i$-th and $j$-th agents in $x(t)$ and $y(t)$ interact with each other (the roles of $i$ and $j$ as predator and prey will be defined later). We label the species of the agents as follows $s : \{1,\ldots,n\} \to \{-1, 1, 2, 3\}$, where $s(i) = 1, 2,$ or $3$ if both of the agents located at the $i$-th position (counting from the left) in $x(t'_s)$ and $y(t'_s)$ belong to the same respective species, and $s(i) = -1$ otherwise. We remark that by the condition $d_\infty(x(t_{s}), y(t_{s})) = n^{-1 + \gamma+\rho}$, we have that $|\{i : s(i) = -1\}| = \Oh (n^{-1 + \gamma+\rho})$. Note that the labels $s$ are defined at time $t'_s$ and never change afterwards.

The coupling process will be required to finish within $T= n^{0.5-\eps}$ time steps of Phase 4. We will only consider the success of the coupling under the event that the generator never makes any agent participate in more than one interaction throughout Phase 4. (By a birthday paradox computation, this event holds with probability $1 - \Otilde(n^{-\eps})$.) We will also condition the success of our coupling on the event that no agent $i$ with $s(i)=-1$ is chosen to interact throughout the phase (this event holds with probability $1 - \Oh (T n^{-1 + \gamma+\rho}) > 1 - \Oh (n^{-\eps})$). Under these conditions, we define the interactions for a drawn random pair $(i,j)$ in step $t$ of Phase 4 as follows (independently of the step number, until the termination condition $x(t) = y(t)$ is reached):
\begin{itemize}
\item If $\{s(i), s(j)\} = \{a+1,a+2\}$, then we perform the interaction $i\to j$ with probability $0.5$ and the interaction $j\to i$ with probability $0.5$, chosen independently for processes $x(t)$ and $y(t)$. We recall that species $a$ was defined in Phase 3 so that $x_a(t'_s) = y_a(t'_s)$.
\item If $\{s(i), s(j)\} \neq \{a+1,a+2\}$, then we perform the interaction $i\to j$ for both processes.
\end{itemize}
Notice that since we take into account only the interaction of agents having a positive value of $s(i)$, and we never allow any agent to interact more than once, all interactions performed by agents from population $a$ will have the same outcome for both processes $x$ and $y$. Thus, we have $x_a(t) = y_a(t)$ throughout Phase 4, and it suffices to show that a similar equality will hold for the other populations. Equivalently, we will require that $Z(t) = X(t) - Y(t) = 0$, where the processes $X$ and $Y$ are defined as $X = x_{a+1}(t) - x_{a+2}(t)$ and $Y = y_{a+1}(t) - y_{a+2}(t)$. Let $p = \frac{|\{i : s(i) = a+1\}|\cdot |\{i : s(i) = a+2\}|}{n^2} = \Omega(n^{-2\gamma})$. Observe that $X$ and $Y$ are walk processes along the real line, such that in each step we have $\Pr [X(t+1) - X(t) = 1] = p$ and $\Pr [X(t+1) - X(t) = 0] = 1-p$. Likewise, we have independently $\Pr [Y(t+1) - Y(t) = 1] = p$ and $\Pr [Y(t+1) - Y(t) = 0] = 1-p$. (Note that probability $p$ does not change over time.) Thus, we have that process $Z$ follows a lazy random walk on the integer axis, with $\Pr [Z(t+1) - Z(t) = 1] = \Pr [Z(t+1) - Z(t) = -1] = 2p(1-p)$ and $\Pr [Z(t+1) - Z(t) = 0] = 1 - 2p(1-p)$. Since $Z(t'_s) = \Oh (n d_\infty(x(t'_{s}), y(t'_{s})) = n^{\gamma+\rho})$, by the standard properties of hitting times of the random walk, we have that $Z(t'_s + t) = 0$ is achieved within time $t = \Otilde (\frac{1}{2p(1-p)} (n^{\gamma+\rho})^2) = \Otilde (n^{4\gamma + 2\rho})$, w.v.h.p. Noting that $t< T = n^{0.5-\eps}$, we have that Phase 4 is successfully completed. Overall, at some time $t_f = t'_s + t$, we have coalesced to a point $x(t_f) = y(t_f)$ with probability at least $1 - \Otilde(n^{-\eps})$.

\item \emph{Phase 5.} After the two processes have coalesced at the end of Phase 4, we allow them to progress under a complete coupling until they reach the same absorbing state.
\end{itemize}%
}%
\end{proof}

\InJournal{
Lemma~\ref{lem:coupling} almost completes the proof of Theorem~\ref{thm:rps}. We only need to note that the assumption on potential $U(x(0))$ made in Lemma~\ref{lem:coupling} can be applied to any point $x$ meeting the assumptions of Theorem~\ref{thm:rps}. The claim of the Theorem clearly holds for all points $x$ such that $-12 \geq U(x) \geq -\gamma \ln n$. Note, however, that the given upper bound on $U(x)$ can be dropped: for any point $x$ having potential too large, we can evolve it arbitrarily until a point with potential at most $-12$ is reached, and then apply the claim of the Theorem. Finally, we note that the required lower bound on $U(x)$ follows from the assumptions made in Theorem~\ref{thm:rps}: if $x_a > n^{-0.002}$ for all $a\in\{1,2,3\}$, then $U(x) \geq 2 \ln (n^{-0.002}) + \ln \frac{1}{3} > n^{-0.005}$ for sufficiently large $n$.
}
\InConference{
The proof of Theorem~\ref{thm:rps} is completed once we observe that the assumptions of Lemma~\ref{lem:coupling} are satisfied for any point $x$ satisfying the assumptions of Theorem~\ref{thm:rps}, either immediately (at time 0), or after a certain number of steps, once potential $U$ has been sufficiently reduced.
}
\InJournal{

We close the paper by remarking about the rate of convergence of RPS to an absorbing state. In the RPS protocol, as soon as one of the populations becomes empty, $x_a=0$, it is immediate to show that species $a+1$, without having any natural predators, will eliminate all representatives of species $a+1$, and the system will stabilize to the absorbing state: $x_{a+1} =1$, $x_a = x_{a+2} = 0$ within $\Oh (n \log n)$ steps. To bound the time until the first population is eliminated, we can perform a special case of the analysis from the proof of Theorem~\ref{thm:lvtime}, obtaining an overall bound of $\Otilde (n^2)$ on convergence time. This corresponds to the quantitative results obtained in the stochastic noise model presented in~\cite{DF12,RMF06}.
}

\section*{Acknowledgment}

The authors would like to thank Benjamin Doerr for helpful discussions. This research was partially funded by the EU IP FET Proactive project MULTIPLEX, by ANR project DISPLEXITY, and by NCN grant DEC-2011/02/A/ST6/00201.


\begin{thebibliography}{19}

\bibitem{AD15}
M.A.~Abdullah and M.~Draief.
Global majority consensus by local majority polling on graphs of a given degree sequence.
Discrete Applied Mathematics 180:1--10 (2015).

\bibitem{AADFP06}
D.~Angluin, J.~Aspnes, Z.~Diamadi, M.J.~Fischer, and R.Peralta.
Computation in networks of passively mobile finite-state sensors.
Distributed Computing 18(4):235--253 (2006).

\bibitem{AAE08}
D.~Angluin, J.~Aspnes, and D.~Eisenstat.
A simple population protocol for fast robust approximate majority.
Distributed Computing 21(2):87--102 (2008).

\bibitem{AAER07}
D.~Angluin, J.~Aspnes, D.~Eisenstat, and E.~Ruppert:
The computational power of population protocols.
Distributed Computing 20(4):279--304 (2007).

\bibitem{AR09}
J.~Aspnes and E.~Ruppert.
An introduction to population protocols.
In: Middleware for Network Eccentric and Mobile Applications, pp.~97--120. Springer Verlag (2009).

\bibitem{BCNPST14}
L.~Becchetti, A.E.F.~Clementi, E.~Natale, F.~Pasquale, R.~Silvestri, and L.~Trevisan.
Simple Dynamics for Majority Consensus.
In: Proc.~SPAA, pp.~247--256 (2014).

\bibitem{CS08}
I.~Chatzigiannakis and P.G.~Spirakis.
The Dynamics of Probabilistic Population Protocols.
In: Proc.~DISC, pp.~498--499 (2008).

\bibitem{CER14}
C.~Cooper, R.~Els\"asser, and T.~Radzik.
The Power of Two Choices in Distributed Voting.
In: Proc.~ICALP, pp.~435-446 (2014).

\bibitem{CG14}
J.~Cruise and A.~Ganesh.
Probabilistic consensus via polling and majority rules.
Queueing Systems: Theory and Applications, 78(2):99--120 (2014).

\bibitem{DG74}
M.H.~DeGroot.
Reaching a Consensus.
Journal of the American Statistical Association, 69, 118--121 (1974).

\bibitem{DF01}
Z.~Diamadi and M.J.~Fischer.
A simple game for the study of trust in distributed systems.
Wuhan University Journal of Natural Sciences 6(1--2):72--82 (2001).

\bibitem{DF12}
A.~Dobrinevski and E.~Frey.
Extinction in neutrally stable stochastic Lotka-Volterra models.
Phys. Rev. E 85:051903 (2012).

\bibitem{HP01}
Y.~Hassin and D.~Peleg.
Distributed probabilistic polling and applications to proportionate agreement.
Information \& Computation 171(2):248--268 (2001).

\bibitem{HS98}
J.~Hofbauer and K.~Sigmund.
Evolutionary Games and Population Dynamics.
Cambridge University Press (1998).

\bibitem{KRFB02}
B~.~Kerr, M.A.~Riley, M.W.~Feldman, and B.J.M.~Bohannan.
Local dispersal promotes biodiversity in a real-life game of rock-paper-scissors.
Nature 418(6894):171--174 (2002).

\bibitem{KR04}
B.C.~Kirkup and M.A.~Riley.
Antibiotic-mediated antagonism leads to a bacterial game of rock-paper-scissors in vivo.
Nature 428(6981):412--414 (2004).

\bibitem{K12}
X.~Koegler.
Population protocols, games, and large populations.
PhD thesis, Universit\'e Paris Diderot (2012).

\bibitem{LPW06}
D.A.~Levin, Y.~Peres, and E.L.~Wilmer.
Markov chains and mixing times.
American Mathematical Society (2006).

\bibitem{L1910}
A.J.~Lotka.
Contribution to the Theory of Periodic Reactions.
J. Phys. Chem. 14(3):271--274 (1910).

\bibitem{MNRS14}
G.B.~Mertzios, S.E.~Nikoletseas, C.~Raptopoulos, and P.G.~Spirakis.
Determining Majority in Networks with Local Interactions and Very Small Local Memory.
In: Proc.~ICALP, pp.~871--882 (2014).

\bibitem{MCS11}
O.~Michail, I.~Chatzigiannakis, and P.G.~Spirakis.
New Models for Population Protocols.
Morgan \& Claypool Synthesis Lectures on Distributed Computing Theory (2011).

\bibitem{PK09}
M.~Parker and A.~Kamenev.
Extinction in the Lotka-Volterra model.
Phys. Rev. E 80:021129 (2009).

\bibitem{RMF06}
T.~Reichenbach, M.~Mobilia M, and E.~Frey.
Coexistence versus extinction in the stochastic cyclic Lotka–Volterra model.
Phys. Rev. E 74:051907 (2006).

\bibitem{SMJSRP14}
A.~Szolnoki, M.~Mobilia, L-L.~Jiang, B.~Szczesny, A.M.~Rucklidge, and M.~Perc.
Cyclic dominance in evolutionary games: a review.
J. R. Soc. Interface 11:20140735 (2014).

\end{thebibliography}
\end{document}